\newcommand{\E}{\mathbb{E}}
\newcommand{\Pm}{\mathcal{P}}
\newcommand{\Q}{\mathcal{Q}}
\newcommand{\X}{\mathcal{X}}
\newcommand{\Y}{\mathcal{Y}}
\newcommand{\F}{\mathcal{F}}
\newcommand{\M}{\mathcal{M}}
\newcommand{\W}{\mathcal{W}}
\newcommand{\Z}{\mathcal{Z}}
\DeclareMathOperator*{\esssup}{ess\,sup}
\DeclareMathOperator*{\essinf}{ess\,inf}
\DeclareMathOperator{\sign}{sign}
\newcommand{\ml}[2]{\mathcal{L}\left(#1  \!\!  \to  \!\!   #2\right)} 
\theoremstyle{plain}
\newtheorem{theorem}{Theorem}
\newtheorem{lemma}{Lemma}
\newtheorem{corollary}{Corollary}
\newtheorem{proposition}{Proposition}
\newtheorem*{lemma*}{Lemma}
\theoremstyle{definition}
\newtheorem{definition}{Definition}
\theoremstyle{remark}
\newtheorem{remark}{Remark}
\newtheorem{example}{Example}
\newcounter{labelcnt}
\renewcommand{\thelabelcnt}{(\alph{labelcnt})}
\newcommand{\setlabel}[1]{%
	\refstepcounter{labelcnt}\ltx@label{lbl:#1}%
	{\text{\upshape\thelabelcnt}}%
}
\title{Lower Bounds on the Bayesian Risk \\via Information Measures}
\author{Amedeo Roberto Esposito\thanks{Institute of Science and Technology Austria, amedeoroberto.esposito@ist.ac.at}, Adrien Vandenbroucque\thanks{ Ecole Polytechnique F\'ed\'erale de Lausanne, adrien.vandenbroucque@alumni.epfl.ch}, Michael Gastpar\thanks{ Ecole Polytechnique F\'ed\'erale de Lausanne, michael.gastpar@epfl.ch }}
\begin{document}
\maketitle

\begin{abstract}
This paper focuses on  parameter estimation and introduces a new method for lower bounding the Bayesian risk. The method allows for the use of virtually \emph{any} information measure, including R\'enyi's $\alpha$, $\varphi$-Divergences, and Sibson's $\alpha$-Mutual Information. The approach considers divergences as functionals of measures and exploits the duality between spaces of measures and spaces of functions. In particular, we show that one can lower bound the risk with any information measure by upper bounding its dual via Markov's inequality.
        We are thus able to provide estimator-independent impossibility results thanks to the Data-Processing Inequalities that divergences satisfy.
        The results are then applied to settings of interest involving both discrete and continuous parameters, including the ``Hide-and-Seek'' problem, and compared to the state-of-the-art techniques. An important observation is that the behaviour of the lower bound in the number of samples is influenced by the choice of the information measure. We leverage this by introducing a new divergence inspired by the ``Hockey-Stick'' Divergence, which is demonstrated empirically to provide the largest lower-bound across all considered settings. If the observations are subject to privatisation, stronger impossibility results can be obtained via Strong Data-Processing Inequalities. The paper also discusses some generalisations and alternative directions.
\end{abstract}

\tableofcontents

	\section{Introduction}
	In this work,\footnote{. This article was presented in part
at the 2021 and 2022 IEEE International Symposia on Information Theory} we consider the problem of parameter estimation in a Bayesian setting. More precisely, we propose an approach to \emph{lower-bounding} the Bayesian risk leveraging most information measures present in the literature. We look at the problem through an information-theoretic lens, similarly to~\citep{bayesRiskRaginsky}. We thus treat the parameter to be estimated as a message sent through a channel. This allows us to include frameworks where, in a distributed fashion, $m$ processors observe noisy samples of this parameter. The processors will then send a version of their observations to a central node. The central node will then proceed to estimate the parameter. We thus shift the focus from the estimation problem to the computation of two main quantities (which we render as independent of the estimator as possible): 
	\begin{enumerate}
		\item an information measure (\textit{e.g.}, Sibsons's $\alpha$-Mutual Information, $\varphi$-Mutual Information, etc.);
		\item a functional of the probability of some event under independence (\textit{e.g.}, a small-ball probability~\citep{bayesRiskRaginsky});\label{it:smallBall}
	\end{enumerate}
    The main tools utilised rely on Legendre-Fenchel duality and they allow us to introduce bounds involving R\'enyi's, $\varphi$-Divergences and Sibson's $\alpha$-Mutual Information. 
	An advantage of using this type of bounds is that one can render the functional in~\Cref{it:smallBall} (\textit{e.g.}, the small-ball probability) independent of the specific estimator. Similarly, the information measure can also be rendered independent of the estimator via Data-Processing Inequalities. Therefore, these lower bounds can be applied to any standard estimation framework regardless of the specific choice of the estimator. More details on the formal framework that we adopted can be found in~\Cref{sec:bayesianFramework}.
	
	It is important to notice that, although the problem can be interpreted as a transmission problem, a fundamental difference is that the size of the quantised messages may not grow with the number of samples. This might render the reconstruction of the samples impossible but the estimation of the parameter may remain feasible~\citep{bayesRiskRaginsky}. Our main focus will not be on asymptotic results but rather on finite number of samples lower bounds. 
	\subsection{Overview of the document}
    Following the Introduction, the paper will be broken in four main sections:
    \begin{itemize}
        \item  \Cref{sec:preliminaries}:  Preliminaries, in which we will define the information measures of interest as well as describe the theoretical framework leveraged to provide the bounds;
        \item \Cref{sec:riskLBThroughProbs}: Main Bounds, in which (making use of the famework described in~\Cref{sec:preliminaries}) we propose a variety of lower bounds on the Bayesian Risk involving a variety of information measures, in particular:
        \begin{itemize}
            \item Sibson's $\alpha$-Mutual Information and Maximal Leakage (\Cref{thm:sibsMIResultBayesRisk});
            \item $\varphi$-Mutual Information (\Cref{thm:fDivBoundBayesRisk}). In particular:
            \begin{itemize}
                \item Hellinger $p$-Divergence (\Cref{thm:lowerBoundHellinger});
                \item R\'enyi's $\alpha$-Divergence (\Cref{rmk:lowerBoundRenyiDivergence});
                \item a generalisation of the ``Hockey-Stick'' Divergence,  $E_\gamma$ (\Cref{thm:lowerBoundGeneralHockeyStick}); 
            \end{itemize}
        \end{itemize}
        \item \Cref{sec:bayesRiskExample}: Examples, in which we apply the bounds proposed in~\Cref{sec:riskLBThroughProbs} to a variety of classical and less classical settings:
        \begin{itemize}
            \item estimation of the bias of a Bernoulli random variable (see~\Cref{sec:ex1});
            \item estimation of the bias of a Bernoulli random variable after injection of additional noise (\textit{e.g.}, observing privatised samples, see~\Cref{sec:noisy_bernoulli_bias});
            \item estimation of the mean of a Gaussian random variable (with Gaussian prior, see~\Cref{sec:gaussianExample});
            \item lower-bound on the minimax risk for the ``Hide-and-Seek'' problem~\cite{nipsHideAndSeek} (see~\Cref{sec:hideAndSeek}).
        \end{itemize}
        For each of the problems we derive bounds involving a variety of information measures and we compare said bounds among themselves and with respect to relevant bounds in the literature as well;
        \item throughout the document we also consider further generalisations, in which we propose a variety of ways of extending/tightening/altering the results we proposed in~\Cref{sec:riskLBThroughProbs}. In particular, one can provide new bounds:
        \begin{itemize}
            \item conditioning on an additional (and cleverly constructed) random variable (see~\Cref{sec:conditioning});
            \item leveraging the asymmetry of some information measures (see~\Cref{sec:invertingBayesRisk});
            \item ignoring the probabilities and lower-bounding the risk directly (see~\Cref{sec:directLowerBoundRisk}).
        \end{itemize}
    \end{itemize}
	\subsection{Related Work}
	The problem of parameter estimation has been extensively studied over the years, with many contributions coming from a variety of fields. Relevant literature, mostly leveraging the Van Trees Inequality (and the quadratic risk) can be found in ~\cite{estimationVanTrees, estimationVanTrees1, estimationVanTrees2,estimationVanTrees3, estimationVanTrees4}.  
	Moreover, a survey of early work in this area (mainly focusing on asymptotic settings) can be found in~\cite{han}. More recent but important advances are instead due to~\cite{duchiEstimationIT, duchiFano2013, nipsHideAndSeek}.
    Closely connected to this work is \cite{bayesRiskRaginsky}. The approach is quite similar, with the main difference that we employ a family of bounds involving a variety of divergences while~\cite{bayesRiskRaginsky} relies solely on Mutual Information and the Kullback-Leibler Divergence. Related is also~\cite{CalmonSDPIHockeyStick}, where the authors use the so-called $E_\gamma$-Divergence to provide a lower-bound on the Bayesian Risk. A similar approach was also undertaken in~\cite{bayesRiskFInformativity}. The authors focused on the notion of $\varphi$-informativity (cf.~\cite{fInformativity}) and leveraged the Data-Processing inequality similarly to~\cite[Theorem 3]{fullVersionGeneralization}. A more thorough discussion of the differences between this work and~\cite{bayesRiskFInformativity} can be found in~\Cref{app:comparisonWithChen}. 
	\subsection{Problem Setting} \label{sec:bayesianFramework}
	Let $\mathcal{W}$ denote the parameter space and assume that we have access to a prior distribution over $\mathcal{W}$, $\mathcal{P}_W$. Suppose that we observe $W\sim\Pm_W$ through the family of distributions $\mathcal{P}= \{ \mathcal{P}_{X|W=w}: w\in\mathcal{W} \}.$ Given a function $\phi:\mathcal{X}\to\hat{\mathcal{W}}$, one can then estimate $W$ from $X\sim \mathcal{P}_{X|W}$ via $\phi(X)=\hat{W}$. Let us denote with $\ell:\mathcal{W}\times\hat{\mathcal{W}}\to \mathbb{R}^+$ a loss function, the Bayesian risk is defined as:
	\begin{equation}
		R_B= \inf_\phi\,\Pm_{W\hat{W}}(\ell(W,\phi(X))) = \inf_\phi\,\Pm_{W\hat{W}}(\ell(W,\hat{W})).\label{risk}
	\end{equation}
	Our purpose is to lower-bound $R_B$ using a variety of information measures. With this drive and leveraging various tools that stem from Legendre-Fenchel duality, one can connect the expected value of $\ell$ under the joint  $\Pm_{W\hat{W}}$ to 
	\begin{itemize}
		\item the  expected value of the same function under the product of the marginals ($\Pm_W\Pm_{\hat{W}}$) or a ``small-ball probability'' ;
		\item an information measure (quantifying how ``far'' the joint is from the product of the marginals).
	\end{itemize}
	This will allow us to render the lower-bound \textit{as independent as possible} from the  specific choice of the estimator $\phi$. 
	More precisely, our desideratum will be a lower-bound of the following form:
	\begin{equation}
		R_B\geq \upvarpi\left(\frac{d\Pm_{W\hat{W}}}{d\Pm_W\Pm_{\hat{W}}}\right)\upvartheta(\Pm_W\Pm_{\hat{W}},\ell), \label{eq:desideratumBayesRisk}
	\end{equation}
	with, once again, the purpose of then rendering the right-hand side of~\Cref{eq:desideratumBayesRisk} as independent as possible of the estimator $\phi$. 
	Let us denote with $L_W(\hat{W},\rho)=\Pm_W \Pm_{\hat{W}}(\ell(W,\hat{W})< \rho)$, a functional $\upvartheta$ of particular interest to us is the one that leads to (a function of) the so-called small-ball probability  \begin{equation}L_W(\rho)= \sup_{\hat{w}\in\mathcal{\hat{W}}} L_W(\hat{w},\rho) =  \sup_{\hat{w}\in\mathcal{\hat{W}}}\Pm_W(\ell(W,\hat{w})< \rho).\label{smallBall}\end{equation}
	More generally, the choice of $\upvartheta$ will depend on the choice of $\upvarpi$ and vice versa. In the following sections, we will explore different choices of said functionals that lead to interesting results in the field.
	\section{Preliminaries}\label{sec:preliminaries}
	In this section, we will define the main objects utilised throughout the document and define the relevant notation.
	We will mainly adopt a measure-theoretic framework.
	Given a measurable space $(\X,\mathcal{F})$ and two measures $\mu,\nu$ which render it a measure space, if $\nu$ is absolutely continuous with respect to $\mu$ (denoted with $\nu\ll\mu$) then we will represent with $\frac{d\nu}{d\mu}$ the Radon-Nikodym derivative of $\nu$ with respect to $\mu$. Given a (measurable) function $f:\X \to \mathbb{R}$ and a measure $\mu$ we denote with $\mu(f) = \langle f, \mu \rangle =  \int f d\mu$ \textit{i.e.}, the 
Lebesgue integral of $f$ with respect to the measure $\mu$ represents a bilinear inner-product which will characterise a pairing between a (properly defined) space of functions and a (properly defined) space of measures. Once such a pairing is set, one can then proceed onto defining the Legendre-Fenchel transform connecting functionals acting on measures to functionals acting on functions. More formally, let $C_b(\X)$ denote the space of continuous and bounded functions defined on $\X$ and $\mathcal{M}(\X)$ the set of Radon measures defined on the same space, then one has that $\mathcal{M}(\X)$ and $C_b(\X)$ are in separating duality through the bilinear mapping $\langle \cdot , \cdot \rangle$ defined above (see~\cite{largeDeviationConvexAnalysis}). Thus, given a functional $\psi : C_b(\X)\to\mathbb{R}$ one can define its Legendre-Fenchel dual as follows:
	\begin{equation}
		\psi^\star(\mu) = \sup_{f\in C_b(\X)} \langle f,\mu \rangle - \psi(f). \label{eq:legendreFenchelGeneral}
	\end{equation}
Another connection between the spaces of interest comes from considering a norm on a space and the corresponding dual norm on the dual space, \textit{i.e.,} given a norm acting on $\X$, $\left\lVert  \cdot \right\rVert$ and a pairing between two spaces $(\X,\Y)$, one can construct a norm on $\Y$ as follows:
	\begin{equation}
		\left\lVert h \right\rVert_\star = \sup_{f: \left\lVert  f \right\rVert \leq 1 } |\langle h,f \rangle |. \label{eq:dualNorm} \end{equation}
	For  the purpose of  this paper, we will essentially interpret the expected value $\Pm_{W\hat{W}}(
	\ell)$ as  $\langle \Pm_{W\hat{W}}, \ell\rangle$. Once this simple observation is made, these tools  will allow us to connect functionals of measure (\textit{e.g.}, information measures) to functionals of the loss (\textit{e.g.}, small-ball probabilities).
	\subsection{Information Measures}\label{infoMeas}
	We will now continue introducing the information measures that we need in order to provide the main results.
	\subsubsection{R\'enyi's $\alpha$-Divergence}\label{sec:renyisDiv}
	Introduced by R\'enyi as a generalization of KL-divergence, $\alpha$-divergence has found many applications ranging from hypothesis testing to guessing and several other statistical inference problems~\cite{RenyiKLDiv}. Indeed, it has several useful operational interpretations (e.g., hypothesis testing, and the cut-off rate in block coding \cite{RenyiKLDiv,opMeanRDiv1}).
	It can be defined as follows~\cite{RenyiKLDiv}.
	\begin{definition}
		Let $(\Omega,\F,\Pm),(\Omega,\F,\Q)$ be two probability spaces. Let $\alpha>0$ be a positive real number different from $1$. Consider a measure $\mu$ such that $\Pm\ll\mu$ and $\Q\ll\mu$ (such a measure always exists, e.g. $\mu=(\Pm+\Q)/2$)) and denote with $p,q$ the densities of $\Pm,\Q$ with respect to $\mu$. The $\alpha$-Divergence of $\Pm$ from $\Q$ is defined as follows:
		\begin{align}
			D_\alpha(\Pm\|\Q)=\frac{1}{\alpha-1} \log \int p^\alpha q^{1-\alpha} d\mu.
		\end{align}
	\end{definition}
	\begin{remark}
		The definition is independent of the chosen measure $\mu$. 
		It is indeed possible to show that
		$\int p^{\alpha}q^{1-\alpha} d\mu = \int \left(\frac{q}{p}\right)^{1-\alpha}d\Pm $, and that whenever $\Pm\ll\Q$ or $0<\alpha<1,$ we have $\int p^{\alpha}q^{1-\alpha} d\mu= \int \left(\frac{p}{q}\right)^{\alpha}d\Q$, see \cite{RenyiKLDiv}.
	\end{remark}
	
	It can be shown that if $\alpha>1$ and $\Pm\not\ll\Q$ then $D_\alpha(\Pm\|\Q)=\infty$. The behavior of the measure for $\alpha\in\{0,1,\infty\}$ can be defined by continuity. In general, one has that $D_1(\Pm\|\Q) = D(\Pm\|\Q)$, where $D(\Pm\|\Q)$ denotes the Kullback-Leibler divergence between $\Pm$ and $\Q$. However, if $D(\Pm\|\Q)=\infty$ or there exists $\beta$ such that $D_\beta(\Pm\|\Q)<\infty$ then $\lim_{\alpha\downarrow1}D_\alpha(\Pm\|Q)=D(\Pm\|\Q)$~\cite[Theorem 5]{RenyiKLDiv}. 
	For an extensive treatment of $\alpha$-divergences and their properties we refer the reader to~\cite{RenyiKLDiv}. 
	Starting from R\'enyi's Divergence and the geometric averaging that it involves, Sibson built the notion of Information Radius \cite{infoRadius}. A deconstructed and generalised version of the Information Radius leads us to the following definition of a generalisation of Shannon's Mutual Information~\cite{verduAlpha}:
	\begin{definition}\label{SibsonsAlpha}
		Let $X$ and $Y$ be two random variables jointly distributed according to $\Pm_{XY}$, and with marginal distributions $\Pm_X$ and $\Pm_Y$, respectively. For $\alpha>0$, the Sibson's mutual information of order $\alpha$ between $X$ and $Y$ is defined as:
		\begin{align}
			I_\alpha(X,Y) = \min_{\Q_Y} D_\alpha(\Pm_{XY}\|\Pm_X \Q_Y) \label{iAlphaDef}.
		\end{align}
	\end{definition}
	The following, alternative formulation is also useful \cite{verduAlpha}:
	\begin{align}
		I_\alpha(X,Y)&= \frac{\alpha}{\alpha-1}\log\mathbb{E}\left[\mathbb{E}^{\frac{1}{\alpha}}\left[\left(\frac{\Pm_{Y|X}}{\Pm_Y}\right)^\alpha \bigg|Y \right]\right] \label{altFormulSibs} \\
		&= D_\alpha(\Pm_{XY}\|\Pm_X \Pm_Y) - D_\alpha(\Pm_{Y_\alpha}\|\Pm_Y),
	\end{align}
	where $\Pm_{Y_\alpha}$ is the measure minimizing \eqref{iAlphaDef}. In analogy with the limiting behavior of $\alpha$-Divergence we have that $\lim_{\alpha\to 1}I_\alpha(X,Y)=I(X;Y)$, where $I(X;Y)$ represents the Mutual Information between $X$ and $Y$. When $\alpha\to\infty$ we retrieve the following object: $$I_\infty(X,Y)=\log\mathbb{E}_{\Pm_Y}\left[\sup_{x:\Pm_X(x)>0} \frac{\Pm_{XY}(x,Y)}{\Pm_X(x)\Pm_Y(Y)}\right].$$
	To conclude, let us list some of the properties of the measure:
	\begin{proposition}[\cite{verduAlpha}]
		\label{sibsProperties}
		\noindent
		\begin{enumerate}
			\item \textbf{Data-Processing Inequality}:
			given $\alpha>0$, 
			$I_{\alpha}(X,Z) \leq \min\{I_{\alpha}(X,Y),I_{\alpha}(Y,Z)\}$ if the Markov Chain $X-Y-Z$ holds;
			\item $I_\alpha(X,Y)\geq 0$ with equality iff $X$ and $Y$ are independent;
			\item Let $\alpha_1\leq \alpha_2$ then $I_{\alpha_1}(X,Y)\leq I_{\alpha_2}(X,Y)$;
			\item Let $\alpha\in (0,1)\cup(1,\infty)$, for a given $\Pm_X$, $\frac{1}{\alpha-1}\exp\left(\frac{\alpha-1}{\alpha} I_\alpha(X,Y)\right)$ is convex in $\Pm_{Y|X}$;
			\item $I_\alpha(X,Y) \leq \min\{\log |X |, \log |Y|\}$;
		\end{enumerate}
	\end{proposition}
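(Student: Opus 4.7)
The plan is to derive each of the five items from the analogous property of the R\'enyi $\alpha$-divergence, exploiting the variational characterisation $I_\alpha(X,Y) = \min_{\Q_Y} D_\alpha(\Pm_{XY} \| \Pm_X \Q_Y)$ from \Cref{SibsonsAlpha}, and falling back on the explicit formula \eqref{altFormulSibs} only when the min-form does not suffice. So the first sub-goal is to record (or cite from \cite{RenyiKLDiv}) three facts about $D_\alpha$: non-negativity with equality iff the two measures coincide, monotonicity in $\alpha$, and the data-processing inequality $D_\alpha(\mu K \| \nu K) \leq D_\alpha(\mu \| \nu)$ for any Markov kernel $K$.

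Given those, property~2 is immediate: the minimum of a non-negative quantity is non-negative, with equality precisely when one can take $\Q_Y = \Pm_Y$ and $\Pm_{XY} = \Pm_X \Pm_Y$. Property~3 follows by letting $\Q_Y^{(\alpha_2)}$ be the minimiser for the larger index; then
\[
I_{\alpha_1}(X,Y) \leq D_{\alpha_1}\bigl(\Pm_{XY} \,\big\|\, \Pm_X \Q_Y^{(\alpha_2)}\bigr) \leq D_{\alpha_2}\bigl(\Pm_{XY} \,\big\|\, \Pm_X \Q_Y^{(\alpha_2)}\bigr) = I_{\alpha_2}(X,Y),
\]
using only the variational form and the monotonicity of $D_\alpha$ in its order. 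For property~1, given the Markov chain $X-Y-Z$, I would apply the kernel $\Pm_{Z|Y}$ to both $\Pm_{XY}$ and $\Pm_X \Q_Y^\star$ (the minimiser for $I_\alpha(X,Y)$): by the Markov property the first push-forward is exactly $\Pm_{XZ}$, and the second has the form $\Pm_X \Q_Z^\star$ for some $\Q_Z^\star$. The DPI for $D_\alpha$ then yields $I_\alpha(X,Z) \leq D_\alpha(\Pm_{XZ}\|\Pm_X\Q_Z^\star) \leq D_\alpha(\Pm_{XY}\|\Pm_X\Q_Y^\star) = I_\alpha(X,Y)$; the bound $I_\alpha(X,Z)\leq I_\alpha(Y,Z)$ is obtained symmetrically by processing the pair $(Y,Z)$ through the (trivial) kernel that forgets $Y$ after observing $Z$, again invoking DPI for $D_\alpha$.

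For property~5, the cardinality bound on $\log|\mathcal Y|$ comes from choosing $\Q_Y$ to be the uniform distribution on $\mathcal Y$ inside the min, expanding $D_\alpha(\Pm_{XY}\|\Pm_X\mathrm{Unif}_{\mathcal Y})$ with \eqref{altFormulSibs}, and bounding the resulting expectation by $|\mathcal Y|^{\alpha-1}$ term by term; the bound $\log|\mathcal X|$ follows by the same argument after using that Sibson's MI can be rewritten symmetrically through \eqref{altFormulSibs} or, more directly, by invoking property~1 with $Z = Y$ and $Y$ replaced by $X$ to show $I_\alpha(X,Y) \leq I_\alpha(X,X) = \log|\mathcal X|$ when $X$ is discrete.

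The main obstacle I anticipate is property~4, the convexity of $\frac{1}{\alpha-1}\exp\!\bigl(\frac{\alpha-1}{\alpha} I_\alpha(X,Y)\bigr)$ in $\Pm_{Y|X}$ for fixed $\Pm_X$, because the min-form gives only joint concavity/convexity statements that are not tight enough. Here I would work from \eqref{altFormulSibs}, write the exponentiated quantity explicitly as $\E_Y\!\bigl[\E^{1/\alpha}_X[(\Pm_{Y|X}/\Pm_Y)^\alpha \mid Y]\bigr]$, and then treat the cases $\alpha>1$ and $\alpha\in(0,1)$ separately: for $\alpha>1$, the map $\Pm_{Y|X}\mapsto \E_X[\Pm_{Y|X}^\alpha]$ is convex, and composition with the concave power $t\mapsto t^{1/\alpha}$ together with the sign of $1/(\alpha-1)$ delivers the claim after a Minkowski-type manipulation; the case $\alpha\in(0,1)$ proceeds similarly with the signs reversed. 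This last step is really the only place where explicit calculation (rather than citation of $D_\alpha$-properties) is unavoidable.
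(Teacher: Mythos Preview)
The paper does not prove this proposition at all: it is stated with attribution to \cite{verduAlpha} and no argument is given in the text or the appendices. So there is no ``paper's own proof'' to compare against; you are effectively being asked to supply what the cited reference contains.

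Your outline is essentially correct, but a few points need tightening. For the second half of property~1, the phrase ``the (trivial) kernel that forgets $Y$ after observing $Z$'' is not right: forgetting $Y$ leaves you with a measure on $\mathcal Z$ alone, not on $\mathcal X\times\mathcal Z$. The correct kernel is $\Pm_{X|Y}$ acting on the $Y$-coordinate; under the Markov chain $X\!-\!Y\!-\!Z$ this pushes $\Pm_{YZ}\mapsto\Pm_{XZ}$ and $\Pm_Y\Q_Z^\star\mapsto\Pm_X\Q_Z^\star$, after which the $D_\alpha$-DPI gives $I_\alpha(X,Z)\le I_\alpha(Y,Z)$ exactly as in your first half. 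For property~5, the claim $I_\alpha(X,X)=\log|\mathcal X|$ is false in general (a direct computation from \eqref{altFormulSibs} gives $I_\alpha(X,X)=H_{1/\alpha}(X)$, the R\'enyi entropy of order $1/\alpha$); fortunately only the inequality $H_{1/\alpha}(X)\le\log|\mathcal X|$ is needed, so your DPI route still closes. For property~4, your verbal description (``composition with the concave power $t\mapsto t^{1/\alpha}$'') is misleading, since composing a convex map with a concave one is not generally convex. What actually works is to first simplify \eqref{altFormulSibs} to
\[
\exp\!\left(\tfrac{\alpha-1}{\alpha}I_\alpha(X,Y)\right)=\sum_{y}\Bigl(\sum_x \Pm_X(x)\,\Pm_{Y|X}(y\mid x)^\alpha\Bigr)^{1/\alpha},
\]
which is a sum over $y$ of $L^\alpha(\Pm_X)$-``norms'' of the columns of the channel. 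For $\alpha>1$ each summand is a genuine norm, hence convex; for $0<\alpha<1$ the reverse Minkowski inequality on non-negative functions makes each summand concave. Multiplying by $\frac{1}{\alpha-1}$ (positive in the first case, negative in the second) yields convexity in both regimes.
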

	For an extensive treatment of Sibson's $\alpha$-MI we refer the reader to \cite{verduAlpha}.
	\subsubsection{Maximal Leakage}\label{maximalLeakage}
	A particularly relevant dependence measure, strongly connected to Sibson's Mutual Information is the maximal leakage, denoted by $\ml{X}{Y}.$ It was introduced as a  way of measuring the leakage of information from $X$ to $Y$, hence the following definition:
	\begin{definition}[Def. 1 of \cite{leakageLong}]\label{leakage}
		Given a joint distribution $\Pm_{XY}$ on finite alphabets $\mathcal{X}$ and $\mathcal{Y}$, the maximal leakage from $X$ to $Y$ is defined as:
		\begin{equation}  \ml{X}{Y}= \sup_{\substack{U-X-Y-\hat{U}}} \log \frac{\mathbb{P}(\{U=\hat{U}\})}{\max_{u\in\mathcal{U}} \mathbb{P}_U(\{u\})},\end{equation}
		where $U$ and $\hat{U}$ take values in the same finite, but arbitrary, alphabet. 
	\end{definition} \noindent
	It is shown in~\cite[Theorem 1]{leakageLong} that, for finite alphabets: 
	\begin{align} \label{leakageFormula}
		\ml{X}{Y} = \log \sum_{y\in \mathcal{Y}} \max_{\substack{x\in\mathcal{X}}: \Pm_{X}(x)>0} \Pm_{Y|X}(y|x). \end{align}
	If $X$ and $Y$ have a jointly continuous pdf $f_{XY}(x,y)$, we get~\cite[Corollary 4]{leakageLong}:
	\begin{align}\ml{X}{Y} = \log \int_{\mathbb{R}} \sup_{x: f_X(x)>0} f_{Y|X}(y|x) dy. \label{leakageContinous} \end{align}
	One can show that $\ml{X}{Y}= I_\infty(X;Y)$ i.e., Maximal Leakage corresponds to the Sibson's Mutual Information of order infinity.  This allows the measure to retain the properties listed in Proposition \ref{sibsProperties}, furthermore: 
	\begin{lemma}[\cite{leakageLong}]
		\label{leakageProps}
		For any joint distribution $\Pm_{XY}$ on finite alphabets $\X$ and $\Y$, $\ml{X}{Y} \geq I(X; Y )$.
	\end{lemma}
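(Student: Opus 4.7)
The plan is to derive the inequality as a direct corollary of the monotonicity of Sibson's $\alpha$-mutual information in $\alpha$, combined with the identification $\ml{X}{Y} = I_\infty(X,Y)$ that the excerpt records just before the lemma statement.

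First I would invoke the limiting relation $\lim_{\alpha \to 1} I_\alpha(X,Y) = I(X;Y)$ stated after \Cref{SibsonsAlpha}, so that ordinary Shannon mutual information can be viewed as the $\alpha=1$ member of the Sibson family. Next, I would apply item~3 of \Cref{sibsProperties}, which says that $I_{\alpha_1}(X,Y) \leq I_{\alpha_2}(X,Y)$ whenever $\alpha_1 \leq \alpha_2$. Taking $\alpha_1 = 1$ and letting $\alpha_2 \to \infty$ gives $I(X;Y) \leq I_\infty(X,Y)$, and since $I_\infty(X,Y) = \ml{X}{Y}$ the result follows.

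Alternatively, to make the proof entirely self-contained and avoid depending on the limit, I would give a short direct argument from the closed-form expression \eqref{leakageFormula}. Writing
\begin{equation*}
I(X;Y) = \sum_{x,y} \Pm_{XY}(x,y) \log \frac{\Pm_{Y|X}(y|x)}{\Pm_Y(y)} \leq \sum_{x,y} \Pm_{XY}(x,y) \log \frac{\max_{x'} \Pm_{Y|X}(y|x')}{\Pm_Y(y)},
\end{equation*}
the integrand no longer depends on $x$, so summing over $x$ yields $\sum_y \Pm_Y(y) \log \bigl(\max_{x'} \Pm_{Y|X}(y|x')/\Pm_Y(y)\bigr)$. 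Then Jensen's inequality applied to the concave logarithm bounds this by $\log \sum_y \max_{x'} \Pm_{Y|X}(y|x') = \ml{X}{Y}$.

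There is essentially no main obstacle here: both routes are short, and the only small thing to check in the direct argument is that one handles $y$ with $\Pm_Y(y) = 0$ correctly (they contribute nothing to $I(X;Y)$ and only add non-negative terms inside the $\log \sum_y$ on the right, so the inequality is preserved). I would present the monotonicity argument as the primary proof and mention the Jensen derivation as a self-contained alternative.
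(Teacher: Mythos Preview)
Your proposal is correct. Both routes you outline are valid: the monotonicity argument via \Cref{sibsProperties} item~3 together with $I_1=I$ and $I_\infty=\ml{X}{Y}$ is clean and uses only facts already recorded in the paper, and the direct Jensen computation from~\eqref{leakageFormula} is a sound self-contained alternative (your handling of $\Pm_Y(y)=0$ is fine).

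As for comparison with the paper: there is nothing to compare against. The paper does not prove this lemma at all; it is simply quoted from~\cite{leakageLong} without argument. Your monotonicity proof is in fact the natural one given the surrounding material, since the paper has just recorded both the identity $\ml{X}{Y}=I_\infty(X,Y)$ and the monotonicity property in \Cref{sibsProperties}, so the lemma is an immediate consequence of statements already on the page.
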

	Another relevant notion is Conditional Maximal Leakage: 
	\begin{definition}[Conditional Maximal Leakage \cite{leakageLong}]
		Given a joint distribution $\mathcal{P}_{XYZ}$ on alphabets $\mathcal{X}, \mathcal{Y}, \text{ and } \mathcal{Z}$, define:
		\begin{equation} \ml{X}{Y|Z} =\sup_{\substack{U:U-X-Y|Z}} \log\frac{\mathbb{P}(\{U = \hat{U}(Y, Z)\})}{\mathbb{P}(\{U=\tilde{U}(Z)\})},\end{equation}
		where $U$ takes value in an arbitrary finite alphabet and we consider $\hat{U}, \tilde{U}$ to be the optimal estimators of $U$ given $(Y,Z)$ and $Z$, respectively.
	\end{definition}
	\noindent Again, it is shown in~\cite{leakageLong} that for discrete random variables $X,Y,Z$:
	\begin{equation} \ml{X}{Y|Z} = \log  \max_{\substack{z:\mathcal{P}_Z(z)>0}} \sum_{y} \max_{\substack{x: \mathcal{P}_{X|Z}(x|z)}>0} \mathcal{P}_{Y|XZ}(y|xz),\notag
	\end{equation}
	and \begin{equation}\ml{X}{(Y,Z)} \leq \ml{X}{Y} + \ml{X}{Z|Y}. \label{ineqLeak}
	\end{equation}
	
	\subsubsection{$\varphi$-Mutual Information}\label{fDivergence}
	Another generalization of the KL-Divergence can be obtained by considering a generic convex function $\varphi:\mathbb{R}^+\to \mathbb{R}$, usually with the simple constraint that $\varphi(1)=0$. The constraint can be ignored as long as $\varphi(1)<+\infty$ by simply considering a new mapping $\tilde{\varphi}(x) = \varphi(x) - \varphi(1)$. 
	\begin{definition}
		Let $(\Omega,\F,\Pm),(\Omega,\F,\Q)$ be two probability spaces. Let $\varphi:\mathbb{R}^+\to \mathbb{R}$ be a convex function. Consider a measure $\mu$ such that $\Pm\ll\mu$ and $\Q\ll\mu$. Denoting with $p,q$ the densities of the measures with respect to $\mu$, the $\varphi$-Divergence of $\Pm$ from $\Q$ is defined as follows:
		\begin{align}
			D_\varphi(\Pm\|\Q)=\int q \varphi\left(\frac{p}{q}\right) d\mu.
		\end{align}
	\end{definition}
	Despite the fact that the definition uses $\mu$ and the densities with respect to this measure, it is possible to show that $\varphi$-Divergences are actually independent from the dominating measure \cite{fDiv1}. Indeed, when absolute continuity between $\Pm,\Q$ holds, i.e. $\Pm\ll\Q$, an assumption we will often use, we retrieve the following \cite{fDiv1}:
	\begin{equation}
		D_\varphi(\Pm\|\Q)= \int \varphi\left(\frac{d\Pm}{d\Q}\right)d\Q.
	\end{equation}Denoting with $\F_X$ the $\sigma$-field generated from the random variable $X$, (i.e., $\sigma(X)$), $\varphi$-Mutual Information is defined as follows:
	\begin{definition}\label{def:fMI}
		Let $X$ and $Y$ be two random variables jointly distributed according to $\Pm_{XY}$ over the measurable space $(\X\times\Y, \F_{XY})$. 
		Let  $(\X,\F_{X},\Pm_{X}),(\Y,\F_{Y},\Pm_Y)$ be the corresponding probability spaces induced by the marginals.  Let $\varphi:\mathbb{R}^+\to \mathbb{R}$ be a convex function such that $\varphi(1)=0$. The $\varphi$-Mutual Information between $X$ and $Y$ is defined as:
		\begin{equation}
			I_\varphi(X,Y)=D_\varphi(\Pm_{XY}\|\Pm_X\Pm_Y).
		\end{equation}
		If $\Pm_{XY}\ll\Pm_X\Pm_Y$ we have that:
		\begin{equation}
			I_\varphi(X,Y) = \int \varphi\left(\frac{d\Pm_{XY}}{d\Pm_X\Pm_Y}\right)d\Pm_X\Pm_Y.
		\end{equation}
	\end{definition}
	It is possible to see that, if $\varphi$ satisfies $\varphi(1)=0$ and it is strictly convex at $1$, then  $I_\varphi(X,Y)=0$ if and only if $X$ and $Y$ are independent \cite{fDiv1}.
	This generalisation includes the Kullback-Leibler Divergence (by simply setting $\varphi(t)=t\log(t)$) and allows to retrieve $\alpha$-Divergences through a one-to-one mapping. Other meaningful examples are: the Total Variation distance ($\varphi(t)=\frac12|t-1|$), the Hellinger distance ($\varphi(t)=(\sqrt{t}-1)^2$), Pearson's $\chi^2$-divergence ($\varphi(t)=(t-1)^2$), etc.
	Exploiting a bound involving $I_\varphi(X,Y)$ for a broad enough set of functions $\varphi$ allows to differently measure the dependence between $X$ and $Y$.  This allows us to provide bounds that are tailored to the specific problem at hand and, as we will see, to provide bounds that are tighter with respect to the ones leveraging the well-known Mutual Information.
    \subsection{(Strong) Data-Processing Inequalities}\label{sec:SDPI}
     An important property that $\varphi$-Divergences share is the Data-Processing Inequality (DPI) \textit{i.e.}, given two measures $\mu,\nu$ and a Markov Kernel $K$, one has that for every convex $\varphi$
    \begin{equation}
        D_\varphi(\nu K\| \mu K) \leq D_\varphi(\nu\|\mu).\label{eq:DPI}
    \end{equation}
    This property holds as well for R\'enyi's $\alpha$-Divergences, despite them not being a $\varphi$-Divergence~\cite[Theorem 9]{RenyiKLDiv}.
    DPIs represent a powerful and widely used tool to derive bounds in the field, so much so that a large body of work has focused on tightening said inequalities. In particular, in many settings of interest, given a reference measure $\mu$ one can show that $D_\varphi(\nu K\| \mu K)$ is strictly smaller than $D_\varphi(\nu\|\mu)$  unless $\nu=\mu$ and the characterisation of the ratio between these two quantities for Markov Kernels of interest  gave rise to the concept and study of ``strong Data-Processing Inequalities''. Formally, 
\begin{definition}[{\cite[Definition 3.1]{sdpiRaginsky}}] Given a probability measure $\mu$, a Markov Kernel $K$ and  a convex function $\varphi,$ we say that $K$ satisfies a $\varphi$-type strong data-processing inequality (SDPI) at $\mu$ with constant $c\in[0,1)$ if, for all $\nu\ll\mu$ one has that
    \begin{equation}
    D_\varphi(\nu K\|\mu K) \leq c\cdot D_\varphi(\nu\|\mu).
    \end{equation}
    In order to characterise the tightest such constant $c$, let us define the following objects:
\begin{align*}
\eta_\varphi(\mu,K) &= \sup_{\nu\neq \mu} \frac{D_\varphi(\nu K\|\mu K) }{D_\varphi(\nu\|\mu)} \\
\eta_\varphi(K) &= \sup_{\mu} \eta_\varphi(\mu,K).
\end{align*}
\end{definition}
Said quantities are generally hard to compute for a given Markov Kernel $K$ and functional $\varphi$, however a variety of bounds is present in the literature (see~\cite{sdpiRaginsky}).
In particular, given any convex $\varphi$ it is possible to show the following result:
	\begin{lemma}[{\cite[Theorem 4.1]{cohenDobrushin}}]
		Let $K:\mathcal{F}\times\Omega\to[0,1]$ represent a Markov Kernel, and let $\varphi$ be a convex functional such that $\varphi(1)=0$, one has that
		\begin{equation}
			\eta_\varphi(K) \leq \vartheta(K),\label{eq:SDPIDobrushin}
		\end{equation}
		where $\vartheta(K)$ represents the Dobrushin contraction coefficient of $K$ \textit{i.e.},
		\begin{equation}
			\vartheta(K) = \max_{x,\hat{x}\in\Omega} TV( K(\cdot|x),K(\cdot|\hat{x})).
		\end{equation}
		
	\end{lemma}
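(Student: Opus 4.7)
My plan is to reduce the Dobrushin-type contraction for an arbitrary $\varphi$-divergence to the classical contraction for total variation, using an integral representation of $D_\varphi$ via hockey-stick divergences $E_\gamma(P,Q)=(P-\gamma Q)_+(\X)$. The first step is the TV bound itself. Given probabilities $\nu,\mu$, a Hahn decomposition of $\nu-\mu$ lets me write $\nu=(1-s)\lambda+s\tilde\nu$ and $\mu=(1-s)\lambda+s\tilde\mu$ with $s=TV(\nu,\mu)$ and $\tilde\nu,\tilde\mu$ mutually singular probability measures. Then $\nu K-\mu K = s(\tilde\nu K-\tilde\mu K)$, and since $\tilde\nu,\tilde\mu$ are probabilities,
\begin{equation*}
\tilde\nu K-\tilde\mu K \;=\; \int\!\!\int\bigl[K(\cdot|x)-K(\cdot|\hat x)\bigr]\,\tilde\nu(dx)\,\tilde\mu(d\hat x).
\end{equation*}
Moving the total variation inside the double integral and invoking the very definition of $\vartheta(K)$ yields $TV(\tilde\nu K,\tilde\mu K)\leq\vartheta(K)$, hence $TV(\nu K,\mu K)\leq\vartheta(K)\cdot TV(\nu,\mu)$.

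Next I would bootstrap to hockey-stick divergences with $\gamma\geq 1$. Setting $\sigma=\nu-\gamma\mu$ and taking its Jordan decomposition $\sigma=s_+\tilde\nu-s_-\tilde\mu$ (so $\tilde\nu,\tilde\mu$ are mutually singular probability measures and $s_+-s_-=1-\gamma\leq 0$), the identity
\begin{equation*}
\sigma K \;=\; s_+\bigl(\tilde\nu K-\tilde\mu K\bigr)\;+\;(1-\gamma)\tilde\mu K,
\end{equation*}
combined with $(f+g)_+\leq f_++g_+$ and positivity of $\tilde\mu K$, gives $(\sigma K)_+(\Y)\leq s_+\cdot TV(\tilde\nu K,\tilde\mu K)\leq\vartheta(K)\cdot s_+=\vartheta(K)\cdot E_\gamma(\nu,\mu)$ by the first step. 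The range $\gamma<1$ is handled symmetrically by swapping the roles of $\nu$ and $\mu$. Finally, I would invoke the classical hockey-stick integral representation of a convex $\varphi$ with $\varphi(1)=0$ (and, without loss of generality, $\varphi'(1)=0$, since affine perturbations leave $D_\varphi$ unchanged), of the form
\begin{equation*}
D_\varphi(P\|Q) \;=\; \int_{[1,\infty)} E_\gamma(P,Q)\,d\varphi'(\gamma) \;+\; \int_{[0,1]} E_\gamma(Q,P)\,d(-\varphi')(\gamma),
\end{equation*}
where $d\varphi'$ denotes the Lebesgue--Stieltjes measure induced by the non-decreasing right derivative of $\varphi$. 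Applying the per-$\gamma$ contraction pointwise and integrating against the non-negative measure $d\varphi'$ delivers $D_\varphi(\nu K\|\mu K)\leq\vartheta(K)\cdot D_\varphi(\nu\|\mu)$; taking the supremum over $\nu\neq\mu$ and over $\mu$ yields $\eta_\varphi(K)\leq\vartheta(K)$.

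The main obstacle is justifying the integral representation in full generality: for merely convex, possibly non-smooth $\varphi$, one has to interpret $\varphi''$ as a Radon measure via the one-sided derivatives, check that the decomposition still applies when $\varphi'(0^+)=-\infty$ or when relevant integrals are only $\sigma$-finite (as is typical, e.g., for $\varphi(t)=t\log t$), and handle possible boundary contributions near $0$ and $\infty$ carefully. Once the representation is in place, the contraction itself collapses to the two elementary Hahn/Jordan decomposition calculations above, so this is the only part requiring delicate measure-theoretic bookkeeping.
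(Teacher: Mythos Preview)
The paper does not contain a proof of this lemma: it is stated as a citation of \cite[Theorem 4.1]{cohenDobrushin} and used as a black box in \Cref{sec:SDPI}. There is therefore no in-paper proof to compare your proposal against.

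That said, your outline is a valid and well-known route to the result. The reduction to total variation via the Hahn decomposition is standard and correct as written. The bootstrap to hockey-stick divergences is also correct; note that once you have $E_\gamma(\nu K,\mu K)\le\vartheta(K)\,E_\gamma(\nu,\mu)$ for $\gamma\ge 1$, the case $\gamma<1$ follows immediately from the identity $E_\gamma(P,Q)=\gamma\,E_{1/\gamma}(Q,P)+(1-\gamma)$ rather than needing a separate argument. Your stated obstacle---the integral representation $D_\varphi(P\|Q)=\int E_\gamma\,d\varphi''(\gamma)$ for general convex $\varphi$---is genuine but standard: one writes $\varphi(t)-\varphi(1)-\varphi'_+(1)(t-1)=\int_0^\infty (t-\gamma)_+ - (1-\gamma)_+ - \mathbbm{1}_{\{\gamma<1\}}(t-1)\,d\Gamma(\gamma)$ with $\Gamma$ the second distributional derivative of $\varphi$ (a locally finite nonnegative measure on $(0,\infty)$ by convexity), and then integrates against $Q$. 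The boundary issues at $0$ and $\infty$ are handled by monotone convergence once you restrict to $\nu\ll\mu$ with finite $D_\varphi$. This is exactly the representation used in, e.g., \cite{fDivegerceInequalities} and \cite{SDPIfDiv}, so the bookkeeping is available in the literature if you want to cite rather than redo it.

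For comparison, the original argument in \cite{cohenDobrushin} is more direct: it uses the maximal-coupling characterisation of $\vartheta(K)$ to write $K(\cdot|x)$ and $K(\cdot|\hat x)$ as mixtures sharing a common component of mass $1-\vartheta(K)$, then applies joint convexity of $(p,q)\mapsto q\varphi(p/q)$ directly, avoiding the hockey-stick detour entirely. Your approach is slightly longer but has the advantage of yielding the per-$\gamma$ contraction as a byproduct, which is itself of interest (cf.\ \cite{CalmonSDPIHockeyStick}).
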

 Despite the generality of the result one can show that~\Cref{eq:SDPIDobrushin} does not hold true for R\'enyi's Divergences:
 \begin{example}
     Let $\mu=(1/2,1/2)$ and $K=\text{BSC}(\lambda)$ with $\lambda<\frac12$. Then $\eta_{TV}(K)=\vartheta(K)=(1-2\lambda)$. 
     Consider now $D_\alpha(K(\cdot|0)\|\mu)=D_\alpha(\delta_0 K \|\mu K) = \frac{1}{\alpha-1} \log ( 2^{1-\alpha}( \lambda^\alpha +(1-\lambda)^\alpha))$. Moreover, $D_\alpha(\delta_0\| \mu) = -\log(2)$. If $\lambda = 0.2$ and $\alpha=6$ one has that
     \begin{equation}
         \eta_{D_\alpha}(K) > \frac{D_\alpha(\delta_0 K \|\mu K)}{D_\alpha(\delta_0\| \mu)} = 0.6138 > \vartheta(K) = 0.6,
     \end{equation}
     and the gap increases with $\alpha.$
 \end{example}
 Moreover, a universal lower-bound is known as well, in case of three times differentiable functions $\varphi$ with $\varphi''(1)>0$~\cite[Theorem 3.3]{sdpiRaginsky}:
 \begin{equation}
     \eta_\varphi(K) \geq \eta_{\chi^2}(K) = \sup_{\mu} \left(\sup_{f,g} \mu K(f\cdot g)\right)^2=\sup_{\mu} S^2(\mu,K), \label{eq:maxCorrLowerBound}
 \end{equation}
 where the term on the right-hand side of~\Cref{eq:maxCorrLowerBound} is also known in the literature as ``Maximal Correlation''. For functions $\varphi$ which are operator convex, the inequality in~\Cref{eq:maxCorrLowerBound} is actually an equality. Examples of operator convex functions are: $x\log(x)$, $x^p$ with $1\leq p\leq 2$, and so on. This is particularly advantageous because it allows us to leverage tensorisation properties of SDPI that are known to hold for the Kullback-Leibler Divergences even for other divergences (\textit{e.g.}, Hellinger $p$ with $p\leq 2$). More precisely, one can say the following: if $\varphi$ is operator convex and the channel considered is an $n$-fold tensor-product $K^{\otimes n}$ then the following holds true~\cite[Corollary 3.1]{sdpiRaginsky},\cite[Equation (62)]{SDPIfDiv}:
 \begin{equation}
     \eta_\varphi(K^{\otimes n}) \leq 1-(1-\eta_\varphi(K))^n. \label{eq:tensorisationOperatorConvex}
 \end{equation}
 
 \subsection{Variational Repesentations and Functional Inequalities}\label{sec:varReprDiv}
	A re-interpretation of the comments stated in~\Cref{sec:preliminaries} and tailored to divergences leads us to the main technical tools that will be used through the document: ``variational representations'' and functional inequalities. The main starting point will be looking at divergences as functionals acting on the first measure \textit{i.e.},  $D_\varphi(\cdot\|\mu)=\psi_\mu(\cdot)$. Once this is established, most variational representations are instances of Legendre-Fenchel duality as stated in~\Cref{eq:legendreFenchelGeneral}.
	The most well-known is certainly the Donsker-Varadhan representation of the Kullback-Leibler Divergence, that states the following~\cite{largeDeviationVaradhan}:
	\begin{equation}
		D(\nu\|\mu) = \sup_{f\in B(\X)} \langle \nu, f \rangle - \log\left(\mu(\exp(f))\right), \label{eq:donskerVaradhan}
	\end{equation}
	where $B(\X)$ denotes the space of bounded and measurable real-valued functions defined on $\X$.~\Cref{eq:donskerVaradhan} characterises the Kullback-Leibler divergence as the Legendre-Fenchel dual of the functional $\log(\mu(\exp(f)))=\vartheta_\mu(f)$ \textit{i.e.}, $D(\cdot\|\mu)=\vartheta_\mu^\star(\cdot).$
	Similar variational representation can be found for large families of Divergences, like R\'enyi's Divergences~\cite{anantharamRenyiDivVarRepr,RenyiDivVarReprNeuralNetwork} and $\varphi$-Divergences~\cite{minimizationMeasures}. We will now lay the ground-work in order to state the variational representation for $\varphi$-Divergences as it represents a meaningful tool for the scope of this work.
	In particular, let $F(\X)$ be an arbitrary family of real-valued functions defined on $\X$ and denote with $\mathcal{M}_1(\X)$ the space of probability measures over $\X$. Denote with $\langle F(\X)\cup B(\X) \rangle$ the linear span of $F(\X)\cup B(\X)$. Moreover, consider the following sets:
	$$ \mathcal{M}_1^F(\X)= \left\{\nu \in \mathcal{M}_1(\X) : \int |f|d\nu <\infty \text{ for } f \in F(\X) \right\},$$
	and
	$$ \mathcal{M}^F(\X)= \left\{\nu \in \mathcal{M}(\X): \int |f|d|\nu|\text{\footnotemark}<\infty \text{ for } f \in F(\X) \right\}.$$
	\footnotetext{$|\nu|$ denotes the total variation of the finite signed measure $\nu$. }
	If $F(\X)=B(\X)$ then $\mathcal{M}_1^F(\X)= \M_1(\X)$ and $\M^F(\X)=\M(\X)$.
	Denote with $\tau_F$ the weakest topology on $\M^F(\X)$ such that all mappings $\nu\to\nu(f)$ are continuous when $f\in \langle F(\X)\cup B(\X) \rangle$ and with $\tau_M$ the weakest topology on $\langle F(\X)\cup B(X) \rangle$ such that all mappings $f \to \nu(f)$ are continuous when $\nu\in\M^F(\X)$. 
	One can then show the following result
	\begin{proposition}[{\cite[Proposition 2.1]{minimizationMeasures}}]
		The space $\M^F(\X)$ equipped with the $\tau_F$-topology and the space $\langle F(\X)\cup B(\X) \rangle$ equipped with the $\tau_M$ are locally convex topological vector spaces and are the topological dual of each other.
	\end{proposition}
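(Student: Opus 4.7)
The plan is to recognize this statement as an instance of the general dual-pair / weak-topology construction from functional analysis, and to carry out the argument in three steps. First, I would verify that the bilinear form $b(\nu, f) := \nu(f) = \int f\,d\nu$ is well-defined on $\M^F(\X) \times \langle F(\X) \cup B(\X) \rangle$: for $f \in F(\X)$, integrability against any $\nu \in \M^F(\X)$ is exactly the defining condition of $\M^F(\X)$, while for $f \in B(\X)$ it follows from $\nu$ being a finite signed (Radon) measure. Bilinearity then extends $b$ canonically from $F(\X) \cup B(\X)$ to the linear span.

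Second, and this is where the real content lies, I would verify that $b$ separates points in both variables. On the measure side: if $\nu, \nu' \in \M^F(\X)$ satisfy $\nu(f) = \nu'(f)$ for every $f \in B(\X)$ (in particular, for every indicator of a Borel set), then $\nu = \nu'$, since finite signed Radon measures are determined by their action on bounded measurable functions. On the function side: for distinct $f, g \in \langle F(\X) \cup B(\X) \rangle$ there exists $x_0 \in \X$ with $f(x_0) \neq g(x_0)$; the Dirac measure $\delta_{x_0}$ lies in $\M^F(\X)$ because $\int |h|\,d\delta_{x_0} = |h(x_0)| < \infty$ for every real-valued $h \in F(\X)$, and it witnesses separation via $\delta_{x_0}(f) = f(x_0) \neq g(x_0) = \delta_{x_0}(g)$.

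Third, I would invoke the standard result (e.g., from the Bourbaki treatment of dual pairs) that, given a separating bilinear form on $E \times G$, the weak topology $\sigma(E, G)$ makes $E$ into a Hausdorff locally convex topological vector space whose continuous dual is canonically identified with $G$, and symmetrically for $\sigma(G, E)$ on $G$. By construction, $\tau_F$ on $\M^F(\X)$ is precisely $\sigma(\M^F(\X), \langle F(\X) \cup B(\X) \rangle)$, the coarsest topology making all evaluation maps $\nu \mapsto \nu(f)$ continuous, and $\tau_M$ on $\langle F(\X) \cup B(\X) \rangle$ is $\sigma(\langle F(\X) \cup B(\X) \rangle, \M^F(\X))$. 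The conclusion then follows immediately.

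I expect the main obstacle to be the careful verification of separation on the function side: one must ensure that elements of $F(\X) \cup B(\X)$ are treated as genuine pointwise-defined functions, not equivalence classes modulo some almost-everywhere relation, since otherwise Dirac measures would fail as separating functionals. The statement's wording (\emph{``an arbitrary family of real-valued functions defined on $\X$''}) makes this legitimate, but one should flag that to adapt the argument to $L^p$-type families one would have to replace Dirac measures by a richer family of test measures in $\M^F(\X)$. The remaining pieces — local convexity of the weak topology and the identification of the continuous dual — are standard and need only be cited.
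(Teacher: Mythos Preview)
The paper does not supply its own proof of this proposition: it is quoted verbatim as \cite[Proposition 2.1]{minimizationMeasures} and used as a black box to set up the Legendre--Fenchel duality for $\varphi$-divergences. There is therefore no ``paper's own proof'' to compare against.

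That said, your sketch is the standard and correct way to establish such a statement: exhibit a separating bilinear pairing and invoke the general theory of dual pairs (weak topologies $\sigma(E,G)$ and $\sigma(G,E)$ are automatically locally convex, and each space is the continuous dual of the other). Your identification of $\tau_F$ and $\tau_M$ with the respective weak topologies is exactly right, and the use of Dirac measures for separation on the function side is the natural choice given that elements of $F(\X)$ are pointwise-defined. The caveat you flag about equivalence classes is well taken but not an issue here. If you wished to write this out in full, the only additional care needed is to check that Radon measures on $\X$ are indeed determined by their values on $B(\X)$ (this is where some mild regularity of $\X$, implicit in the Radon assumption, enters), but this is routine.
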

	$D_\varphi(\cdot\|\mu) = \psi_\mu(\cdot)$ is thus a convex and lower semi-continuous mapping with respect to $\tau_F$~\cite[Proposition 2.2]{minimizationMeasures} and it is possible to characterise its variational representation, bridging us between the two spaces $\M^F(\X)$ and $\langle F(\X)\cup B(\X) \rangle$. For the additional technical condition required on $\varphi$ (\textit{i.e}, guaranteeing the uniqueness of the dual optimal solution the reader is referred to~\citep{minimizationMeasures}).
\begin{theorem}[{{\citep[Proposition 4.2 and Theorem 4.3]{minimizationMeasures}}}]\label{thm:varReprfDiv}
	Let $\varphi$ be a strictly convex functional and let $\mu \in \mathcal{M}(\X)$. One has that for every $\nu\in\M^F(\X)$:
	\begin{equation}
		D_\varphi(\nu\|\mu) = \sup_{f\in\langle F(\X)\cup B(\X) \rangle} \nu(f) - \mu(\varphi^\star(f)),\label{eq:varReprfDiv}
	\end{equation}
	where $\varphi^\star$ denotes the Legendre-Fenchel dual of $\varphi$.
	Moreover, one has that for a given $f\in \langle F(\X)\cup B(\X) \rangle$:
	\begin{equation}
		\mu(\varphi^\star(f)) = \sup_{\nu\in\M^F(\X)} \nu(f)-D_\varphi(\nu\|\mu). \label{eq:varReprDualFDiv}
	\end{equation}
\end{theorem}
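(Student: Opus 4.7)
The plan is to recognise both identities as a single instance of Fenchel--Moreau biconjugation, applied to the mapping $\psi_\mu(\nu) := D_\varphi(\nu\|\mu)$ viewed as a functional on the locally convex topological vector space $\M^F(\X)$. By the proposition immediately preceding the statement, $\M^F(\X)$ (with the $\tau_F$-topology) and $\langle F(\X)\cup B(\X)\rangle$ (with the $\tau_M$-topology) are in topological duality, and by Proposition~2.2 of~\cite{minimizationMeasures} the functional $\psi_\mu$ is convex and $\tau_F$-lower semi-continuous. Since $\psi_\mu$ is also proper, the Fenchel--Moreau theorem gives $\psi_\mu = \psi_\mu^{\star\star}$, so once I identify the Legendre--Fenchel conjugate as $\psi_\mu^\star(f) = \mu(\varphi^\star(f))$, equation~\eqref{eq:varReprfDiv} follows, and~\eqref{eq:varReprDualFDiv} is then just the definition of this conjugate.

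First I would establish the easy inequality ``$\leq$'' in~\eqref{eq:varReprfDiv}. The Fenchel inequality, applied pointwise to the convex, lower semi-continuous $\varphi$, yields
\begin{equation}
\varphi(t) \;\geq\; t\cdot s \;-\; \varphi^\star(s), \qquad \forall\, t\geq 0,\ s\in\mathbb{R}. \notag
\end{equation}
Substituting $t = \tfrac{d\nu}{d\mu}(x)$ and $s = f(x)$ for an arbitrary $f\in\langle F(\X)\cup B(\X)\rangle$, then integrating against $\mu$, gives $D_\varphi(\nu\|\mu) \geq \nu(f) - \mu(\varphi^\star(f))$. Taking the supremum over $f$ delivers the inequality. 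If $\nu \not\ll \mu$, the supremum is $+\infty$, which one sees by selecting $f = n\cdot \mathbf{1}_A$ on a $\mu$-null set $A$ that carries the singular part of $\nu$ and letting $n\to\infty$.

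The harder direction requires showing that the supremum over the restricted class $\langle F(\X)\cup B(\X)\rangle$ is large enough to match $D_\varphi(\nu\|\mu)$. Pointwise, equality in Fenchel's inequality is attained at $s\in\partial\varphi(t)$, which suggests the candidate optimiser $f^*(x) = \varphi'\bigl(\tfrac{d\nu}{d\mu}(x)\bigr)$. This $f^*$ is generally unbounded and need not lie in $\langle F(\X)\cup B(\X)\rangle$, so the main technical obstacle is an approximation argument: I would truncate $f^*$ at height $n$ to obtain $f_n \in B(\X)$, and then verify that
\begin{equation}
\nu(f_n) - \mu\bigl(\varphi^\star(f_n)\bigr) \;\xrightarrow[n\to\infty]{}\; \nu(f^*) - \mu\bigl(\varphi^\star(f^*)\bigr) \;=\; D_\varphi(\nu\|\mu). \notag
\end{equation}
The convergence on the $\nu$-side follows by monotone/dominated convergence using the integrability of $f^*$ against $\nu$, while the convergence on the $\mu$-side uses monotone convergence for $\varphi^\star$ on the sublevel sets where truncation is active, together with the assumption that $\varphi^\star$ is finite on the relevant range. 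The strict convexity of $\varphi$ (hence differentiability of $\varphi^\star$) is what guarantees uniqueness of this optimiser and legitimises the subdifferential choice.

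Finally,~\eqref{eq:varReprDualFDiv} is obtained by reading the Fenchel--Moreau identity in the reverse direction: the functional $\vartheta(f) = \mu(\varphi^\star(f))$ is convex and $\tau_M$-lower semi-continuous (by Fatou's lemma and the convexity of $\varphi^\star$) on $\langle F(\X)\cup B(\X)\rangle$, and~\eqref{eq:varReprfDiv} identifies $\vartheta^\star(\nu) = D_\varphi(\nu\|\mu)$; hence $\vartheta = \vartheta^{\star\star}$ delivers the dual formula. The single bottleneck in this whole strategy is the approximation step for the hard direction; once that is handled, the rest is an application of standard convex duality.
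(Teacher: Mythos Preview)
The paper does not supply a proof of this theorem: it is quoted verbatim from an external source (Propositions~4.2 and Theorem~4.3 of~\cite{minimizationMeasures}) and used as a black-box tool in the subsequent arguments. There is therefore no ``paper's own proof'' to compare against.

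That said, your sketch is the standard route and is essentially how the cited reference proceeds: identify $D_\varphi(\cdot\|\mu)$ as a proper, convex, $\tau_F$-lower semi-continuous functional on $\M^F(\X)$, compute its Legendre--Fenchel conjugate as $f\mapsto \mu(\varphi^\star(f))$ via the pointwise Fenchel inequality, and then invoke Fenchel--Moreau biconjugation for both identities. Your handling of the easy direction and of the singular case $\nu\not\ll\mu$ is correct. The only place requiring genuine care is the approximation step for the hard direction: the truncation $f_n$ of $f^\ast=\varphi'(d\nu/d\mu)$ does land in $B(\X)$ (it is bounded and measurable), but the convergence $\nu(f_n)-\mu(\varphi^\star(f_n))\to D_\varphi(\nu\|\mu)$ needs a slightly more careful decomposition than ``monotone/dominated convergence'' since neither integrand is globally monotone in $n$; one typically splits according to the sign of $f^\ast$ and uses the identity $\varphi(t)=tf^\ast-\varphi^\star(f^\ast)$ on the untruncated region. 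You correctly flag this as the bottleneck, and the strict convexity hypothesis is indeed what makes the candidate optimiser single-valued.
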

\begin{remark}[Expected values, divergences and duality]\label{rmk:onTheDual}
Through~\Cref{eq:varReprfDiv}, given a measure $\mu$, one can connect the expected value of any function $f\in \langle F(\X)\cup B(\X) \rangle$ under any measure $\nu\ll\mu$ (\textit{i.e.}, $\nu(f)$) to the divergence $D_\varphi(\nu\|\mu)$. The behavior of the third actor in~\Cref{eq:varReprfDiv}, the dual of $D_\varphi(\cdot\|\mu)$, is crucial in order to obtain bounds. For instance, when $f$ is the indicator function of an event, one can explicitly compute the dual (and then retrieve a family of Fano-like inequalities involving arbitrary divergences, see~\cite{fullVersionGeneralization},~\cite[Chapter 3]{thesis}). When $f$ is \emph{not} an indicator function, one cannot typically compute the dual explicitly and has to upper-bound it leveraging properties of $\mu$ and $f$. In this work, to provide such an upper bound on the dual, we will make use of Markov's inequality. This takes us back to indicator functions for which we can completely characterise the dual. For technical details see~\Cref{app:proofDiv}. This pattern is fundamental whenever one is trying to relate (via upper or lower bounds) the expected value of a function to some divergence/entropy (see~\cite[Chapter 2]{thesis}). 
\end{remark}
The other main tool that will be utilised is H\"older's inequality. In particular, there is a connection between R\'enyi's $\alpha$-information measure and $L^\alpha$-norms of the Radon-Nikodym derivative.
The results we are about to provide are all a consequence of one or multiple applications of H\"older's inequality\footnote{ H\"older's inequality can itself be seen as functional inequality stemming from~\Cref{eq:dualNorm}, considering the usual pairing and, as a starting norm, the regular $L^\alpha$-norm but also as a consequence of~\Cref{eq:legendreFenchelGeneral} with $\psi(x)=\frac{|x|^\alpha}{\alpha}$. For details see~\cite[Section 1.3.1]{thesis}.}. 
\begin{theorem}[\cite{thesis}]\label{alphaExpBound}
	Let $(\X\times\Y,\F,\Pm_{XY}),(\X\times\Y,\F,\Pm_X\Pm_Y)$ be two probability spaces, and assume that $\Pm_{XY}\ll\Pm_X\Pm_Y$. Given an $\F$-measurable function $f:\X\times\Y\to\mathbb{R}^+$, then,
	\begin{align}
		\Pm_{XY}(f) &\leq \left\lVert\left\lVert f \right\rVert_{L^\beta(\Pm_X)}\right\rVert_{L^{\beta'}(\Pm_X)} \cdot  \left\lVert\left\lVert\frac{d\Pm_{XY}}{d\Pm_X\Pm_Y}\right\rVert_{L^\alpha(\Pm_X)}\right\rVert_{L^{\alpha'}(\Pm_Y)} \label{eq:alphaExpBound}
	\end{align}
	where $\beta,\alpha,\beta',\alpha'$ are such that $1=\frac1\alpha+\frac1\beta = \frac{1}{\alpha'}+\frac{1}{\beta'}$. Given a measurable function $g$, $\left\lVert g \right\rVert_{L^\alpha(\mu)}$ denotes the $\alpha$-Norm of $g$ under $\mu$ \textit{i.e.}, $\left(\int g^\alpha d\mu\right)^\frac{1}{\alpha}.$
\end{theorem}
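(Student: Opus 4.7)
The statement is essentially a double application of Hölder's inequality, with a Radon–Nikodym change of measure as the starting step. My plan is to first rewrite the integral of $f$ under $\Pm_{XY}$ as an integral under the product measure $\Pm_X\Pm_Y$ weighted by the likelihood ratio, and then peel off the two integrals one at a time via Hölder.

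Concretely, since $\Pm_{XY}\ll\Pm_X\Pm_Y$, I would start from
\begin{equation*}
\Pm_{XY}(f) \;=\; \int f\,\frac{d\Pm_{XY}}{d\Pm_X\Pm_Y}\,d(\Pm_X\Pm_Y),
\end{equation*}
and use Fubini's theorem (legitimate because $f\geq 0$ by assumption) to express this as the iterated integral $\int_\Y \!\big(\!\int_\X f(x,y)\,g(x,y)\,d\Pm_X(x)\big)\,d\Pm_Y(y)$, where $g=\tfrac{d\Pm_{XY}}{d\Pm_X\Pm_Y}$.

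Next I would apply Hölder's inequality to the inner integral over $\X$ with conjugate exponents $\beta,\alpha$ satisfying $1/\alpha+1/\beta=1$, giving
\begin{equation*}
\int_\X f(x,y)\,g(x,y)\,d\Pm_X(x)\;\leq\; \|f(\cdot,y)\|_{L^\beta(\Pm_X)}\cdot\|g(\cdot,y)\|_{L^\alpha(\Pm_X)}.
\end{equation*}
Then I would apply Hölder's inequality a second time, now to the outer integral over $\Y$ with conjugate exponents $\beta',\alpha'$ satisfying $1/\alpha'+1/\beta'=1$, treating $y\mapsto\|f(\cdot,y)\|_{L^\beta(\Pm_X)}$ and $y\mapsto\|g(\cdot,y)\|_{L^\alpha(\Pm_X)}$ as two non-negative measurable functions on $\Y$. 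This directly yields the nested-norm product on the right-hand side of~\Cref{eq:alphaExpBound} (with the outer norm on $f$ naturally taken in $L^{\beta'}(\Pm_Y)$ by the structure of the iterated integral).

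There is no real obstacle here: both Hölder applications are standard, and the non-negativity of $f$ together with $\Pm_{XY}\ll\Pm_X\Pm_Y$ ensures that Fubini and the Radon–Nikodym substitution are valid without further integrability hypotheses. The only mild subtlety is that the bound is trivially true when either nested norm is infinite, so one may assume without loss of generality that both are finite. The rest is just bookkeeping on the exponents to confirm that the constraints $1/\alpha+1/\beta=1=1/\alpha'+1/\beta'$ match the two Hölder pairings used.
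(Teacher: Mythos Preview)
Your proposal is correct and matches the paper's approach: the paper explicitly states that this theorem is a consequence of one or multiple applications of H\"older's inequality (and cites \cite{thesis} for the full argument), which is precisely the Radon--Nikodym change of measure followed by the two nested H\"older steps you outline. Your parenthetical observation that the outer norm on $f$ should be with respect to $\Pm_Y$ rather than $\Pm_X$ is also correct---the $\Pm_X$ in the statement is a typo, as the iterated-integral structure forces the outer integration to be over $\Y$.
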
 
\begin{remark}
	~\Cref{alphaExpBound} (and corresponding generalisations including Orlicz and Amemiya norms) has already  appeared in~\cite{fullVersionGeneralization} in a slightly less general form, and in~\cite{thesis} in a variety of forms. It  has been re-stated here for ease of reference. 
\end{remark}\noindent~\Cref{alphaExpBound} provides multiple degrees of freedom:
	\begin{itemize}
		\item the parameters characterising the norms: $\alpha,\alpha'$;
		\item the (positive-valued) function $f$.
	\end{itemize}
Three choices of the above are meaningful to us:
\begin{enumerate}
	\item $\alpha'=\alpha$, which makes R\'enyi's Divergence of order $\alpha$ appear on the right-hand side of~\Cref{eq:alphaExpBound} (as a norm of the Radon-Nikodym derivative);\label{en:choice1}
	\item $\alpha'\to 1$, which makes Sibson's Mutual Information of order $\alpha$ appear on the right-hand side of~\Cref{eq:alphaExpBound};\label{en:choice2}
	\item $f=\mathbbm{1}_E$, which allows us to relate the probability of the same event under the joint and a function of the product of the marginals (and an information measure). \label{en:choice3}
\end{enumerate} The choices described in~\Cref{en:choice1} and~\Cref{en:choice3} give rise to the following corollary:
\begin{corollary}[{\cite[Corollary 6]{fullVersionGeneralization}}]
\label{alphaDivBound}
	Given $E\in\F$ and $\alpha > 1$, we have that:
	\begin{equation}
		\Pm_{XY}(E)\leq (\Pm_X\Pm_Y(E))^{\frac{\alpha-1}{\alpha}}\exp\left(\frac{\alpha-1}{\alpha}D_\alpha(\Pm_{XY}\|\Pm_X\Pm_Y)\right),
	\end{equation}
\end{corollary}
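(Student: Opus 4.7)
The plan is to specialize~\Cref{alphaExpBound} to the two choices labelled~\ref{en:choice1} and~\ref{en:choice3} in the discussion preceding the statement: take $f=\mathbbm{1}_E$ and set $\alpha'=\alpha$, so that the conjugate exponents satisfy $\beta=\beta'=\alpha/(\alpha-1)$. With these substitutions the left-hand side of~\Cref{eq:alphaExpBound} is exactly $\Pm_{XY}(\mathbbm{1}_E)=\Pm_{XY}(E)$, and the proof reduces to identifying the two factors on the right-hand side.

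For the first factor, I would compute the iterated norm of the indicator slice by slice. Writing $E_y=\{x:(x,y)\in E\}$, one has $\lVert\mathbbm{1}_E(\cdot,y)\rVert_{L^\beta(\Pm_X)}=\Pm_X(E_y)^{1/\beta}$. Because the choice $\alpha'=\alpha$ forces $\beta=\beta'$, taking the outer $L^{\beta'}(\Pm_Y)$-norm and applying Fubini gives
\[
\left(\int \Pm_X(E_y)^{(1/\beta)\beta'}\,d\Pm_Y(y)\right)^{1/\beta'} = \Pm_X\Pm_Y(E)^{1/\beta} = \Pm_X\Pm_Y(E)^{(\alpha-1)/\alpha},
\]
which already matches the polynomial factor in the statement.

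For the second factor, with $\alpha'=\alpha$ the two iterated $L^\alpha$-norms of the Radon-Nikodym derivative collapse, again by Fubini, into a single integral against $\Pm_X\Pm_Y$, namely $\bigl(\int (d\Pm_{XY}/d\Pm_X\Pm_Y)^\alpha\,d\Pm_X\Pm_Y\bigr)^{1/\alpha}$. Using the equivalent expression $\int p^\alpha q^{1-\alpha}d\mu=\int (p/q)^\alpha d\Q$ recalled in the remark of~\Cref{sec:renyisDiv} (valid since $\alpha>1$ and $\Pm_{XY}\ll\Pm_X\Pm_Y$), this integral equals $\exp\bigl((\alpha-1)D_\alpha(\Pm_{XY}\|\Pm_X\Pm_Y)\bigr)$; taking the $1/\alpha$-th power yields the exponential factor in the statement.

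Multiplying the two factors gives the claimed inequality. I do not foresee any genuine obstacle: the whole argument is really just a bookkeeping exercise on top of the double application of H\"older already absorbed into~\Cref{alphaExpBound}. The only care required is to align exponents so that $\beta=\beta'$ (which makes the double norm of $\mathbbm{1}_E$ collapse) and to recognize the resulting $L^\alpha(\Pm_X\Pm_Y)$-norm of the density as $\exp\bigl(\tfrac{\alpha-1}{\alpha}D_\alpha\bigr)$ via the Radon-Nikodym form of R\'enyi's divergence. Absolute continuity $\Pm_{XY}\ll\Pm_X\Pm_Y$ must be assumed implicitly, since otherwise $D_\alpha=\infty$ for $\alpha>1$ and the inequality is trivial.
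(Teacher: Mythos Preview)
Your proposal is correct and follows exactly the approach the paper indicates: the paper does not spell out a separate proof but simply states that specializing~\Cref{alphaExpBound} via the choices~\ref{en:choice1} ($\alpha'=\alpha$) and~\ref{en:choice3} ($f=\mathbbm{1}_E$) yields the corollary, and you have accurately filled in the routine computations collapsing the iterated norms.
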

\noindent while the choices described in~\Cref{en:choice2} and~\Cref{en:choice3} give rise to the following corollary:
\begin{corollary}[{\cite[Corollary 1]{fullVersionGeneralization}}]
\label{sibsMIBoundCor}
	Given $E\in\F$, we have that:
	\begin{align}
		\Pm_{XY}(E) &\leq\left(\esssup_{\Pm_y} \Pm_X(E_Y)\right)^{1/\beta} \cdot\Pm_Y\left(\Pm_X^{1/\alpha} \left( \left(\frac{d\Pm_{XY}}{d\Pm_X\Pm_Y}\right)^\alpha\right)\right) \label{sibsNonVerdu}\\ &=\left(\esssup_{\Pm_y} \Pm_X(E_Y)\right)^{1/\beta}\cdot \exp\left(\frac{\alpha-1}{\alpha} I_{\alpha}(X,Y)\right), \label{sibMIBound}
	\end{align}
	where $I_\alpha(X,Y)$ is the Sibson mutual information of order $\alpha$,~\citep{verduAlpha}. Moreover, $\alpha$ and $\beta$ are such that $\frac{1}{\alpha} + \frac{1}{\beta} =1$.
\end{corollary}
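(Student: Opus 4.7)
The plan is to apply Theorem \ref{alphaExpBound} with the two specific choices labelled \ref{en:choice2} and \ref{en:choice3} in the preceding discussion: namely, take the test function to be the indicator $f = \mathbbm{1}_E$ and choose the exponents so that $\alpha' = 1$ (equivalently $\beta' = \infty$). The skeleton is then (i) plug in, (ii) identify each of the two factors in the resulting bound, and (iii) recognise the factor involving the Radon--Nikodym derivative as an exponential of Sibson's $\alpha$-MI via the alternative formula \eqref{altFormulSibs}.

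First, I would observe that $\Pm_{XY}(\mathbbm{1}_E) = \Pm_{XY}(E)$, so the left-hand side is what we want. For the first factor on the right of \eqref{eq:alphaExpBound}, one computes $\|\mathbbm{1}_E\|_{L^\beta(\Pm_X)}(y) = \Pm_X(E_y)^{1/\beta}$, where $E_y = \{x : (x,y) \in E\}$ denotes the $y$-section of $E$. Taking the $L^{\beta'}$-norm with respect to $\Pm_Y$ and specialising to $\beta' = \infty$ turns this outer norm into an essential supremum, giving
\begin{equation*}
\left\lVert \|\mathbbm{1}_E\|_{L^\beta(\Pm_X)} \right\rVert_{L^\infty(\Pm_Y)} = \esssup_{\Pm_Y} \Pm_X(E_y)^{1/\beta} = \left(\esssup_{\Pm_Y} \Pm_X(E_y)\right)^{1/\beta},
\end{equation*}
where the last equality uses the monotonicity of $t \mapsto t^{1/\beta}$. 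This recovers exactly the first factor in \eqref{sibsNonVerdu} and \eqref{sibMIBound}.

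Next, for the second factor in \eqref{eq:alphaExpBound}, choosing $\alpha' = 1$ makes the outer norm a plain integral against $\Pm_Y$, so
\begin{equation*}
\left\lVert \left\lVert \frac{d\Pm_{XY}}{d\Pm_X\Pm_Y}\right\rVert_{L^\alpha(\Pm_X)} \right\rVert_{L^1(\Pm_Y)} = \Pm_Y\!\left(\Pm_X^{1/\alpha}\!\left(\left(\frac{d\Pm_{XY}}{d\Pm_X\Pm_Y}\right)^\alpha\right)\right),
\end{equation*}
which is precisely the second factor in \eqref{sibsNonVerdu}. Combining the two factors gives \eqref{sibsNonVerdu}. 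To pass to the closed form \eqref{sibMIBound}, I would invoke the alternative representation \eqref{altFormulSibs} of Sibson's mutual information, which, after rearranging, says exactly that the above expectation equals $\exp\!\left(\frac{\alpha-1}{\alpha} I_\alpha(X,Y)\right)$.

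The main obstacle is essentially bookkeeping: correctly setting up the conjugate pairs $(\alpha, \beta)$ and $(\alpha', \beta')$ with $\alpha' = 1$, $\beta' = \infty$, and cleanly handling the $L^\infty$ outer norm so that the essential supremum pops out as $(\esssup_{\Pm_Y} \Pm_X(E_y))^{1/\beta}$. Once the identifications are made, the appearance of $I_\alpha(X,Y)$ is immediate from \eqref{altFormulSibs}, so no further computation is required.
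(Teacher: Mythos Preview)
Your proposal is correct and follows exactly the route the paper indicates: specialise Theorem~\ref{alphaExpBound} with $f=\mathbbm{1}_E$ and $\alpha'\to 1$ (hence $\beta'\to\infty$), then identify the resulting nested norm with $\exp\!\left(\frac{\alpha-1}{\alpha}I_\alpha(X,Y)\right)$ via \eqref{altFormulSibs}. The paper gives no further detail beyond naming these two choices, so your bookkeeping fills in precisely what is left implicit.
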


	\section{Main Results: Lower Bounds on the Risk}\label{sec:riskLBThroughProbs}
	The very first result one can provide stems from an immediate application of~\Cref{sibsMIBoundCor} in conjunction with Markov's Inequality:
	\begin{theorem}\label{thm:sibsMIResultBayesRisk}
		Consider the Bayesian framework described in~\Cref{sec:bayesianFramework}. The following must hold for every $\alpha>1$ and $\rho>0$:
		\begin{equation}
			R_B\geq \rho\left(1- \exp\left(\frac{\alpha-1}{\alpha}\left(I_\alpha(W,X) + \log(L_W(\rho))\right) \right)\right). \label{eq:sibsMILowerBound}
    \end{equation}
    Moreover, taking the limit of $\alpha\to \infty$ one recovers the following:
    \begin{equation}
			R_B\geq \sup_{\rho>0}\rho\left(1- \exp\left(\ml{W}{X} + \log(L_W(\rho)) \right)\right). \label{eq:maximalLeakgeResultBayesRisk}
		\end{equation}
\end{theorem}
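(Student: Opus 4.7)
The plan is to combine Markov's inequality with Corollary~\ref{sibsMIBoundCor} and the Data-Processing Inequality for Sibson's $\alpha$-mutual information (item 1 of Proposition~\ref{sibsProperties}). Fix an arbitrary estimator $\phi$, set $\hat{W} = \phi(X)$, and let $\rho > 0$. By Markov's inequality applied to the non-negative random variable $\ell(W,\hat{W})$,
\begin{equation*}
\Pm_{W\hat{W}}(\ell(W,\hat{W})) \;\geq\; \rho\,\Pm_{W\hat{W}}\bigl(\ell(W,\hat{W}) \geq \rho\bigr) \;=\; \rho\bigl(1 - \Pm_{W\hat{W}}(E)\bigr),
\end{equation*}
where $E = \{\ell(W,\hat{W}) < \rho\} \in \F_{W\hat{W}}$. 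So the task reduces to upper bounding $\Pm_{W\hat{W}}(E)$ in terms of $I_\alpha(W,X)$ and the small-ball probability.

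Next I apply Corollary~\ref{sibsMIBoundCor} to the pair $(W, \hat{W})$ with the event $E$. The slice $E_{\hat{w}} = \{w : \ell(w, \hat{w}) < \rho\}$ satisfies $\Pm_W(E_{\hat{w}}) \leq \sup_{\hat{w}} \Pm_W(\ell(W,\hat{w}) < \rho) = L_W(\rho)$ by the definition in~\eqref{smallBall}, so $\esssup_{\Pm_{\hat{W}}} \Pm_W(E_{\hat{W}}) \leq L_W(\rho)$. With the conjugacy $\frac{1}{\alpha} + \frac{1}{\beta} = 1$ giving $\frac{1}{\beta} = \frac{\alpha-1}{\alpha}$, the corollary yields
\begin{equation*}
\Pm_{W\hat{W}}(E) \;\leq\; L_W(\rho)^{\frac{\alpha-1}{\alpha}} \exp\!\left(\frac{\alpha-1}{\alpha} I_\alpha(W,\hat{W})\right) \;=\; \exp\!\left(\frac{\alpha-1}{\alpha}\bigl(I_\alpha(W,\hat{W}) + \log L_W(\rho)\bigr)\right).
\end{equation*}

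To remove the dependence on $\phi$, I invoke the DPI for Sibson's $\alpha$-MI along the Markov chain $W - X - \hat{W}$ (which holds because $\hat{W} = \phi(X)$ is a deterministic function of $X$): $I_\alpha(W,\hat{W}) \leq I_\alpha(W,X)$. Plugging back in gives
\begin{equation*}
\Pm_{W\hat{W}}(\ell(W,\hat{W})) \;\geq\; \rho\left(1 - \exp\!\left(\tfrac{\alpha-1}{\alpha}\bigl(I_\alpha(W,X) + \log L_W(\rho)\bigr)\right)\right),
\end{equation*}
and taking the infimum over $\phi$ on the left, together with the fact that the right-hand side no longer depends on $\phi$, produces~\eqref{eq:sibsMILowerBound}.

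For the maximal leakage statement, I pass $\alpha \to \infty$: the exponent $\frac{\alpha-1}{\alpha} \to 1$, and $I_\alpha(W,X) \to I_\infty(W,X) = \ml{W}{X}$ by the definition in Section~\ref{maximalLeakage}. The bound holds for every $\rho > 0$, so taking the supremum over $\rho$ gives~\eqref{eq:maximalLeakgeResultBayesRisk}. The main delicate step will be justifying the limit $\alpha \to \infty$ (monotone convergence of $I_\alpha$ from item 3 of Proposition~\ref{sibsProperties}, and ensuring the exponential upper bound remains valid in the limit); the rest is a straightforward assembly of Markov's inequality, the Hölder-type bound of Corollary~\ref{sibsMIBoundCor}, and the DPI.
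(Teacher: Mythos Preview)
Your proposal is correct and follows essentially the same approach as the paper: apply Corollary~\ref{sibsMIBoundCor} to the event $\{\ell(W,\hat{W})<\rho\}$, bound the slice probability by $L_W(\rho)$, use the DPI for $I_\alpha$ along $W-X-\hat{W}$, and combine with Markov's inequality. The only cosmetic difference is ordering (you invoke Markov first, the paper last), and you are slightly more explicit about the infimum over $\phi$ and the $\alpha\to\infty$ limit.
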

The proof can be found in~\Cref{app:proofIalpha}.
	Two remarks are in order:
	\begin{itemize}
		\item It is important to notice that the behaviour of~\Cref{eq:sibsMILowerBound} is fundamentally different from~\cite[Theorem 1]{bayesRiskRaginsky}. In~\cite[Theorem 1]{bayesRiskRaginsky} the dependence is linear with respect to the Mutual Information and logarithmic in $L_W(\rho)$ while in~\Cref{thm:sibsMIResultBayesRisk} there is an exponential dependence on $I_\alpha$ and linear in $L_W(\rho).$ 
		\item \Cref{thm:sibsMIResultBayesRisk} introduces a new parameter $\alpha>1$ to optimise over. The presence of $\alpha$ leads to a trade-off between the two quantities for a given $\rho$, $I_\alpha(W,X)$ and $L_W(\rho)$: $\frac{\alpha-1}{\alpha}I_\alpha(W,X)$ will increase with $\alpha$ whereas $L_W(\rho)^{\frac{\alpha-1}{\alpha}}$ will decrease with $\alpha$. 
	\end{itemize}
    An interesting characteristic of~\Cref{eq:maximalLeakgeResultBayesRisk} is that $\ml{W}{X}$ depends on $W$ only through the support. This allows us to provide, essentially for free, an even more general lower-bound on the risk. Indeed, ignoring $L_W(\rho)$ for a moment, 
	for a fixed family of $\mathcal{P}_{X|W}$, $\ml{W}{X}$ has the same value regardless of $\mathcal{P}_W$ (as long as the support of $W$ remains the same). 
	We can also walk the same path undertaken in~\cite{fullVersionGeneralization} and derive a variety of lower bounds involving a variety of information measures.
	\begin{theorem}
		\label{thm:fDivBoundBayesRisk}
		Consider the Bayesian framework described in~\Cref{sec:bayesianFramework}. Let $\varphi:[0, +\infty) \to \mathbb{R}$ be a monotone, strictly convex function and suppose that the generalised inverse, defined as $\varphi^{-1}(y) = \inf\{t\geq 0 : \varphi(t) > y\}$, exists. Then for every $\rho>0$ and every estimator $\hat{W}$, if $\varphi$ is non-decreasing one has the following
		\begin{equation}
			\Pm_{W\hat{W}}(\ell(W, \hat{W})) \geq \rho\left(1- L_W(\hat{W},\rho)\cdot \varphi^{-1}\left(\frac{I_{\varphi}(W, \hat{W})+(1-L_W(\hat{W},\rho))\cdot\varphi^\star(0)}{L_W(\hat{W},\rho)}\right)\right),\label{eq:fDivBoundBayesRiskIncr}
		\end{equation}
    while if $\varphi$ is non-increasing one recovers the following:
            \begin{equation}
			\Pm_{W\hat{W}}(\ell(W, \hat{W})) \geq \rho\left(1-L_W(\hat{W},\rho)\right)\cdot \varphi^{-1}\left(\frac{I_{\varphi}(W, \hat{W})+L_W(\hat{W},\rho)\cdot\varphi^\star(0)}{1-L_W(\hat{W},\rho)}\right). \label{eq:fDivBoundBayesRiskDecr}
		\end{equation}
		
	\end{theorem}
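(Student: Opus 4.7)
The plan is to reduce the statement to a question about the probability of the small-ball event, then apply the data-processing inequality (DPI) for $\varphi$-divergences to a binary coarsening of the sample space. First, fix $\rho>0$ and an estimator $\hat{W}$, and let $E = \{(w,\hat{w}) : \ell(w,\hat{w}) < \rho\}$. Markov's inequality applied to the non-negative random variable $\ell(W,\hat{W})$ yields
\begin{equation}
\Pm_{W\hat{W}}(\ell(W,\hat{W})) \;\geq\; \rho\,\Pm_{W\hat{W}}(\ell(W,\hat{W}) \geq \rho) \;=\; \rho\bigl(1 - \Pm_{W\hat{W}}(E)\bigr),
\end{equation}
so it suffices to control $\Pm_{W\hat{W}}(E)$ from above in the non-decreasing case, and $1-\Pm_{W\hat{W}}(E)$ from below in the non-increasing case.

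Next, push forward both $\Pm_{W\hat{W}}$ and $\Pm_W\Pm_{\hat{W}}$ through the indicator map $\mathbbm{1}_E$. The DPI for $\varphi$-divergences, which follows from the variational representation in~\Cref{thm:varReprfDiv} restricted to the two-level family $f = c\,\mathbbm{1}_E + c'\,\mathbbm{1}_{E^c}$, gives, with $p:=\Pm_{W\hat{W}}(E)$ and $q:=L_W(\hat{W},\rho)=\Pm_W\Pm_{\hat{W}}(E)$,
\begin{equation}
I_\varphi(W,\hat{W}) \;\geq\; q\,\varphi\!\left(\tfrac{p}{q}\right) + (1-q)\,\varphi\!\left(\tfrac{1-p}{1-q}\right).
\end{equation}

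Now use the Legendre identity $\varphi^\star(0)=-\inf_{t\geq 0}\varphi(t)$, which implies $\varphi(t)\geq -\varphi^\star(0)$ for all $t\geq 0$, to drop one of the two terms above. If $\varphi$ is non-decreasing, lower-bound the second term by $-(1-q)\varphi^\star(0)$, isolate $\varphi(p/q)$, and invert using the generalised inverse $\varphi^{-1}$ (monotone non-decreasing by strict convexity plus monotonicity) to obtain $p \leq q\,\varphi^{-1}\!\bigl(\tfrac{I_\varphi(W,\hat{W})+(1-q)\varphi^\star(0)}{q}\bigr)$; substituting into the Markov step produces~\Cref{eq:fDivBoundBayesRiskIncr}. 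If $\varphi$ is non-increasing, lower-bound the first term by $-q\,\varphi^\star(0)$, isolate $\varphi\bigl(\tfrac{1-p}{1-q}\bigr)$, and invert; because $\varphi$ is non-increasing the inversion reverses the direction, producing a \emph{lower} bound on $(1-p)/(1-q)$, which is exactly the quantity Markov requires, and this yields~\Cref{eq:fDivBoundBayesRiskDecr}.

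The main obstacle is the bookkeeping in the last step: the two monotonicity cases produce bounds of algebraically different shape because inverting $\varphi$ reverses or preserves the inequality depending on monotonicity, and one must match the retained $\varphi$-term with the quantity needed by Markov (so the ``correct'' term to drop using $\varphi\geq-\varphi^\star(0)$ differs across cases). A secondary technical concern is the finiteness of $\varphi^\star(0)$; for strictly convex non-decreasing $\varphi$ on $[0,\infty)$ one has $\varphi^\star(0)=-\varphi(0)$, which is finite and covers the concrete specialisations (Hellinger, R\'enyi, generalised Hockey-stick) instantiated later in the paper.
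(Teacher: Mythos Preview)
Your proof is correct and arrives at the same bounds, but via a different route than the paper. You apply Markov's inequality first, then push forward through $\mathbbm{1}_E$ and invoke the data-processing inequality to obtain the binary $\varphi$-divergence inequality $I_\varphi(W,\hat W)\ge q\,\varphi(p/q)+(1-q)\,\varphi\bigl(\tfrac{1-p}{1-q}\bigr)$, drop one term via $\varphi\ge -\varphi^\star(0)$, and invert. The paper instead works directly from the variational representation $I_\varphi(W,\hat W)\ge \Pm_{W\hat W}(f)-\Pm_W\Pm_{\hat W}(\varphi^\star(f))$ with the explicit test function $f=\tilde\lambda(\rho-\ell)$, uses the functional form of Markov ($\ell\ge\rho\,\mathbbm{1}_{\{\ell\ge\rho\}}$) together with monotonicity of $\varphi^\star$ to upper-bound the dual, and then optimises over the multiplier $\lambda$. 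Your argument is the classical ``Fano-type'' reduction to a binary divergence and is arguably more elementary; the paper's argument keeps the duality viewpoint front and centre (consistent with \Cref{rmk:onTheDual}) and makes transparent that the small-ball step is precisely an upper bound on the Fenchel dual. The paper itself acknowledges in \Cref{app:comparisonWithChen} that the DPI route you take is an alternative derivation of the same result.
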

 The proof can be found in~\Cref{app:proofDiv}.
 \begin{remark}[Recovering Mutual Information]
 A natural question is whether~\Cref{thm:fDivBoundBayesRisk} also includes Shannon's Mutual Information (and, consequently, the results in~\cite{bayesRiskRaginsky}). Selecting $\varphi(x)=x\log x$ is problematic as the function is non-monotonic and its inverse would not have a closed-form expression one could leverage in~\Crefrange{eq:fDivBoundBayesRiskIncr}{eq:fDivBoundBayesRiskDecr}. However, following the same steps undertaken in~\Cref{app:proofDiv}, with $\varphi(x)=x\log(x)$, but with a different choice of $f=-\tilde{\lambda}\ell-\log \mathcal{P}_{W}\mathcal{P}_{\hat{W}}(\exp(-\tilde{\lambda}\ell))+1$ and $\tilde{\lambda}=-\frac1\rho \log\left(\mathcal{P}_W\mathcal{P}_{\hat{W}}(\{\ell < \rho\})\right)$,~\Cref{eq:varReprJoint} does lead to~\cite[Theorem 1]{bayesRiskRaginsky}.
 \end{remark}
 \begin{remark}[Simplification]
      If $\varphi^\star(0) \leq 0$, the expressions in~\Cref{eq:fDivBoundBayesRiskIncr,eq:fDivBoundBayesRiskDecr} can be respectively simplified as follows, if $\varphi$ is non-decreasing:
		\begin{align}
			\Pm_{W\hat{W}}(\ell(W, \hat{W})) \geq \rho\left(1- L_W(\hat{W},\rho)\cdot \varphi^{-1}\left(\frac{I_{\varphi}(W, \hat{W})}{L_W(\hat{W},\rho)}\right)\right), \label{eq:fDivBoundBayesRiskSimplifiedIncr}
		\end{align}
     while if $\varphi$ is non-increasing:
        \begin{align}
			\Pm_{W\hat{W}}(\ell(W, \hat{W})) \geq \rho\left(1- L_W(\hat{W},\rho)\right)\cdot \varphi^{-1}\left(\frac{I_{\varphi}(W, \hat{W})}{1-L_W(\hat{W},\rho)}\right). \label{eq:fDivBoundBayesRiskSimplifiedDecr}
		\end{align}
  The assumption $\varphi^\star(0)\leq 0$ holds indeed true in a variety of cases (cf.~\Cref{thm:lowerBoundHellinger}).
 \end{remark}

	Although~\Cref{thm:fDivBoundBayesRisk} represents a quite general result, in order to apply it to the Bayesian Risk setting (and provide an estimator-independent lower-bound) one has to carefully select $\varphi$. In particular, one has to render the right-hand side of~\Cref{eq:fDivBoundBayesRiskIncr} (or~\Cref{eq:fDivBoundBayesRiskSimplifiedIncr}) independent of $\hat{W}=\phi(X)$. In order to do that, the following two quantities need to be rendered independent of $\hat{W}$:
	\begin{enumerate}
		\item The information measure (\textit{e.g.}, through the data-processing inequality $I_{\varphi}(W, \hat{W}) \leq I_{\varphi}(W, X)$); \label{ineq1}
		\item The quantity $L_W(\hat{W}, \rho)$ (which can be easily upper-bounded in the following way: \\$L_W(\hat{W}, \rho) \leq \sup_{\hat{w}} L_W(\hat{w}, \rho) = L_W(\rho)$). \label{ineq2}
	\end{enumerate}
	For simplicity, consider~\Cref{eq:fDivBoundBayesRiskSimplifiedIncr} and introduce the following object
	\begin{equation}\label{eq:functional_G}
	G_\varphi(I_\varphi, L_W) = L_W(\hat{W},\rho)\cdot \varphi^{-1}\left(\frac{I_{\varphi}(W, \hat{W})}{L_W(\hat{W},\rho)}\right).
	\end{equation}
	To use the two inequalities just stated in~\Cref{ineq1}) and~\Cref{ineq2}), one needs that for a given choice of $\varphi$, $G_\varphi(I_{\varphi}, L_W)$ is increasing in $I_{\varphi}$ for a given value of $L_W$ and vice versa. This allows us to further lower bound~\Cref{eq:fDivBoundBayesRiskSimplifiedIncr} and render the quantity independent of the specific choice of $\phi$. Analogously, one can state similar assumptions in order to apply the same reasoning to~\Cref{eq:fDivBoundBayesRiskSimplifiedDecr}. Hence, starting from~\Cref{risk} one can provide a lower bound on the risk $R_B$ that is independent of $\phi$.\\
	Let us now look at some specific choices of $\varphi$ such that $G_\varphi$ satisfies the desired properties and, thus, for which a bound on the Bayesian risk can be retrieved. 
	\begin{corollary}\label{thm:lowerBoundHellinger}
		Consider the Bayesian framework described in~\Cref{sec:bayesianFramework}. The following must hold for every $p>1$ and $\rho>0$:
		\begin{equation}
			R_B\geq \rho\left(1- L_W(\rho)^{\frac{p-1}{p}}\cdot  \left((p-1) \mathcal{H}_p(W,X) + 1\right) ^\frac{1}{p} \right).\label{eq:lowerBoundHellinger}
		\end{equation}
	\end{corollary}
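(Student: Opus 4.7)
The plan is to instantiate~\Cref{thm:fDivBoundBayesRisk} at the specific choice $\varphi_p(t) = \frac{t^p - 1}{p-1}$, which for $p > 1$ is strictly convex and non-decreasing on $[0, +\infty)$, satisfies $\varphi_p(1) = 0$, and whose induced $\varphi$-divergence is precisely the Hellinger $p$-divergence. Consequently $I_{\varphi_p}(W, \hat{W}) = \mathcal{H}_p(W, \hat{W})$, and the generalised inverse admits the closed form $\varphi_p^{-1}(y) = ((p-1)y + 1)^{1/p}$ on its natural domain.

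Next I would compute $\varphi_p^{\star}(0)$. Since $\varphi_p$ is non-decreasing on $[0, +\infty)$ with minimum at $t = 0$, one has $\varphi_p^{\star}(0) = -\min_{t \geq 0} \varphi_p(t) = -\varphi_p(0) = \tfrac{1}{p-1} > 0$. Because $\varphi_p^{\star}(0)$ is strictly positive, the simplified form~\Cref{eq:fDivBoundBayesRiskSimplifiedIncr} cannot be directly invoked; instead I substitute into the full non-decreasing form~\Cref{eq:fDivBoundBayesRiskIncr}. The argument of $\varphi_p^{-1}$ becomes
\[
\frac{\mathcal{H}_p(W, \hat{W}) + (1 - L_W(\hat{W}, \rho))/(p-1)}{L_W(\hat{W}, \rho)},
\]
and after multiplying $(p-1)$ through and adding $1$, the $-L_W(\hat{W},\rho)$ term cancels exactly against the constant $+1$, so $\varphi_p^{-1}$ of this quantity collapses to $\bigl(((p-1)\mathcal{H}_p(W,\hat{W}) + 1)/L_W(\hat{W},\rho)\bigr)^{1/p}$. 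Multiplying by $L_W(\hat{W},\rho)$ yields $L_W(\hat{W}, \rho)^{(p-1)/p} \cdot \bigl((p-1)\mathcal{H}_p(W, \hat{W}) + 1\bigr)^{1/p}$, which is the desired form but still depends on $\hat{W}$.

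To eliminate the estimator, observe that the two-variable functional $(L, I) \mapsto L^{(p-1)/p}((p-1) I + 1)^{1/p}$ is jointly non-decreasing in $L \in [0,1]$ and $I \geq 0$. Hence the inequality is preserved under the two relaxations $L_W(\hat{W}, \rho) \leq L_W(\rho)$ (from the definition of the small-ball probability~\eqref{smallBall}) and $\mathcal{H}_p(W, \hat{W}) \leq \mathcal{H}_p(W, X)$ (from the data-processing inequality~\eqref{eq:DPI} for $\varphi$-divergences along the Markov chain $W - X - \hat{W}$). Since the resulting right-hand side no longer depends on $\phi$, taking the infimum over $\phi$ on the left-hand side of~\Cref{risk} produces $R_B$ and yields exactly~\Cref{eq:lowerBoundHellinger}.

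The only genuine obstacle is the observation that $\varphi_p^{\star}(0) > 0$, which rules out the cleaner simplified form and forces the explicit algebra above; fortunately the additive $(1 - L)/(p-1)$ term cancels cleanly with the constant $+1$ inside $\varphi_p^{-1}$, leaving the stated exponent $(p-1)/p$ on $L_W(\rho)$. Once this cancellation is verified, joint monotonicity of the resulting functional makes the estimator-independent conclusion immediate.
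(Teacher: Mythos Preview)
Your proposal is correct and follows essentially the same route as the paper's proof: instantiate \Cref{thm:fDivBoundBayesRisk} with $\varphi(x)=\frac{x^p-1}{p-1}$, compute $\varphi^{-1}$ and $\varphi^\star(0)$, simplify, and then remove the dependence on $\hat{W}$ via the data-processing inequality together with $L_W(\hat{W},\rho)\leq L_W(\rho)$. If anything, you are more explicit than the paper about the fact that $\varphi_p^\star(0)=\tfrac{1}{p-1}>0$ forces the use of the full form~\eqref{eq:fDivBoundBayesRiskIncr} rather than the simplified~\eqref{eq:fDivBoundBayesRiskSimplifiedIncr}, and about the algebraic cancellation that yields the exponent $(p-1)/p$ on $L_W(\rho)$.
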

	The proof of~\Cref{thm:lowerBoundHellinger} is in~\Cref{app:proofHellinger}.
	Restricting the choice of $\varphi$ to this family of polynomials one can thus state the following lower-bound on the risk:
	\begin{equation}\label{equation:lowerBoundHellinger}
		R_B\geq \sup_{\rho > 0} \sup_{p > 1}\rho\!\left(1- L_W(\rho)^{\frac{p-1}{p}}\cdot  \left((p-1) \mathcal{H}_p(W,\hat{W}) + 1\right) ^\frac{1}{p} \right).
	\end{equation}
	\begin{remark}\label{rmk:lowerBoundRenyiDivergence}
		Using the one-to-one mapping connecting Hellinger divergences and R\'enyi's $\alpha$-Divergence, the bound above can be re-written as follows:
		\begin{align}
			R_B\geq \sup_{\rho >0} \sup_{\alpha>1} \rho\left(1- L_W(\rho)^{\frac{\alpha-1}{\alpha}}\cdot 
			\exp\left(\frac{\alpha-1}{\alpha} D_\alpha(\Pm_{W\hat{W}}\|\Pm_W\Pm_{\hat{W}})\right) \right).
		\end{align}
        Moreover, via this relationship one can see that, for a given $\alpha>1$:
	\begin{align}
		\exp\left(\frac{\alpha-1}{\alpha}I_\alpha(W,X^n)\right)  &=\exp\left(\frac{\alpha-1}{\alpha} \inf_{\Q_{X^n}}D_\alpha(\Pm_{WX^n}\|\Pm_W\Q_{X^n})\right) \\
		&\leq \exp\left(\frac{\alpha-1}{\alpha} D_\alpha(\Pm_{WX^n}\|\Pm_W\Pm_{X^n})\right) \\
		&= \left(\mathcal{H}_\alpha(W,X^n)\right)^\frac{1}{\alpha}.
	\end{align}
 Consequently, given the similarity between~\Cref{eq:lowerBoundHellinger} and~\Cref{eq:sibsMILowerBound} one can easily see that leveraging $I_\alpha$, when possible, provides a larger lower-bound than the Hellinger divergence. However, as we will see, in a variety of settings, the Hellinger $\alpha$-divergence can be computed explicitly while $I_\alpha$ cannot. For this reason, we will often leverage the Hellinger divergence to provide closed-form lower bounds on the Risk.
	\end{remark}
	Clearly, a number of results can be derived from~\Cref{thm:fDivBoundBayesRisk}. Each of them with potentially interesting applications in specific Bayesian Estimation settings. In this work, we will mostly focus on Sibson's $\alpha$-Mutual Information and Hellinger $p$-Divergences. In the spirit of leveraging the generality of~\Cref{thm:fDivBoundBayesRisk}, we also provide a bound involving a novel information measure $E_{\gamma,\zeta}$, strongly inspired by the so-called $E_\gamma$-Divergence~\cite[Equation~(66)]{fDivegerceInequalities},~\cite[Page~2314]{polyianksiyConverse}, also known in the literature as the Hockey-stick Divergence and whose application in this framework has been explored in~\citep{CalmonSDPIHockeyStick}. Its definition is the following:
	\begin{definition}
		Let $(\Omega,\F)$ be a measurable space and let $\mu$ and $\nu$ be two probability measures defined on the space.  Denote with $\varphi_{\gamma,\zeta}(x)= \max\{0,\zeta x-\gamma \}-\max\{0, \zeta-\gamma\}$ with $\zeta> 0$ and $\gamma\geq 0$. The function $\varphi_{\gamma,\zeta}(x)$ is convex, increasing and is such that $\varphi_{\gamma,\zeta}(1)=0$. Assume that $\nu\ll\mu$, then  define the following object:
		\begin{equation}
			E_{\gamma,\zeta}(\nu\|\mu) = D_{\varphi_{\gamma,\zeta}}(\nu\|\mu).
		\end{equation}
		Moreover, whenever $\nu=\Pm_{XY}$ and $\mu=\Pm_X\Pm_Y$  we denote (with a slight abuse of notation) $E_{\gamma,\zeta}(\Pm_{XY}\|\Pm_{XY})$ with $E_{\gamma,\zeta}(X,Y).$
		If $\zeta = 1$ then one recovers the usual $E_\gamma$-divergence.
	\end{definition}
	Leveraging it, one can provide the following result in this framework:
	\begin{corollary}\label{thm:lowerBoundGeneralHockeyStick}
		Consider the Bayesian framework described in~\Cref{sec:bayesianFramework}. The following must hold for every $\zeta>0$, $\gamma\geq 0$, and $\rho>0$:
		\begin{equation}\label{eq:hockeyStickGeneralBound}
			R_B\geq \rho\left(1 - \frac{E_{\gamma, \zeta}(W,\hat{W}) + \gamma L_W(\rho) + \max\{0,\zeta-\gamma\}}{\zeta}\right).
		\end{equation}
	\end{corollary}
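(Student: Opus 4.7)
The plan is to follow the same Markov-inequality scaffold as Theorem~\ref{thm:fDivBoundBayesRisk}, but to sidestep its variational representation: $\varphi_{\gamma,\zeta}$ is only piecewise linear (hence not strictly convex) and its generalised inverse is awkward, so it is cleaner to exploit the affine-plus-truncation structure of $\varphi_{\gamma,\zeta}$ by hand. The target is to upper-bound the small-ball probability $\Pm_{W\hat W}(\ell(W,\hat W)<\rho)$ in terms of $E_{\gamma,\zeta}(W,\hat W)$ and the corresponding small-ball probability under the product of the marginals, and then invoke Markov.

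First, I unfold the definition. Writing $R=d\Pm_{W\hat W}/d\Pm_W\Pm_{\hat W}$ and using $\varphi_{\gamma,\zeta}(x)=\max\{0,\zeta x-\gamma\}-\max\{0,\zeta-\gamma\}$, integration against $\Pm_W\Pm_{\hat W}$ gives
\begin{equation}
E_{\gamma,\zeta}(W,\hat W)+\max\{0,\zeta-\gamma\}=\Pm_W\Pm_{\hat W}\!\bigl(\max\{0,\zeta R-\gamma\}\bigr). \notag
\end{equation}
Next, fix the event $A=\{\ell(W,\hat W)<\rho\}$ and split the expectation on the right into integrals over $A$ and $A^{c}$. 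On $A$ apply the pointwise bound $\max\{0,\zeta R-\gamma\}\geq \zeta R-\gamma$ and integrate: since $\int_A R\,d\Pm_W\Pm_{\hat W}=\Pm_{W\hat W}(A)$, this contributes $\zeta\,\Pm_{W\hat W}(A)-\gamma\,\Pm_W\Pm_{\hat W}(A)$. On $A^{c}$ use $\max\{0,\zeta R-\gamma\}\geq 0$. Summing yields
\begin{equation}
E_{\gamma,\zeta}(W,\hat W)+\max\{0,\zeta-\gamma\}\;\geq\;\zeta\,\Pm_{W\hat W}(\ell(W,\hat W)<\rho)\,-\,\gamma\,L_W(\hat W,\rho). \notag
\end{equation}
Using $L_W(\hat W,\rho)\leq L_W(\rho)$ and $\zeta>0$, rearrangement delivers the upper bound
\[
\Pm_{W\hat W}(\ell(W,\hat W)<\rho)\;\leq\;\frac{E_{\gamma,\zeta}(W,\hat W)+\gamma\,L_W(\rho)+\max\{0,\zeta-\gamma\}}{\zeta}.
\]

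Finally, Markov's inequality supplies $\Pm_{W\hat W}(\ell(W,\hat W))\geq \rho\bigl(1-\Pm_{W\hat W}(\ell(W,\hat W)<\rho)\bigr)$; substituting the previous upper bound and then taking the infimum over estimators $\phi$ to form $R_B$ produces~\Cref{eq:hockeyStickGeneralBound}. If one wants a fully estimator-independent expression, the data-processing inequality for $\varphi$-divergences (\Cref{eq:DPI}) allows $E_{\gamma,\zeta}(W,\hat W)$ to be replaced by $E_{\gamma,\zeta}(W,X)$ at no cost, since the right-hand side is monotone decreasing in the divergence. The only delicate point, and hardly a real obstacle, is the sign bookkeeping attached to the constant $\max\{0,\zeta-\gamma\}$: this term is exactly what normalises $\varphi_{\gamma,\zeta}(1)=0$ in both regimes $\zeta\geq\gamma$ and $\zeta<\gamma$, and it must be carried through the split-integral step to land on the stated formula.
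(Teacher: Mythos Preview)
Your argument is correct, but it proceeds along a genuinely different route from the paper. The paper's proof is a one-line specialisation of~\Cref{thm:fDivBoundBayesRisk}: it plugs $\varphi=\varphi_{\gamma,\zeta}$ into~\Cref{eq:fDivBoundBayesRiskIncr}, together with the explicit expressions $\varphi^{-1}(y)=(y+\max\{0,\zeta-\gamma\}+\gamma)/\zeta$ and $\varphi^\star(0)=\max\{0,\zeta-\gamma\}$, and the algebra collapses to~\Cref{eq:hockeyStickGeneralBound}. You instead bypass the general machinery entirely and exploit the piecewise-linear structure of $\varphi_{\gamma,\zeta}$ directly: the pointwise bounds $\max\{0,\zeta R-\gamma\}\geq \zeta R-\gamma$ on $A$ and $\geq 0$ on $A^{c}$ are precisely what the Legendre--Fenchel dual step in the proof of~\Cref{thm:fDivBoundBayesRisk} would deliver after optimising over $\lambda$, but you get there in two lines without any duality. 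Your approach has the merit of sidestepping the strict-convexity hypothesis in~\Cref{thm:fDivBoundBayesRisk}, which $\varphi_{\gamma,\zeta}$ fails; the paper's approach has the merit of exhibiting the result as a mechanical instance of the unified framework, at the price of glossing over that hypothesis. One minor phrasing point: the DPI step does not come ``at no cost''---replacing $E_{\gamma,\zeta}(W,\hat W)$ by the larger $E_{\gamma,\zeta}(W,X)$ weakens the bound---but since the stated corollary keeps $\hat W$, this is indeed only an optional addendum.
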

	
	One can thus retrieve the following lower-bound on the risk:
	\begin{equation}\label{eq:lowerBoundHockeyStick}
		R_B\geq \sup_{\rho>0}\sup_{\zeta>0, \gamma \geq 0}\rho\left(1 - \frac{E_{\gamma, \zeta}(W,\hat{W}) + \gamma L_{W}(\rho)+ \max\{0,\zeta-\gamma\}}{\zeta}\right).
	\end{equation}
 The proof is in~\Cref{app:proofHockeyStick}.
	\begin{remark}
		Setting $\zeta = 1$ in~\Cref{eq:hockeyStickGeneralBound} one recovers~\cite[Remark 1]{CalmonSDPIHockeyStick}.
		In fact, by introducing an additional degree of freedom through the $\zeta$ parameter in~\Cref{eq:lowerBoundHockeyStick}, the resulting lower-bound can only be tighter than~\cite[Remark 1]{CalmonSDPIHockeyStick}.
	\end{remark}
	Using these results one can provide meaningful lower bounds on the Risk in a variety of settings of interest, as we will see in~\Cref{sec:bayesRiskExample}.
    We will now explore a few natural extensions over the framework just presented that follow from either a slight change of perspective or from a slight alteration of the observation model considered. \subsection{Leveraging SDPIs}\label{sec:estimationNoise}
    A key step in the results proved here (as well as in~\citep{bayesRiskRaginsky}) consists of leveraging the Markov Chain $W-X-\hat{W}$ along with DPI as follows: $I_\varphi(W,\hat{W})\leq I_\varphi(W,X^n)$. In case more information is available with respect to the kernel linking $X^n$ and $\hat{W}$ (\textit{i.e.}, more information on how the estimation is executed) then one can leverage the corresponding SDPI inequality as follows:$$I_\varphi(W,\hat{W})\leq I_\varphi(W,X^n)\eta_\varphi(\Pm_{X^n},\Pm_{\hat{W}|X^n}) \leq I_\varphi(W,X^n)\eta_\varphi(\Pm_{\hat{W}|X^n}),
	$$
    potentially providing a refinement over the results that can be advanced. Moreover,
    in a variety of settings one does not have direct access to samples (but rather access to noisy copies or privatised versions). In this case, one can provide refinements over the bounds via SDPI's and, consequently, provide results tailored to the type of noise that has been injected. 
    Consider the following setting in which the samples $X^n$ generated from $W$ are not directly observed but rather a sequence $Z^n$, where $Z^n$ is a noisy/privatised version of $X^n$ obtained through the sequence of Markov  Kernel $K_1,\ldots,K_n$ where for every $i\geq 1$, $K_i: \mathcal{F}\times\mathcal{X}\to [0,1]$ and $\Pm_{Z_i} = \Pm_{X_i}K_{i}$. The goal is to estimate $W$ from the sequence $Z^n$ via an estimator $\psi:\Z^n\to\W$. Given a loss function $\ell:\mathcal{W}\times\hat{\mathcal{W}}\to \mathbb{R}^+$, the noisy Bayesian Risk is thus
    defined as
	\begin{equation}
     R_B^{\text{noisy}} = \inf_{\psi} \Pm_{WZ}(\ell(W, \psi(Z^n)))=\inf_{\psi} \Pm_{W\hat{W}}(\ell(W, \hat{W})),
	\end{equation}
    and it can be lower-bounded similarly to the non-private/noisy case via~\Cref{thm:fDivBoundBayesRisk}. For simplicity of exposition, we will consider a single kernel $K$ and, consequently, one has that $Z^n$ is obtained from $X^n$ through the tensor-product $K^{\otimes n}$. In this case, one has the Markov chain $W-X^n-Z^n-\hat{W}$ and can leverage SDPI twice as follows:
    \begin{equation}
        I_\varphi(W,\hat{W}) \leq \eta_{\varphi}(\Pm_{W|Z^n})I_\varphi(W,Z^n) \leq \eta_{\varphi}(\Pm_{W|Z^n})\eta_{\varphi}(\Pm_{X^n},K^{\otimes n})I_\varphi(W,X^n),
    \end{equation}
     and consequently state the following result
    \begin{corollary}\label{thm:noisyEstimation}
    Consider the private Bayesian framework considered above. Denote with $Z^n$ the private samples obtained from $X^n$ through the tensor product of a kernel $K$.
    Let $\varphi:[0, +\infty) \to \mathbb{R}$ be a monotone convex function and suppose that the generalised inverse, defined as $\varphi^{-1}(y) = \inf\{t\geq 0 : \varphi(t) > y\}$, exists. Assume as well that the function $G_{\varphi}$ defined in~\Cref{eq:functional_G} is non-decreasing in both arguments. Then, for every estimator $\hat{W}$, if $\varphi$ is non-decreasing one has the following
		\begin{equation}
			\Pm_{W\hat{W}}(\ell(W,\hat{W})) \geq \sup_{\rho>0} \rho \left(1- L_W(\hat{W},\rho)\cdot \varphi^{-1}\left(\frac{\eta_{\varphi}(\Pm_{\hat{W}|Z^n})\eta_{\varphi}(\Pm_{X^n},K^{\otimes n})I_{\varphi}(W, X^n)+(1-L_W(\rho))\varphi^\star(0)}{L_W(\rho)}\right)\right),\label{eq:fDivBoundBayesRiskIncrLDP}
		\end{equation}
    whereas if $\varphi$ is non-increasing one recovers the following:
            \begin{equation}
			\Pm_{W\hat{W}}(\ell(W,\hat{W})) \geq \sup_{\rho>0} \rho \left(1-L_W(\hat{W},\rho)\right)\cdot \varphi^{-1}\left(\frac{\eta_\varphi(\Pm_{\hat{W}|Z^n})\eta_{\varphi}(\Pm_{X^n},K^{\otimes n})I_{\varphi}(W, X^n)+L_W(\rho)\cdot\varphi^\star(0)}{1-L_W(\rho)}\right). \label{eq:fDivBoundBayesRiskDecrLDP}
		\end{equation}
	
    \end{corollary}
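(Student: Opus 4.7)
The plan is to reduce the statement to Theorem \ref{thm:fDivBoundBayesRisk} applied to the pair $(W,\hat{W})$, and then to sharpen the resulting bound by exploiting the additional Markov structure $W-X^n-Z^n-\hat{W}$ via two successive applications of the strong data-processing inequality. Concretely, first I would invoke Theorem \ref{thm:fDivBoundBayesRisk} verbatim, which, depending on the monotonicity of $\varphi$, yields
\begin{equation*}
\Pm_{W\hat{W}}(\ell(W,\hat{W})) \geq \rho\left(1- L_W(\hat{W},\rho)\cdot \varphi^{-1}\!\left(\frac{I_\varphi(W,\hat{W})+(1-L_W(\hat{W},\rho))\varphi^\star(0)}{L_W(\hat{W},\rho)}\right)\right),
\end{equation*}
or the analogous expression when $\varphi$ is non-increasing. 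Note that this inequality still involves the estimator $\hat{W}$ both through $I_\varphi$ and through $L_W(\hat{W},\rho)$.

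Next, to render the bound estimator-independent, I would chain SDPIs along the Markov chain. Writing $\Pm_{W\hat{W}}$ as the image of $\Pm_{WZ^n}$ under the kernel $(w,z^n)\mapsto(w,\Pm_{\hat{W}|Z^n}(\cdot|z^n))$, the definition of $\eta_\varphi$ in Section \ref{sec:SDPI} yields
\begin{equation*}
I_\varphi(W,\hat{W}) \leq \eta_\varphi(\Pm_{\hat{W}|Z^n})\, I_\varphi(W,Z^n).
\end{equation*}
Similarly, viewing $\Pm_{WZ^n}$ as the image of $\Pm_{WX^n}$ under the product channel $\mathrm{id}\otimes K^{\otimes n}$, and using the conditional/product form of the contraction coefficient, I obtain
\begin{equation*}
I_\varphi(W,Z^n) \leq \eta_\varphi(\Pm_{X^n},K^{\otimes n})\, I_\varphi(W,X^n),
\end{equation*}
so that composing gives the double-contraction bound on $I_\varphi(W,\hat{W})$ announced in the corollary. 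Parallel to this, I replace $L_W(\hat{W},\rho)$ by its worst-case version $L_W(\rho)=\sup_{\hat{w}}L_W(\hat{w},\rho)$ defined in \eqref{smallBall}.

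Finally, because the hypothesis asserts that $G_\varphi(I_\varphi,L_W)$ defined in \eqref{eq:functional_G} is non-decreasing in each of its two arguments (and because in the decreasing-$\varphi$ formulation the analogous monotonicity in $I_\varphi$ still allows the substitution), replacing $I_\varphi(W,\hat{W})$ and $L_W(\hat{W},\rho)$ by their respective upper bounds only makes the right-hand side of the Theorem \ref{thm:fDivBoundBayesRisk} inequality smaller, preserving the lower bound on $\Pm_{W\hat{W}}(\ell(W,\hat{W}))$. Taking $\sup_{\rho>0}$ and noting that the infimum over estimators $\psi$ affects neither $I_\varphi(W,X^n)$ nor $L_W(\rho)$ yields the stated bounds on $R_B^{\text{noisy}}$.

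The main obstacle I expect is bookkeeping around the contraction coefficient $\eta_\varphi(\Pm_{X^n},K^{\otimes n})$: strictly speaking, SDPI is applied to the joint channel $\mathrm{id}\otimes K^{\otimes n}$ acting on $(W,X^n)$ with reference measure $\Pm_W\Pm_{X^n}$, so one has to argue that the relevant contraction coefficient reduces to the one for the kernel $K^{\otimes n}$ at the marginal $\Pm_{X^n}$, which follows from the fact that appending an independent coordinate does not alter the $\varphi$-divergence between conditional laws. A secondary but minor care point is the non-increasing case, where $\varphi^{-1}$ reverses monotonicity; one must check that the rewriting $G_\varphi$-monotonicity assumption is compatible with substituting the SDPI upper bound inside $\varphi^{-1}$, which it is because the argument of $\varphi^{-1}$ is weighted by $1-L_W$ rather than $L_W$ in that case.
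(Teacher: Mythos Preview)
Your proposal is correct and follows essentially the same route as the paper: the paper simply states the SDPI chain
\[
I_\varphi(W,\hat{W}) \leq \eta_{\varphi}(\Pm_{\hat{W}|Z^n})\, I_\varphi(W,Z^n) \leq \eta_{\varphi}(\Pm_{\hat{W}|Z^n})\,\eta_{\varphi}(\Pm_{X^n},K^{\otimes n})\, I_\varphi(W,X^n)
\]
and then invokes Theorem~\ref{thm:fDivBoundBayesRisk} together with the monotonicity of $G_\varphi$ to conclude. Your write-up is in fact more careful than the paper's, since you flag the (correct) reduction of the contraction coefficient for $\mathrm{id}\otimes K^{\otimes n}$ to that of $K^{\otimes n}$ at $\Pm_{X^n}$, a point the paper leaves implicit.
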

    \begin{remark}
        Notice that in case one does \emph{not} estimate from noisy observations $Z^n$ then one can still consider the same setting as in~\Cref{thm:noisyEstimation} with $K$ representing the kernel associated to the identity mapping \textit{i.e.}, $K(y|x) = \delta_{x}(y)$. In this case one has that $\Pm_{X^n}=\Pm_{Z^n}$, $\eta_{\varphi}(\Pm_{X^n},K^{\otimes n})=1$ and, consequently, $\eta_\varphi(\Pm_{\hat{W}|Z^n})=\eta_\varphi(\Pm_{\hat{W}|X^n}) $. Hence,~\Cref{thm:noisyEstimation} boils down to a refinement of~\Cref{thm:fDivBoundBayesRisk} without any additional noise injection.
    \end{remark}
    \begin{remark}
        In case one is considering $E_{\gamma,\zeta}$ with $\zeta=1$, then the corresponding contraction parameter has been analysed in~\cite{CalmonSDPIHockeyStick}, where contraction has been shown to be equivalent to Local Differential-Privacy (LDP). \textit{I.e.}, a kernel $K$ is said to be $(\epsilon,\delta)$ ``Locally-Differentially private'' (LDP), if:
    \begin{equation}
        \max_{E\in\mathcal{F},x,\hat{x}} |K(E|x)-e^\epsilon K(E|\hat{x})| \leq \delta.
    \end{equation}
    In this case, one has that $\eta_{\varphi_{e^\epsilon}}(K) \leq \delta $. Moreover, due to~\cite[Lemma 2]{CalmonSDPIHockeyStick} one has that if $K$ is $(\epsilon,\delta)$-LDP then for every $\varphi$ on has that $\eta_{\varphi}(K) \leq (1-(1-\delta)e^{-\epsilon}).$ It is unclear, however, whether there are settings in which said upper-bound is more convenient than others. Indeed, one also has that for every convex function $\varphi$ 
    \begin{equation}
        \eta_{\varphi}(K) \leq \eta_{TV}(K) < (1-(1-\delta)e^{-\epsilon}),
    \end{equation}
    and, for many channels, $\eta_{TV}(K)$ is relatively easy to compute. Moreover, even $\eta_{TV}(K)$ tends to be quite larger than the effective contraction coefficient of the divergence at hand: \textit{e.g.}, if $K=\text{BSC}(\lambda_\epsilon)$ with $\lambda_\epsilon = \frac{1}{1+e^{\epsilon}}$ then $K$ is $(\epsilon,0)$-LDP (cf.~\cite[Example 1]{CalmonSDPIHockeyStick}) and if $\varphi$ is operator-convex (\textit{e.g.}, $\varphi(x)=x\log x$ or $\frac{x^p-1}{p-1} $with $1< p \leq 2$, etc.) then~\cite[Corollary 3.1]{sdpiRaginsky}:\begin{equation}
        \eta_{\varphi}(K) = \left(1-\frac{2}{1+e^{\epsilon}}\right)^2 \ll \left|1-\frac{2}{1+e^{\epsilon}}\right| < (1-e^{-\epsilon}).
    \end{equation}
    \end{remark}
An important feature of~\Cref{thm:noisyEstimation} is that for a subclass of functions $\varphi$ one can leverage tensorisation properties of $\eta_\varphi$. Indeed, if $\varphi$ satisfies the conditions of~\cite[Theorem 3.9]{sdpiRaginsky} then one has that
\begin{equation}
    \eta_\varphi(\mu^{\otimes n},K^{\otimes n}) = \eta_\varphi(\mu,K).\label{eq:tensorisationMax}
\end{equation}
Moreover, if $\varphi$ is operator convex, then one can say the following:
\begin{equation}
    \eta_\varphi(K^{\otimes n}) = \eta_{\chi^2} (K^{\otimes n})\leq 1-(1-\eta_\varphi(K))^n.\label{eq:tensorisationPower}
\end{equation}
Both the results are true for instance, for $\varphi(x) = (x^p-1)/(p-1)$ with $1\leq p \leq 2$ (but the assumptions of~\cite[Theorem 3.9]{sdpiRaginsky} are violated if $p>2$). This means that one can leverage~\Crefrange{eq:tensorisationMax}{eq:tensorisationPower} for the Hellinger divergence $\mathcal{H}_p$ with $1\leq p \leq 2$.
\subsection{Conditioning}\label{sec:conditioning}
	Following the approach undertaken in~\citep{bayesRiskRaginsky}, it is also possible to propose a conditional version of the theorems proposed above, in order to retrieve tighter bounds. For this to happen one needs a definition of conditional information measures. For $\varphi$--Divergences the choice would typically fall on objects of the following form
	\begin{equation}
		I_\varphi(X,Y|Z)= D_\varphi(\Pm_{XYZ}\|\Pm_Z \Pm_{X|Z}\Pm_{Y|Z}).
	\end{equation}
	As for Sibson's $I_\alpha$, the matter becomes slightly more complicated as $I_\alpha(X,Y)=\min_{\Q_Y}D_\alpha(\Pm_{XY}\|\Pm_X\Q_Y)$. In the case of three random variables, it is unclear which factorisation of the joint and which minimisation to consider. Indeed, it has been shown in~\citep{conditionalSibsMI} that several definitions of conditional $I_\alpha$ can be proposed, depending on the operational meaning and corresponding probability bound one needs. In this subsection, we will consider the following conditional version of $I_\alpha$:
	\begin{equation}
		I^{Y|Z}_\alpha(X,Y|Z) = \min_{\Q_{Y|Z}} D_\alpha(\Pm_{XYZ}\|\Pm_{X|Z}\Q_{Y|Z}\Pm_Z).\label{eq:condSMI}
	\end{equation}
	The choice of this specific definition is necessary in order to provide a conditional version of~\Cref{thm:sibsMIResultBayesRisk} and~\Cref{eq:maximalLeakgeResultBayesRisk} similar to~\cite[Theorem 1, Eq. (5)]{bayesRiskRaginsky}.
	Leveraging said definition and the fact that: $$I^{Y|Z}_\alpha(X,Y|Z)\xrightarrow[]{\alpha\to\infty} \ml{X}{Y|Z}$$  one can thus give a conditional version of~\Cref{thm:sibsMIResultBayesRisk} and~\Cref{eq:maximalLeakgeResultBayesRisk}, introducing the following notion of conditional small-ball probability,
	$
		L_{W|U}(U,\rho) = \sup_{\hat{w}\in\hat{\W}} \Pm_{W|U}(\ell(W,\hat{w})< \rho)$:
	\begin{theorem}
		Consider the Bayesian framework described in~\Cref{sec:bayesianFramework}, 
		\begin{equation}
			R_B\geq \sup_{\mathcal{P}_{U|W,X}}\sup_{\rho>0, \alpha\geq1}\rho\left(1- \exp\left(\frac{\alpha-1}{\alpha}(I_\alpha(W,X|U) + \log(\Pm_U(L_{W|U}(U,\rho)))\right)\right),
		\end{equation}
		
		Moreover, taking the limit of $\alpha\to\infty$ one has:
		\begin{equation}
			R_B\geq \sup_{\mathcal{P}_{U|W,X}}\sup_{\rho>0}\rho\left(1-\exp\left(\ml{W}{X|U}+ \log(\Pm_U(L_{W|U}(U,\rho)))\right)
			\right). 
		\end{equation}
	\end{theorem}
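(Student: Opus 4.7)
The plan is to lift the unconditional argument underlying \Cref{thm:sibsMIResultBayesRisk} to the conditional setting by applying it slicewise on the value $U=u$ of an auxiliary variable drawn via $\mathcal{P}_{U|W,X}$, and then averaging over $U$ with a single application of H\"older's inequality. Since $\hat W=\phi(X)$, conditioning on $U$ does not break the Markov structure, and \Cref{sibsMIBoundCor} applied to the conditional joint $\mathcal{P}_{W\hat{W}|U=u}$ with $E=\{\ell(W,\hat{W})<\rho\}$ gives, for any $\alpha>1$,
\begin{equation*}
\mathcal{P}_{W\hat{W}|U=u}(E)\leq L_{W|U=u}(\rho)^{(\alpha-1)/\alpha}\,\exp\!\left(\tfrac{\alpha-1}{\alpha}\,I_\alpha(W,\hat{W}\,|\,U\!=\!u)\right),
\end{equation*}
where $I_\alpha(\,\cdot\,,\,\cdot\,|\,U\!=\!u)$ denotes the Sibson MI of the conditional joint. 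Taking expectation over $U$ and applying H\"older with conjugate exponents $\alpha/(\alpha-1)$ and $\alpha$ yields
\begin{equation*}
\mathcal{P}_{W\hat{W}}(E)\leq \mathcal{P}_U\!\bigl(L_{W|U}(U,\rho)\bigr)^{(\alpha-1)/\alpha}\cdot \mathbb{E}_U\!\left[\exp\!\bigl((\alpha-1)\,I_\alpha(W,\hat{W}\,|\,U\!=\!u)\bigr)\right]^{1/\alpha}.
\end{equation*}

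To identify the second factor with the conditional quantity in \Cref{eq:condSMI}, I would use the tower property of the Radon--Nikodym derivative: for any kernel $\mathcal{Q}_{\hat{W}|U}$,
\begin{equation*}
\exp\!\bigl((\alpha-1)\,D_\alpha(\mathcal{P}_{W\hat{W}U}\,\|\,\mathcal{P}_{W|U}\mathcal{Q}_{\hat{W}|U}\mathcal{P}_U)\bigr)=\mathbb{E}_U\!\left[\exp\!\bigl((\alpha-1)\,D_\alpha(\mathcal{P}_{W\hat{W}|U}\,\|\,\mathcal{P}_{W|U}\mathcal{Q}_{\hat{W}|U})\bigr)\right],
\end{equation*}
so that minimising $\mathcal{Q}$ after the expectation dominates the slicewise minimisation, giving $\mathbb{E}_U[\exp((\alpha-1)I_\alpha(W,\hat{W}\,|\,U\!=\!u))]\leq \exp((\alpha-1)I^{Y|Z}_\alpha(W,\hat{W}\,|\,U))$. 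Substituting, and then invoking DPI for $I^{Y|Z}_\alpha$ along the Markov chain $W-(X,U)-(\hat{W},U)$ to replace $\hat{W}$ by $X$, followed by Markov's inequality $R_B\geq \rho(1-\mathcal{P}_{W\hat{W}}(E))$ and the suprema over $\rho$, $\alpha$ and $\mathcal{P}_{U|W,X}$, proves the first inequality.

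The maximal leakage bound then follows by letting $\alpha\to\infty$: $\tfrac{\alpha-1}{\alpha}\to 1$ and $I^{Y|Z}_\alpha(W,X\,|\,U)\to\ml{W}{X\,|\,U}$ by the same continuity argument used for the unconditional statement. The main obstacle, in my view, is the careful handling of the conditional definition in \Cref{eq:condSMI}: several inequivalent notions of conditional $I_\alpha$ exist (cf.~\citep{conditionalSibsMI}), and it is precisely this one for which the tower identity above converts the slicewise bound into a genuine statement about $I^{Y|Z}_\alpha$ and for which the required DPI holds; verifying these two compatibilities is the most delicate step, and motivates the particular choice of \Cref{eq:condSMI} over its alternatives.
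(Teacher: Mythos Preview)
Your argument is correct and arrives at the same bound, but the packaging differs from the paper's. The paper proceeds in one shot: it writes $I^{Y|Z}_\alpha(W,\hat W\mid U)$ as the triple nested norm
\[
\tfrac{\alpha}{\alpha-1}\log\Bigl\lVert\,\bigl\lVert\,\lVert\tfrac{d\Pm_{W\hat W U}}{d\Pm_U\Pm_{\hat W|U}\Pm_{W|U}}\rVert_{L^\alpha(\Pm_{W|U})}\bigr\rVert_{L^1(\Pm_{\hat W|U})}\Bigr\rVert_{L^\alpha(\Pm_U)}
\]
and applies a three-layer analogue of \Cref{alphaExpBound} to $\mathbbm{1}_{\{\ell<\rho\}}$, so that both $\Pm_U(L_{W|U}(U,\rho))^{(\alpha-1)/\alpha}$ and the conditional $I_\alpha$ fall out directly. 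You instead factor this nested H\"older into two stages: the inner two layers are absorbed into the slicewise application of \Cref{sibsMIBoundCor}, and your H\"older step over $U$ with exponents $(\tfrac{\alpha}{\alpha-1},\alpha)$ is exactly the outermost layer. The tower identity you invoke is then precisely what certifies the nested-norm representation, so the two routes are equivalent. Your decomposition has the advantage of reducing explicitly to the already-proved unconditional result and of making transparent \emph{why} the particular definition \Cref{eq:condSMI} (rather than its alternatives) is forced: it is the one for which the slicewise optima reassemble via the tower property. The paper's route is more compact but relies on a cited three-variable H\"older result. Both proofs finish identically, via conditional DPI (valid because $\hat W=\phi(X)$ keeps $W$--$X$--$\hat W$ Markov given any $U$) and Markov's inequality.
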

 \begin{proof}
     For the selected choice of conditional Sibson Mutual Information (see~\Cref{eq:condSMI}) one has that 
     
\begin{equation}
    I_\alpha(W,\hat{W}|U) = \frac{\alpha}{\alpha-1} \log \left\lVert \left\lVert \left\lVert\frac{d\Pm_{W,\hat{W},U}}{d\Pm_U\Pm_{\hat{W}|U}\Pm_{W|U}}\right\rVert_{L^\alpha(\Pm_{W|U})}\right\rVert_{L^1(\Pm_{\hat{W}|U})}\right\rVert_{L^\alpha(\Pm_{U})}. 
    \label{eq:nestedNormIalpha}
\end{equation}
Consequently, one can prove via H\"older's inequality a result analogous to~\Cref{alphaExpBound} (cf.~\cite[Theorem 17]{thesis}) which implies then,  selecting $f=\mathbbm{1}_{\{\ell(W,\hat{W})<\rho\}}$,  the following for every $\rho>0,\alpha\geq 1$ and every $U$ 
\begin{align}
      \Pm_{W\hat{W}}(\ell(W,\hat{W})<\rho) &= 
     \Pm_{W\hat{W}U}(\ell(W,\hat{W})<\rho)\\ &\leq \Pm_U^{\frac{\alpha-1} {\alpha}}\left(\esssup_{\Pm_{\hat{W}|U}}\Pm_{W|U}(\ell(W,\hat{W})\leq \rho)\right) \cdot\exp\left(\frac{\alpha-1}{\alpha}I_\alpha(W,\hat{W}|U)\right).\label{eq:boundCondIalpha1} \\
     &\leq \Pm_U^{\frac{\alpha-1}{\alpha}}\left(L_{W|U}(U,\rho)\right) \cdot\exp\left(\frac{\alpha-1}{\alpha}I_\alpha(W,X|U)\right) \\
     &= \exp\left(\frac{\alpha-1}{\alpha} \left(I_\alpha(W,X|U) + \log(\Pm_U(L_{W|U}(U,\rho)))\right)\right)
\end{align}
The statement of the theorem then follows from the same sequence of steps (involving Markov's inequality) that led to~\Cref{thm:sibsMIResultBayesRisk}. Moreover, starting from~\Cref{eq:nestedNormIalpha} and taking the limit of $\alpha\to+\infty$ one recovers the following:
\begin{equation}
    I_\infty(W,X|U) = \log \esssup_{\Pm_U}\left\lVert \esssup_{\Pm_{W|U}}\frac{d\Pm_{W,X,U}}{d\Pm_U\Pm_{X|U}\Pm_{W|U}}\right\rVert_{L^1(\Pm_{X|U})},
\end{equation}
which can be seen as being equal to $\ml{W}{X|U}$ (cf.~\cite[Section III.E]{leakageLong}).
 \end{proof}
	The main idea behind using conditional Mutual Information, as presented in~\cite{bayesRiskRaginsky}, is that by choosing an appropriate $U$ it is possible to control the growth of $I(W;X|U)$ and obtain tighter bounds in some cases. In particular, consider the sequence of $n$ samples $X^n$. If the family $\Pm= \{ \mathcal{P}_{X|W=w}: w\in\mathcal{W} \}$ is a subset of a finite-dimensional exponential family and $W$ has a
	density supported on a compact subset of $\mathbb{R}^d$, choosing $U$ to be a conditionally independent copy $\hat{X^n}$ of $X^n$ (given $W$) the mutual information $I(W;X^n|\hat{X}^n)$ will converge to a constant as $n$ grows (rather than grow with $n$)~\cite{bayesRiskRaginsky}. This property seems to be specific to Shannon's Mutual Information. In the examples considered below, there does not appear to be a suitable $U$ that tightens the bounds further for the divergences considered. Nonetheless, we state the result as it may be of interest in other settings.
	
    \section{Examples of application}\label{sec:bayesRiskExample}
	In this section, we apply the results presented in the previous section to four estimation settings. The first three are classical settings, while the fourth comes from a distributed estimation setting: 
 \begin{itemize}
     \item estimation of the mean of a Bernoulli random variable with parameter $W$, where $W$ is assumed to be uniform between $(0,1)$;
     \item the same setting as above with the difference that one does not observes the samples $X^n$ directly but rather a noisy/privatised version $Z^n$, where each $Z_i$ is assumed to be the outcome of $X_i$ after being passed through a Binary Symmetric Channel with parameter $\lambda$ (BSC($\lambda$));
     \item estimation of the mean of a Gaussian random variable with different prior distributions over the mean;
     \item identification of the biased random variable in a $d$-dimensional vector in a distributed fashion (cf., the ``Hide-and-seek'' problem advanced in~\cite{nipsHideAndSeek}).
 \end{itemize} 
 The loss function for the first three cases will be the $L^1$-distance while for the fourth one, we will consider the $0-1$ loss. For the first three cases, the maximisation over $\rho$ in the lower bounds is carried out analytically (details in \Cref{app:rho_maximization}).
 
	\subsection{Bernoulli Bias}\label{sec:ex1} 
	\begin{example}\label{ex:bernoulliBias}
		Suppose that $W \sim \mathcal{U}([0,1])$ and that for each $i\in[n]$, $X_i|W=w \sim \text{Ber}(w)$. Also, assume that $\ell(w,\hat{w}) = |w-\hat{w}|$.
	\end{example}
	Using the sample mean estimator \textit{i.e.}, $\hat{W} = \frac1n \sum_{i=1}^n X_i$, one has that (cf.~\cite[Equation (20)]{bayesRiskRaginsky}): \begin{equation}R_B\leq \frac{1}{\sqrt{6n}}.\label{eq:bernoulliBiasUpperBound}\end{equation}
	Let us now lower-bound the risk in this setting. First, we find that
    \begin{equation}
        L_W(\rho)=\sup_{\hat{w}}\Pm_W(|W-\hat{w}|<\rho)=2\rho.
    \end{equation} 
    To obtain a lower-bound involving Maximal Leakage, one can see that (details in~\Cref{app:mlBernoulliBias})
    \begin{equation}
        \ml{W}{X^n} \leq \log\left(2+\sqrt{\frac{\pi n}{2}}\right).\label{eq:upperBoundLeakageBernoulliBias}
    \end{equation}
	Substituting~\Cref{eq:upperBoundLeakageBernoulliBias} in~\Cref{eq:maximalLeakgeResultBayesRisk}, along with $L_W(\rho)= 2\rho$, provides us with the following lower-bound on the risk:
	\begin{align}
		R_B&\geq \sup_{\rho>0} \rho\left(1-\exp\left(\ml{W}{X^n}\right)L_W(\rho)\right) \\
		&\geq \sup_{\rho>0} \rho\left(1-\left( 2+ \sqrt{\frac{\pi n}{2}}\right)2\rho\right). \label{eq:mlBernoulliBiasOverRho}
	\end{align}
	The quantity in~\Cref{eq:mlBernoulliBiasOverRho} is a concave function of $\rho$ and thus we can maximise it. In particular, the maximiser is  $\hat{\rho}=\frac{1}{4\left( 2+ \sqrt{\frac{\pi n}{2}}\right)}$ and plugging it in~\Cref{eq:mlBernoulliBiasOverRho} one gets the following:
	\begin{equation}
		R_B\geq \frac{1}{8\left( 2+ \sqrt{\frac{\pi n}{2}}\right)}\label{eq:maximalLeakageBernoulliBias},\end{equation}
	which, for $n$ large enough (\textit{i.e.}, $n\geq 127/\pi \approx 41$), can be further lower-bounded as follows $$R_B\geq \frac{1}{5\sqrt{2\pi n}}.$$
	Surprisingly, Maximal Leakage already offers a lower-bound that matches the upper-bound up to a constant (cf.~\Cref{eq:bernoulliBiasUpperBound}) without any extra machinery.~\Cref{eq:maximalLeakageBernoulliBias} provides a larger lower-bound than the one provided using Mutual Information (cf.~\cite[Corollary 2]{bayesRiskRaginsky}) for $n\geq 1$. Moreover, the proof in \citep{bayesRiskRaginsky} needs a more complicated setting involving a conditioning with respect to an independent copy of $X^n$ and can only provide an \textit{asymptotic} lower bound on the risk of $1/(16\sqrt{2\pi n})$ (that thus, only holds for $n$ large enough).\\
	On the contrary, given the closed-form expression, Maximal Leakage can be quite easy to compute or upper-bound.
	Moreover, the information measure depends on $\Pm_W$ only through the support. This means that if one has access to an upper-bound on $L_W(\rho)$ that does not employ any knowledge of $\Pm_W$ except for the support (\textit{e.g.}, if $W$ were to be discrete, an upper-bound of $1$ over the probability mass function could suffice) the resulting lower-bound on the risk (in this example), would apply to any $W$ whose support is the interval $[0,1]$. \\
	One can also provide a more general lower-bound involving $I_\alpha$. Indeed, one has that (details in~\Cref{app:sibsonBernoulliBias}), in this setting:
	\begin{equation}
		\exp\left(\frac{\alpha-1}{\alpha}I_\alpha(W,X^n)\right) = \sum_{k=0}^n \binom{n}{k}\left(\frac{\Gamma(k\alpha+1)\Gamma((n-k)\alpha+1)}{\Gamma(n\alpha+2)}\right)^\frac{1}{\alpha}.\label{eq:iAlphaBernoulliBiasValue}
	\end{equation}
	Plugging~\Cref{eq:iAlphaBernoulliBiasValue} in~\Cref{eq:sibsMILowerBound} one obtains the following lower-bound on the risk:
	\begin{equation}
		R_B\geq \sup_{\rho>0}\sup_{\alpha>1}\rho\left(1-(2\rho)^\frac{\alpha-1}{\alpha}\exp\left(\frac{\alpha-1}{\alpha}I_\alpha(W,X^n)\right)\right) \label{eq:iAlphaBernoulliBias}.
	\end{equation}
	The lower-bound in~\Cref{eq:iAlphaBernoulliBias} can clearly only improve the one provided in~\Cref{eq:mlBernoulliBiasOverRho}, as $\ml{W}{X^n}=I_\infty(W,X^n)>I_\alpha(W,X^n)$ for every $\alpha<\infty$. However, differently from~\Cref{eq:mlBernoulliBiasOverRho}, it does not admit a closed-form expression and needs to be computed numerically in order to assess how far it is from the upper-bound. 
	Similarly, one could try to employ a lower-bound that includes Hellinger$-p$ Divergences. The lower-bound on the risk induced by~\Cref{thm:lowerBoundHellinger} is given by 
	\begin{equation}
		R_B\geq \sup_{\rho > 0} \sup_{p > 1}\rho\left(1- (2\rho)^{\frac{p-1}{p}}\cdot  \left(\mathcal{H}_p(W,X^n)\right)^\frac{1}{p} \right).\label{eq:hellingerPBernoulliBias}
	\end{equation}
	The expressions represented in~\Cref{eq:hellingerPBernoulliBias} and~\Cref{eq:iAlphaBernoulliBias} are extremely similar. 
	Thus, one can easily argue that~\Cref{eq:iAlphaBernoulliBias} will always provide a larger lower-bound than~\Cref{eq:hellingerPBernoulliBias} and indeed this is confirmed in the simulations. It is also true that, for some values of $p$, one can actually provide nice closed-form expressions for the lower-bound provided by~\Cref{eq:hellingerPBernoulliBias}. Indeed, in general, one has that (details in~\Cref{app:hellBernoulliBias}): 
	\begin{equation}
		((p-1)\mathcal{H}_p(W,X^n)+1) = 
	(n+1)^{p-1}\sum_{k=0}^n \binom{n}{k}^p\frac{\Gamma(kp+1)\Gamma((n-k)p+1)}{\Gamma(np+2)},\label{eq:hellinger_computation}
	\end{equation}
	Then, with $p=2$ one recovers (details in~\Cref{app:hellBernoulliBias}):
	\begin{equation}
		\mathcal{H}_2(W,X^n)+1 =\chi^2(W,X^n) = \frac{n+1}{2n+1}\cdot \frac{4^n}{\binom{2n}{n}}\leq \frac{16\sqrt{\pi n}}{21}. \label{eq:H2Bernoulli}
	\end{equation}
	Hence, specialising Equation \eqref{eq:hellingerPBernoulliBias} to $p=2$ leads us to:
	\begin{equation}
		R_B\geq \sup_{\rho>0}\rho \left(1-\sqrt{2\rho(\chi^2(W,X^n)+1)}\right).
	\end{equation}
	Solving then the maximisation over $\rho$ and using~\Cref{eq:H2Bernoulli} one can conclude that:
	\begin{align}
		R_B&\geq \frac{2}{27}\cdot \frac{1}{\chi^2(W,X^n)+1} \geq \frac{7}{72\sqrt{\pi n}}.\label{eq:bernoulli_lower_bound_closed_form}
	\end{align}
	Notice that~\Cref{eq:bernoulli_lower_bound_closed_form} also matches the upper-bound up to a constant and, similarly to Maximal Leakage, tightens the result in~\cite[Corollary 2]{bayesRiskRaginsky} while not requiring that $n \to \infty$.
	\begin{remark}
		Stirling's approximation yields $(\chi^2(W,X^n)+1) \sim \frac{\sqrt{\pi n}}{2}$ when $n$ is large. This implies that, for $n$ large, one can show that
		$R_B\gtrsim \frac{4}{27\sqrt{\pi n}}$, thus leading to a slight improvement over~\Cref{eq:bernoulli_lower_bound_closed_form}.
	\end{remark}
	To conclude, one can apply the same steps with the $E_{\gamma,\zeta}$-Divergence.
	The lower-bound on the risk  one can retrieve via~\Cref{thm:lowerBoundGeneralHockeyStick} is the following:n this example can thus be expressed as
    	\begin{align}
		R_B&\geq \sup_{\zeta,\gamma}\sup_{\rho>0}\rho\left(1-\frac{\left(E_{\gamma,\zeta}(W, X^n) + 2\rho\gamma\right)}{\zeta}\right)  \label{eq:bernoulliHockeyStickMax}\\
		&= \sup_{\zeta,\gamma} \frac{(\zeta-E_{\gamma,\zeta}(W, X^n))^2}{8\gamma\zeta}.\label{eq:bernoulli_hockey_stick}
	\end{align}
	The lower-bound in~\Cref{eq:bernoulli_hockey_stick} can be empirically seen to be the best among the ones presented so far (thus beating Hellinger, $I_\alpha$ and, consequently, Maximal Leakage and Mutual Information).
	A direct comparison between the bounds provided here and those already present in the literature can be seen in~\Cref{fig:bernoulli_specific_params} and~\Cref{fig:bernoulli_best_params}. The lower bounds are computed as a function of the number of samples $n$, which we consider to be in the range $\{1, \dots, 50\}$. The figure shows that all the divergences we considered in this work provide a larger (and thus, tighter) lower-bound on the Bayesian risk when compared with results that stem from using Shannon's Mutual Information (cf.~\cite[Corollary 2]{bayesRiskRaginsky}). In particular, the lower-bound involving the $E_{\gamma,\zeta}$-Mutual Information represents the largest among the ones we consider. Given the lack of a closed-form expression for $E_{\gamma,\zeta}$ in this example, the quantity in~\Cref{eq:bernoulli_hockey_stick} was computed numerically. Moreover, in order to verify whether the behaviour (and ordering) of the lower bounds in~\Cref{fig:bernoulli_specific_params} was determined by the specific choices of the parameters $p,\gamma,\zeta$ and $\alpha$, in~\Cref{fig:bernoulli_best_params} the lower bounds on the risk have also been numerically optimised over the respective parameters $p,\gamma,\zeta, \alpha$. As~\Cref{fig:bernoulli_best_params} shows, the lower-bound provided by $E_{\gamma,\zeta}$ remains the best. Notice that the lower-bound involving Mutual Information has no parameter to optimise over (other than $\rho$). Maximal Leakage does not provide the best bound, but it possesses the interesting characteristic of depending on $\Pm_W$ only through the support, thus leading to potential applicability in a variety of settings in which $\Pm_W$ is not accessible. In contrast, Mutual Information, the Hellinger Divergence and the $E_{\gamma,\zeta}$-Divergence all require to know $\Pm_W$.
	The lower bounds on the Risk in this Example can thus be summarised as follows:
	\begin{corollary}
		Consider the setting of Example \ref{ex:bernoulliBias} one has that
		\begin{align}
			R_B\geq \max\left\{ \max_{\zeta>0, \gamma\geq 0}\left\{\frac{(\zeta-E_{\gamma,\zeta}(W, X^n))^2}{8\gamma\zeta}\right\}, \max_{\alpha>1} \left\{\left(\frac{(2\alpha -1)}{2\alpha}\exp\left(\frac{\alpha-1}{\alpha}I_\alpha(W,X^n)\right)\right) ^{-\frac{\alpha}{\alpha-1}}\frac{(\alpha-1)}{(2\alpha-1)}\right\}\right\}.
		\end{align}
	\end{corollary}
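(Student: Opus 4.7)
The bound in the corollary is simply the maximum of two earlier estimator-independent lower bounds specialised to this example: the Sibson-MI bound of \Cref{thm:sibsMIResultBayesRisk} and the generalised hockey-stick bound of \Cref{thm:lowerBoundGeneralHockeyStick}. I would first establish the small-ball expression that feeds both: since $W\sim\mathcal{U}([0,1])$ and $\ell(w,\hat w)=|w-\hat w|$, the supremum
$L_W(\rho)=\sup_{\hat w}\Pm_W(|W-\hat w|<\rho)$ is attained by any $\hat w$ with $[\hat w-\rho,\hat w+\rho]\subseteq[0,1]$, giving $L_W(\rho)=2\rho$ (on the interesting regime $\rho\leq 1/2$; outside this range the eventual maximiser $\rho^{\star}$ will automatically land back inside). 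This single identity is the only setting-specific ingredient; everything else is a calculus exercise.

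For the $I_\alpha$ term, I would plug $L_W(\rho)=2\rho$ into \Cref{eq:sibsMILowerBound} to get
\[
R_B\;\geq\;\sup_{\rho>0}\,\rho\Bigl(1-c_\alpha\,(2\rho)^{(\alpha-1)/\alpha}\Bigr),\qquad c_\alpha:=\exp\!\Bigl(\tfrac{\alpha-1}{\alpha}I_\alpha(W,X^n)\Bigr).
\]
Writing $\beta=(\alpha-1)/\alpha\in(0,1)$, the map $\rho\mapsto\rho-c_\alpha 2^\beta\rho^{1+\beta}$ is strictly concave on $\rho>0$, so I would differentiate, solve $\rho_{\star}^{\beta}=\bigl(c_\alpha 2^\beta(1+\beta)\bigr)^{-1}$, and substitute. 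At the optimum $c_\alpha(2\rho_\star)^\beta=\tfrac{1}{1+\beta}$, so the objective collapses to $\rho_\star\cdot\tfrac{\beta}{1+\beta}$, which after rewriting $\beta$ in terms of $\alpha$ (so $1+\beta=\tfrac{2\alpha-1}{\alpha}$ and $\beta/(1+\beta)=\tfrac{\alpha-1}{2\alpha-1}$) gives precisely the stated $\alpha$-branch of the maximum, modulo elementary rearrangement of the $(\cdot)^{-\alpha/(\alpha-1)}$ factor.

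For the $E_{\gamma,\zeta}$ term, I would apply \Cref{thm:lowerBoundGeneralHockeyStick} and restrict the optimisation to $\gamma\geq\zeta$, so that $\max\{0,\zeta-\gamma\}=0$ and the bound simplifies to~\Cref{eq:bernoulliHockeyStickMax}. Using $L_W(\rho)=2\rho$ once more, the remaining object to maximise over $\rho$ is
\[
\rho\;\frac{\zeta-E_{\gamma,\zeta}(W,X^n)-2\gamma\rho}{\zeta},
\]
a concave quadratic in $\rho$. Taking a derivative and solving yields $\rho_\star=\tfrac{\zeta-E_{\gamma,\zeta}(W,X^n)}{4\gamma}$ and the optimum $(\zeta-E_{\gamma,\zeta}(W,X^n))^2/(8\gamma\zeta)$, matching the $E_{\gamma,\zeta}$-branch of the statement.

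Combining both displays and bounding by the larger of the two completes the proof. There is no real obstacle: the two theorems cited already furnish estimator-independent lower bounds, and the only technical step is the elementary scalar optimisation over $\rho$. The mildly delicate point is the $I_\alpha$ branch, where one has to keep track of the fractional exponent $\beta=(\alpha-1)/\alpha$ and massage $\rho_\star$ and $\rho_\star\beta/(1+\beta)$ into the compact $(\cdot)^{-\alpha/(\alpha-1)}\tfrac{\alpha-1}{2\alpha-1}$ form; this is purely algebra. The restriction $\gamma\geq\zeta$ in the hockey-stick branch is cost-free because one is taking a supremum and this sub-region is exactly where $\max\{0,\zeta-\gamma\}$ disappears from the bound.
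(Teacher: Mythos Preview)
Your proposal is correct and follows exactly the paper's approach: specialise \Cref{thm:sibsMIResultBayesRisk} and \Cref{thm:lowerBoundGeneralHockeyStick} to this example using $L_W(\rho)=2\rho$, then carry out the scalar maximisation over $\rho$ as in \Cref{app:rho_maximization} (which is precisely your concave-optimisation step). The paper does not give a separate proof of this corollary; it is assembled from \Cref{eq:iAlphaBernoulliBias} and \Cref{eq:bernoulliHockeyStickMax}--\Cref{eq:bernoulli_hockey_stick}, and your derivation reproduces those equations, including the implicit restriction to $\gamma\geq\zeta$ that makes the $\max\{0,\zeta-\gamma\}$ term vanish.
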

	\begin{figure}
		\centering
		\subfloat[The picture shows the behaviour of~\Cref{eq:mlBernoulliBiasOverRho},~\Cref{eq:iAlphaBernoulliBias} with $\alpha=2$,~\Cref{eq:bernoulli_lower_bound_closed_form},~\Cref{eq:bernoulli_hockey_stick} with $\gamma=3$ and $\zeta=1.5$ and~{\cite[{{Equation~(19)}}]{bayesRiskRaginsky} }
		as a function of $n$. The values of $E_{3, 1.5}(W, X^n)$ for each $n$ are computed numerically. Solid lines mean that the corresponding lower-bound is the largest.]{ 
        \includegraphics[width=0.465\textwidth]{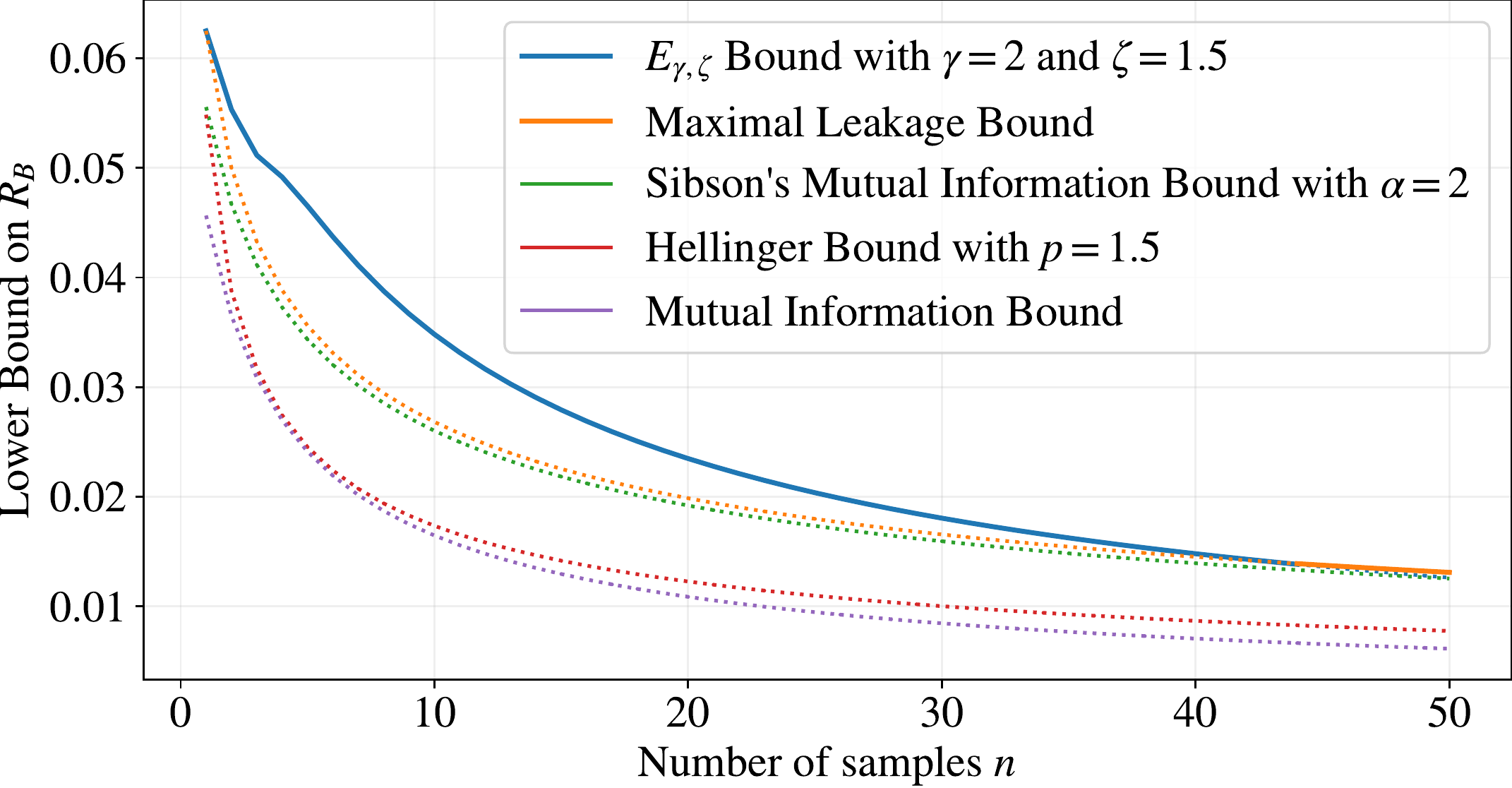}\label{fig:bernoulli_specific_params}} \qquad
		\subfloat[Comparison between the largest lower bounds one can retrieve for different information measures in~\Cref{ex:bernoulliBias}: that is between~\Cref{eq:iAlphaBernoulliBias},~\Cref{eq:hellingerPBernoulliBias},~\Cref{eq:bernoulliHockeyStickMax} and {\cite[Equation~(19)]{bayesRiskRaginsky}}.
		The quantities are analytically maximized over $\rho$ and numerically optimized over, respectively, $\alpha>1,p>1$, $\zeta>0$, and $\gamma\geq 0$. Solid lines mean that the corresponding lower-bound is the largest.]{
            \includegraphics[width=0.465\textwidth]{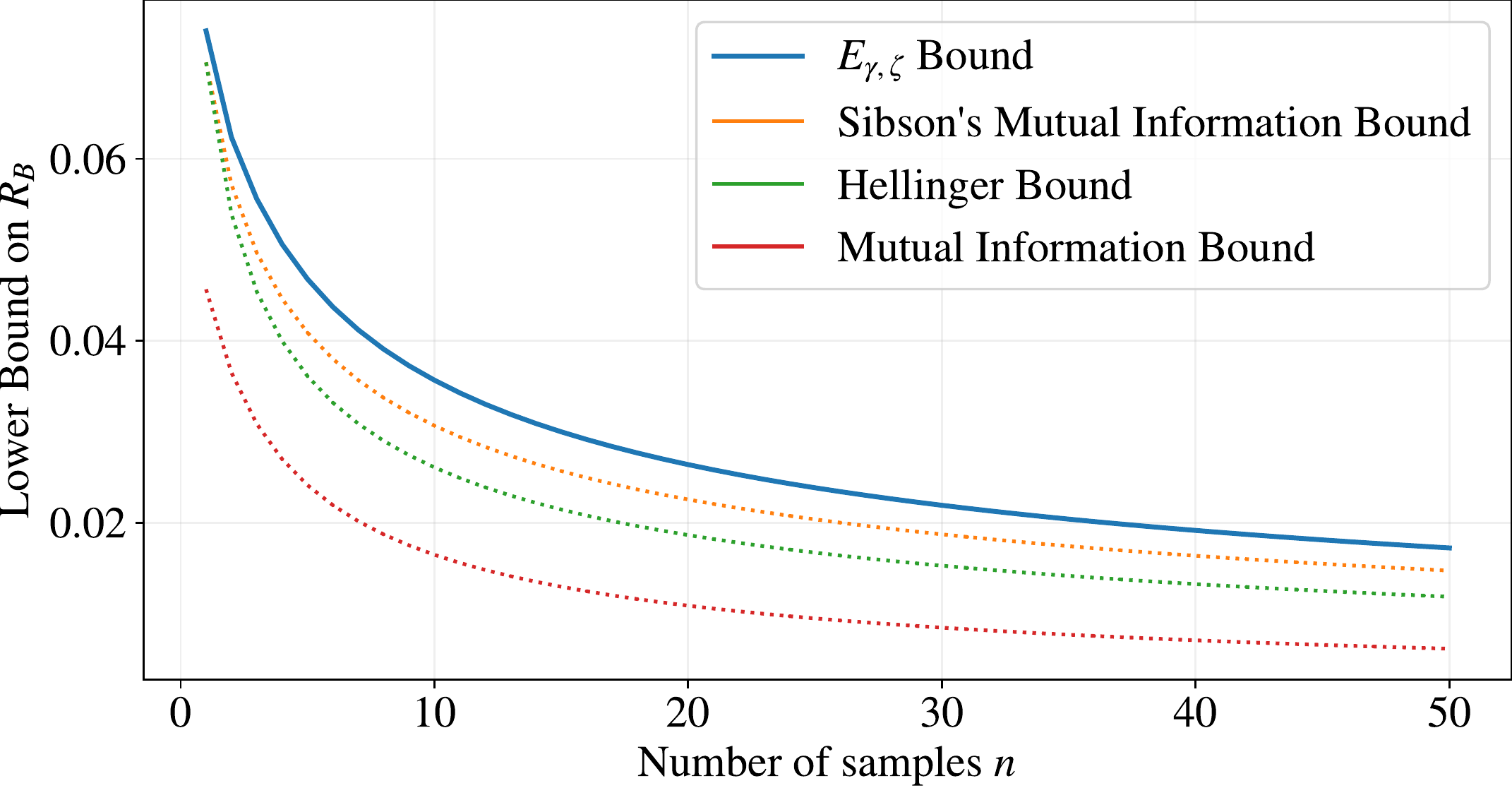}
			\label{fig:bernoulli_best_params}}
   \caption{Comparison of various bounds for~\Cref{ex:bernoulliBias} with and without (numerical) optimisation of parameters.}
	\end{figure}
	
	\subsection{Noisy Bernoulli Bias}\label{sec:noisy_bernoulli_bias}
    Assume, like in~\Cref{sec:ex1}, that $W$ is uniform on the $[0,1]$ interval, $X_i \sim \text{Ber}(W)$. In line with the discussion above, suppose that one observes noisy copies of $X_i$'s denoted with $Z_i$'s, where $Z_i$ is the outcome of $X_i$ after being passed through a Binary Symmetric Channel with parameter $\lambda$ ($K$=BSC($\lambda$)). The purpose is to estimate $W$ through a function of $Z_1,\ldots,Z_n$ \textit{i.e.}, $\hat{W}=\psi(Z^n)$. One thus has the following Markov Chain $W-X^n-Z^n-\hat{W}$. In order to lower-bound the Bayesian risk in this setting, one can use~\Cref{thm:noisyEstimation}. In particular, given the additional injection of noise, it is to be expected that one has a stronger impossibility result with respect to the non-noisy version. This is reflected in the computations below. Let us restrict ourselves to $\varphi$-Divergences as one can then leverage the results in the literature on SDPI constants for the channel considered here. In particular, if one considers Hellinger divergences then the following can be said~\cite[Corollary 3.1]{sdpiRaginsky},~\cite{etaTVKL}:
    \begin{align}
        \eta_p(K) &= (1-2\lambda)^2 \text{ if } 1\leq p\leq 2 \\
        \eta_p(K) &\leq |1-2\lambda| \hspace{0.7em} \text{  if } p>2.
    \end{align}
    Moreover, if $p\leq 2$ then one can leverage tensorisation properties of SDPI-coefficients (see~\Cref{sec:SDPI}) and the following can be said
    \begin{equation}
        \eta_p(\Pm_{X^n},K^{\otimes n}) = \eta_p(\Pm_{X},K) \leq \eta_p(K) = 
        (1-2\lambda)^2 \text{ if } 1\leq p\leq 2 
    \end{equation}
    In this setting, one has that the Hellinger divergence $\mathcal{H}_p(W,X^n)$ is given in~\Cref{eq:hellinger_computation}. With $p=2$ and without making any assumption on $\Pm_{\hat{W}|Z^n}$, one can leverage~\Cref{thm:noisyEstimation} and retrieve the following closed-form expression for the Risk in this setting:
    \begin{align}
        R^{\text{noisy}} &\geq \sup_{\rho>0} \rho\left(1-\sqrt{2\rho((1-2\lambda)^2\chi^2(W,X^n)+1)}\right) \\
        &= \frac{2}{27}\frac{1}{(1-2\lambda)^2\chi^2(W,X^n)+1} \\
        &\geq \frac{2}{27}\frac{1}{(1-2\lambda)^2\frac{16\sqrt{\pi n}}{21}+1} \label{eq:chi2PrivateBernoulli}.
    \end{align}
    Clearly, the denominator in~\Cref{eq:chi2PrivateBernoulli} is smaller than the one in~\Cref{eq:bernoulli_lower_bound_closed_form} thus yielding a larger lower-bound on the risk, this is depicted in~\Cref{fig:hellinger_noisy} for the case $\lambda=0.25$.
    \begin{figure}
		\centering
		\includegraphics[width=.6\linewidth]{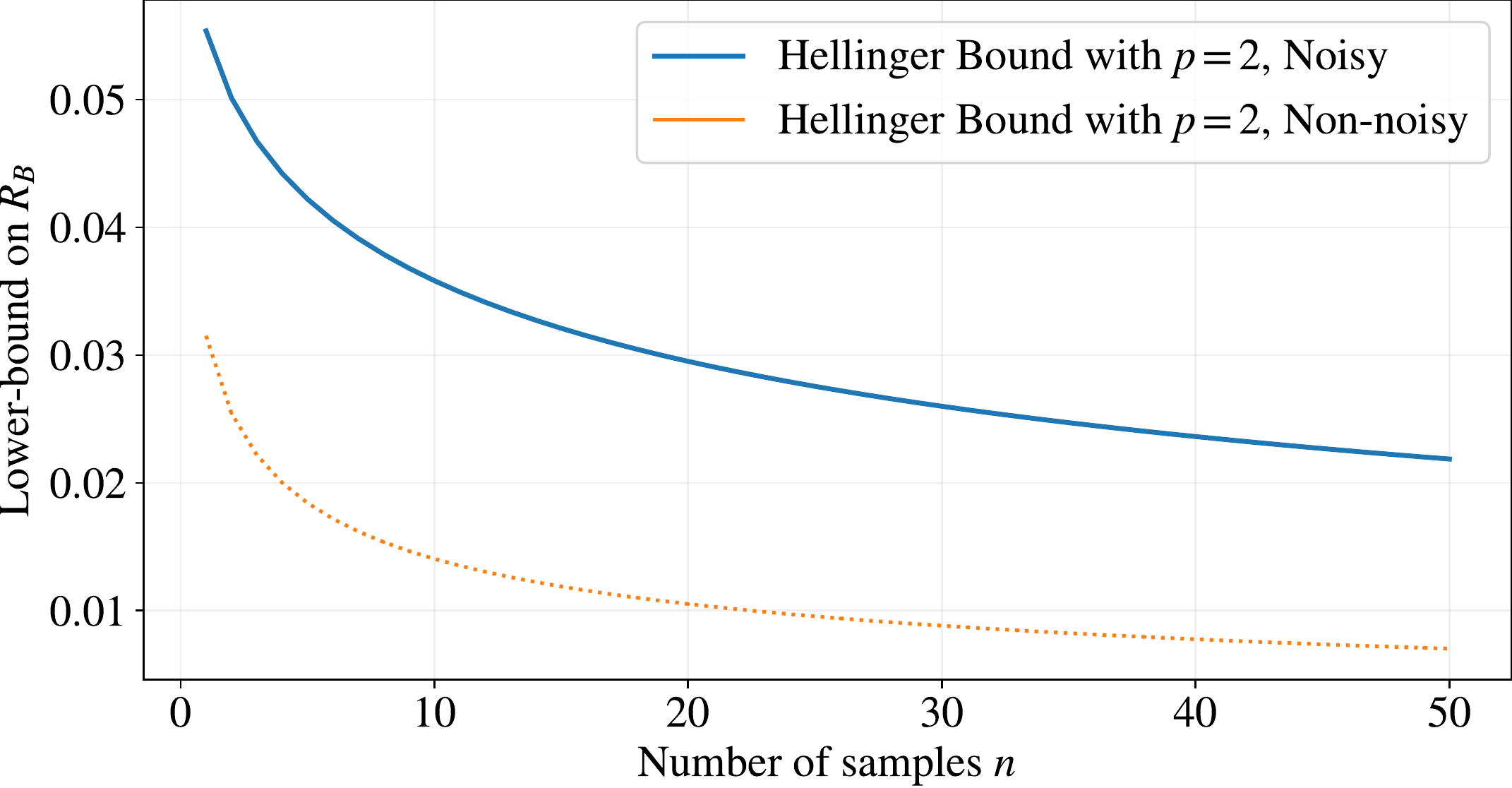}
		\caption{Comparison of the lower bounds in \Cref{eq:bernoulli_lower_bound_closed_form} and \Cref{eq:chi2PrivateBernoulli} for the noisy Bernoulli bias setting described in \Cref{sec:noisy_bernoulli_bias} with $\lambda=0.25$. Solid lines mean that the corresponding lower-bound is the largest.}
		\label{fig:hellinger_noisy}
	\end{figure}
  \subsection{Gaussian prior with Gaussian noise (and absolute error)} \label{sec:gaussianExample}
	Another classical and interesting setting is given by the following example:
	\begin{example}\label{gaussianPriorGaussianNoise}
		Assume that $W\sim N(0,\sigma^2_W)$ and that for $i\in[n]$, $X_i = W + Z_i$ where $Z_i \sim N(0,\sigma^2)$. Assume also that the loss is s.t. $\ell(w,\hat{w}) = |w-\hat{w}|$.
	\end{example}
	Using the sample mean estimator one has that: \begin{equation}
		R_B\leq \sqrt{\frac{\sigma^2_W}{1+n\sigma^2_W/\sigma^2}}. \label{eq:gaussianPriorGaussianNoiseUpperBound}
	\end{equation}
	Moreover, given that $\ell(w,\hat{w}) = |w-\hat{w}|$ it is also possible to show that: \begin{equation}
		L_W(\rho) \leq \left(\sup_{w\in\mathbb{R}} \mathcal{P}_W(w)\right)\left(\int_{-\rho}^{\rho} 1du\right) \leq \rho \sqrt{\frac{2}{\sigma^2_W\pi}}.\label{gaussianWSmallBallProbUpperBound}
	\end{equation}
	In this setting, $\ml{W}{X^n}$ is infinite. However, $I_\alpha(W,X^n)$ is finite for every $\alpha<+\infty$. One can thus provide a lower bound on the Risk, resorting to $I_\alpha$ via~\Cref{eq:sibsMILowerBound}. Given that the empirical mean is a sufficient statistic for $W$ in this case, one has that~\cite[Example 5]{verduAlpha}:
	\begin{equation}
		I_\alpha(W, X^n) = I_\alpha\left(W, \frac1n\sum_{i=1}^nX_i\right) = \frac{1}{2}\log\left(1+\alpha n\frac{\sigma^2_W}{
			\sigma^2}\right).
	\end{equation}
	These considerations imply that :
	\begin{align}
		R_B&\geq \sup_{\alpha>1}\sup_{\rho>0} \rho\left(1-\exp\left(\frac{\alpha-1}{\alpha}I_\alpha(W,X^n)\right)\left(\rho \sqrt{\frac{2}{\sigma^2_W\pi}}\right)^\frac{\alpha-1}{\alpha}\right) \\
		&= \sup_{\alpha>1}\sup_{\rho>0} \rho\left(1-\left(\rho \sqrt{\left(1+\alpha n\frac{\sigma^2_W}{
				\sigma^2}\right)\frac{2}{\sigma^2_W\pi}}\right)^\frac{\alpha-1}{\alpha}\right) \\
		&= \sup_{\alpha>1} \frac{1}{(\beta+1)}\left(\frac{\beta}{\beta+1}\right)^\beta \left(\sqrt{\left(1+\alpha n\frac{\sigma^2_W}{
				\sigma^2}\right)\frac{2}{\sigma^2_W\pi}}\right)^{-\frac{1}{\beta}}, \label{eq:gaussianIalpha}
	\end{align}
	remembering that $\beta = \frac{\alpha}{\alpha-1}.$
	Stepping away from Sibson's $\alpha$-Mutual Information one can look at Hellinger $p$-Divergences and $E_{\gamma,\zeta}$ once again. 
	In particular, one has that for $p>1$ (details in~\Cref{app:hellGaussian}):
	\begin{equation}
		\mathcal{H}_p(W,X) = \left(\frac{\left(1+\frac{\sigma_W^2}{\sigma^2}\right)^p}{1 + (2-p)p\frac{\sigma_W^2}{\sigma^2}}\right)^\frac{1}{2}.
	\end{equation}
	Thus, the family of bounds provided by~\Cref{thm:lowerBoundHellinger} can be expressed as follows
	\begin{align}
		R&\geq \sup_{p > 1}\sup_{\rho > 0}\rho\left(1- \left(\frac{2\rho}{\sqrt{2\pi\sigma_W^2}}\right)^{\frac{p-1}{p}}\mathcal{H}^\frac{1}{p}_p(W,X^n)\right) \\
		&= \sup_{p > 1}\sup_{\rho > 0}\rho\left(1- \left(\frac{2\rho}{\sqrt{2\pi\sigma_W^2}}\right)^{\frac{p-1}{p}}\left(\frac{\left(1+\frac{\sigma_W^2}{\sigma^2}\right)^p}{1 + (2-p)p\frac{\sigma_W^2}{\sigma^2}}\right)^\frac{1}{2p}\right)  \\ &= \sup_{p>1} \frac{1}{q+1}\left(\frac{q}{q+1}\right)^q \left(\left(\frac{2}{\sqrt{2\pi\sigma_W^2}}\right)^{\frac{p-1}{p}}\left(\frac{\left(1+\frac{\sigma_W^2}{\sigma^2}\right)^p}{1 + (2-p)p\frac{\sigma_W^2}{\sigma^2}}\right)^\frac{1}{2p}\right)^{-\frac{1}{q}}, \label{eq:gaussianHellingerp}
	\end{align}
	where $q$ represents the H\"older's conjugate with respect to $p$, \textit{i.e.}, $q=\frac{p}{p-1}$.
	
	In particular, setting $p=3/2$ one obtains:
	\begin{equation}
		\mathcal{H}_{3/2}(W, X) =  \sqrt{\frac{\left(1+\frac{\sigma_W^2}{\sigma^2}\right)^\frac32}{1 + \frac{3\sigma_W^2}{4\sigma^2}}},
	\end{equation}
	leading us to a lower bound on the Bayesian risk given by:
	\begin{equation}\label{eq:gaussianHellingerSpecific}
		R_B\geq \frac{81\sqrt{2\pi}}{2048}\sqrt{\frac{\sigma_W^2}{1+n\frac{\sigma_W^2}{\sigma^2}}}.
	\end{equation}
	Similarly to the previous example, one has that~\Cref{eq:gaussianHellingerSpecific} matches the upper-bound up to a constant factor, and provides a strengthening of the bounds obtained in~\cite[Corollary 1]{bayesRiskRaginsky}.
	Repeating the analysis with the $E_{\gamma,\zeta}$-Divergence, one obtains the following:
	\begin{align}
		R_B&\geq \sup_{\rho>0}\rho\left(1-\frac{\left(E_{\gamma,\zeta}(W, X^n) + \frac{2\rho\gamma}{\sqrt{2\sigma^2_W\pi}}\right)}{\zeta}\right)\\
		&= \frac{\sqrt{2\sigma^2_W\pi}\left(\zeta-E_{\gamma,\zeta}(W, X^n)\right)^2}{8\gamma\zeta}.\label{eq:gaussianHockeyStick}
	\end{align}
	\begin{figure}
		\centering
		\subfloat[Setting:~\Cref{gaussianPriorGaussianNoise} with $\sigma_W^2=1$ and $\sigma^2=2$. The picture shows the behaviour of~\Cref{eq:gaussianIalpha} with $\alpha=2$,~\Cref{eq:gaussianHellingerSpecific},~\Cref{eq:gaussianHockeyStick} with $\gamma=2$ and $\zeta=1.5$ and~{\cite[Equation~(16)]{bayesRiskRaginsky}} as a function of $n$. The values of $E_{2,1.5}(W, X^n)$ for each $n$ are computed numerically. Solid lines mean that the corresponding lower-bound is the largest.]{ 
        \includegraphics[width=0.465\textwidth]{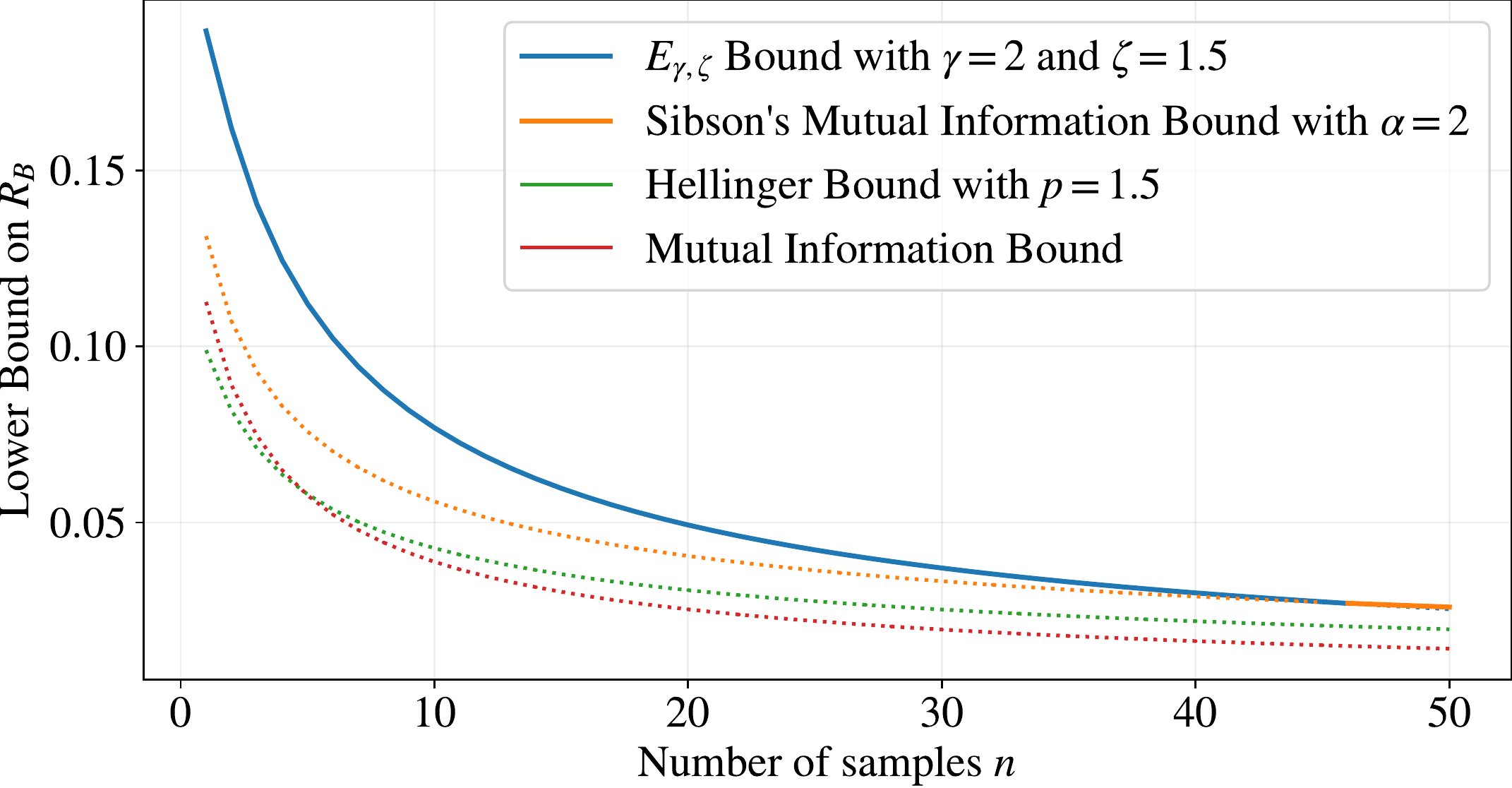}\label{fig:gaussian_specific_params}} \qquad
		\subfloat[Comparison between the largest lower bounds one can retrieve for different information measures in~\Cref{gaussianPriorGaussianNoise}: that is between,~\Cref{eq:gaussianIalpha},~\Cref{eq:gaussianHellingerp},~\Cref{eq:gaussianHockeyStick} with $\zeta=1.5$, and~{\cite[ Equation~(16)]{bayesRiskRaginsky}}.
		The quantities are numerically optimised over, respectively, $\gamma\geq 1$, $p>1$ and $\alpha>1$. The numerical optimisation over the parameter $\zeta$ is not carried out for computational reasons. Solid lines mean that the corresponding lower-bound is the largest.]{
            \includegraphics[width=0.465\textwidth]{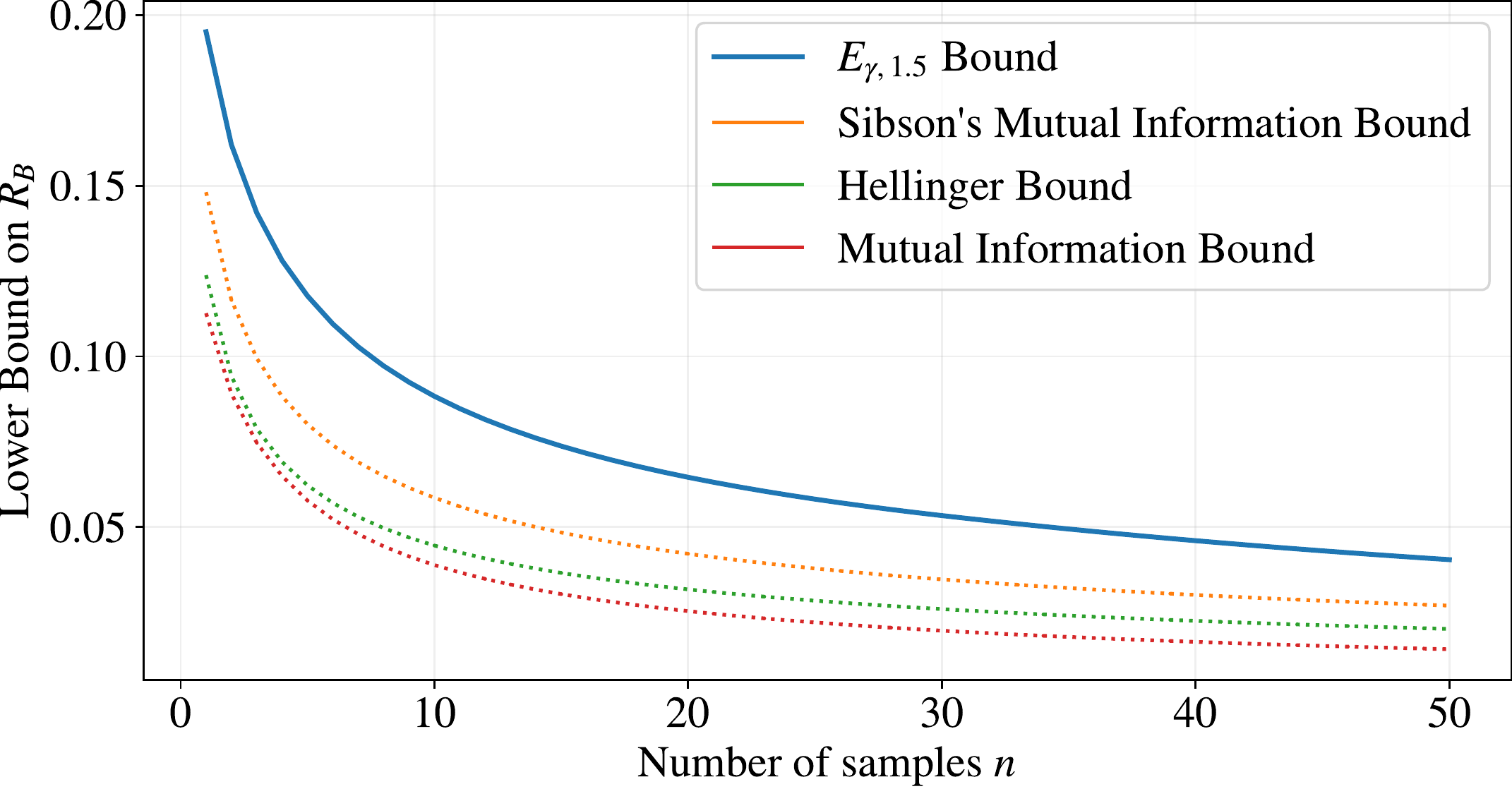}
			\label{fig:gaussian_best_params}}
	\caption{Comparison of various bounds for~\Cref{gaussianPriorGaussianNoise} with and without (numerical) optimisation of parameters.}
	\end{figure}
	Like in~\Cref{ex:bernoulliBias}, one can numerically evaluate~\Cref{eq:gaussianHockeyStick} and compare it with~\Cref{eq:gaussianIalpha},~\Cref{eq:gaussianHellingerSpecific} and \cite[Equation~(16)]{bayesRiskRaginsky}.~\Cref{fig:gaussian_specific_params} and~\Cref{fig:gaussian_best_params} show the resulting lower bounds as a function of the number of samples $n$. One can observe similar behaviors when comparing with the results from previous example: the bounds retrieved through the $\mathcal{H}_p$- and $E_{\gamma,\zeta}$-Divergences are both able to improve on the lower-bound relying on Shannon's Mutual Information. Once again, $E_{\gamma,\zeta}$, (cf.~\Cref{eq:gaussianHockeyStick}) provides the largest lower-bound, while Sibson's $\alpha$-Mutual Information is still able to provide a stronger result than~\Cref{eq:gaussianHellingerp}. Similarly to before, one can also numerically optimise the bounds with respect to the corresponding parameters $\alpha>1,p>1,\zeta>0$ and $\gamma\geq 0$ and the resulting comparison is depicted in~\Cref{fig:gaussian_best_params}.

	\subsection{``Hide-and-seek'' problem}\label{sec:hideAndSeek}
	To conclude, let us consider next a $d$-dimensional distributed estimation problem, known as the ``Hide-and-seek'' problem. It has been first presented in \cite{nipsHideAndSeek} and also studied in \cite{bayesRiskRaginsky}.
	\begin{example}
		Consider a family of distributions $\mathcal{P} = \{\Pm_w : w = 1,\ldots , d\}$ on $\{0, 1\}^d$. Under $\Pm_w$, the
		$w$-th coordinate of the random vector $X \in \{0, 1\}^d$ has bias $\frac{1}{2}+\theta$ while the other coordinates of $X$ are
		independently drawn from Ber$(1/2)$. 
		For $i = 1,\ldots, m$, the $i$-th processor observes $n$ samples $X^n_i$ drawn independently from $\mathcal{P}_W$, and sends a $b$-bits message $Y_i = \varphi(X_i^n,Y^{i-1})$ to the estimator. The estimator
		computes $\hat{W}=\psi(Y^m)$ from the received messages. The risk in this example is defined as: 
		\begin{equation}
			R_M = \inf_{\varphi^m,\psi} \max_{w\in[d]}\mathbb{P}[W\neq \hat{W}].
		\end{equation}
	\end{example}
	A lower-bound for $R_M$ derived in~\cite{nipsHideAndSeek} is as follows:
	\begin{equation}
		R_M \geq 1- \left(\frac3d+5\sqrt{\min\left\{\frac{10\theta nmb}{d},mn\theta^2\right\}} \right) \label{nipsRiskHideAndSeek}
	\end{equation}
	and only holds for $0\leq \theta \leq 1/(4n).$ Additionally, in~\citep{bayesRiskRaginsky} a quite different lower-bound has been proposed:
	\begin{equation}
		R_M \geq 1 - \frac{1}{\log d}\min\left\{\left[1-\left(\frac{1-2\theta}{1+2\theta}\right)^n\right]mb+1, \min(4mn\theta^2,\log d) + 1 \right\}, \label{miRiskHideAndSeek}
	\end{equation}
	and it holds for $0\leq \theta \leq 1/2.$
	Let us now use a na\"ive approach with Maximal Leakage.
	We have that $W-X^{n\times m}-Y^m-\hat{W}$ forms a Markov Chain.
	Thus, $$\ml{W}{\hat{W}} \leq \min\left\{\ml{W}{X^{n\times m}}, \ml{W}{Y^m}\right\}.$$
	We also have that $\ml{W}{Y^m} \leq mb$ and that:
	\begin{align}
		\mathcal{L}(W \to X^{n\times m}) &\leq nm\mathcal{L}(W\to X) \label{eq:hideSeekMLChainRule}\\
		&= nm\log \sum_{x} \max_w \Pm_{X|W=w}(x) \\
		&\leq nm\log \sum_{x} \left(\frac{1}{2}\right)^{d-1} \left(\frac{1}{2}+\theta \right) \\
		&= nm\log (2^d(2^{-d}+ 2^{-d+1}\theta)) \\
		&= nm\log (1+2\theta),
	\end{align}
	Hence:
	\begin{equation}
		\ml{W}{\hat{W}} \leq \min(nm \log (1+2\theta), \log d,  mb).\label{mlHideAndSeek}
	\end{equation}
	Using~\Cref{mlHideAndSeek} in~\Cref{eq:maximalLeakgeResultBayesRisk} we get the following:
	\begin{equation}
		\mathbb{P}(\{\hat{W}\neq W\}) \geq 1 - \frac{\exp(\text{min}\{ mb,\log d, nm\log(1+2\theta)\})}{d}. \label{mlRiskHideAndSeek}
	\end{equation}
	
	Notice that~\Cref{mlRiskHideAndSeek} is such that the right-hand side is always greater or equal than $0$. Indeed, assuming $d$ to be fixed and letting $n$ and $m$ grow, we have that the minimum is achieved by $\log d$, and in that case, we have $\mathbb{P}(\{\hat{W}\neq W\}) \geq 0$.
	Here, the difference in behaviour of~\Cref{eq:maximalLeakgeResultBayesRisk} with respect to~\cite[Theorem 1]{bayesRiskRaginsky} is pivotal. Let us now compare the results on a common setting. The setting chosen in~\citep{bayesRiskRaginsky}, where $d=512, b=3d, m=10$ and $\theta = 1/(4n)$ does not represent a choice of parameters where~\Cref{mlRiskHideAndSeek} is interesting. Indeed, for large enough $n$, $nm\log(1+2\theta) = nm\log(1+1/2n) \approx m/2$ and, as a consequence, the expression will converge to a constant determined by the minimum between $mb, \log d, m/2$. Furthermore, both~\Cref{nipsRiskHideAndSeek} and~\Cref{miRiskHideAndSeek} have a term that depends on $mn\theta^2$ which, for $\theta= 1/(4n)$, will decay with $n$. Thus, choosing $\theta \sim n^{-p}$ with $p> 1$ represents an interesting setting for the bound in~\Cref{mlRiskHideAndSeek}, as the plots in~\Cref{fig:sfig1} and~\Cref{fig:sfig2} show.

	\begin{figure*}
		\centering
		\subfloat[$\theta=\left(\frac{1}{n}\right)^{1.5}$]{ 
        \includegraphics[width=.5\linewidth]{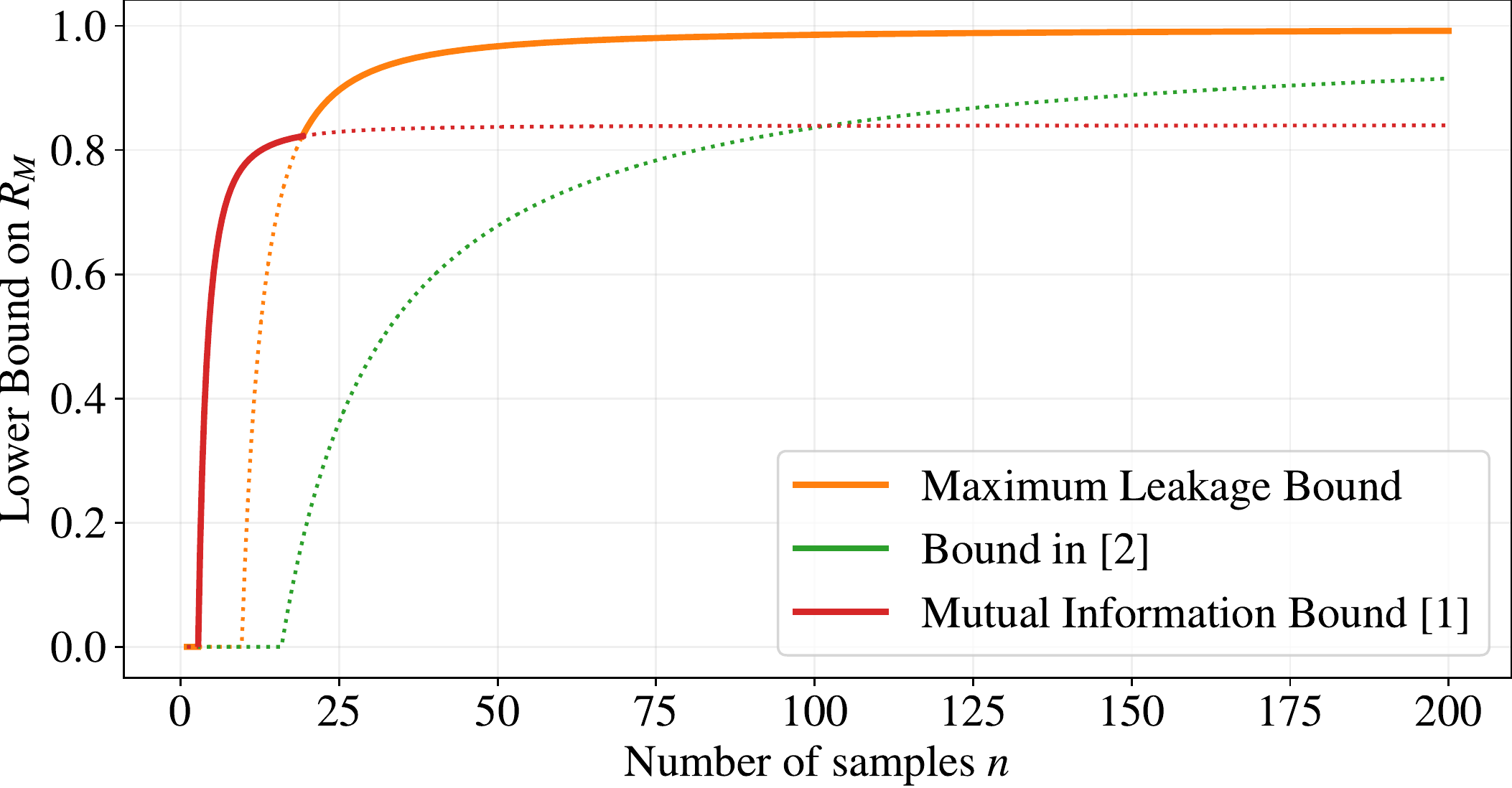}
			\label{fig:sfig1}}
		\subfloat[$\theta = \left(\frac{1}{n}\right)^2$]{ 
        \includegraphics[width=.5\linewidth]{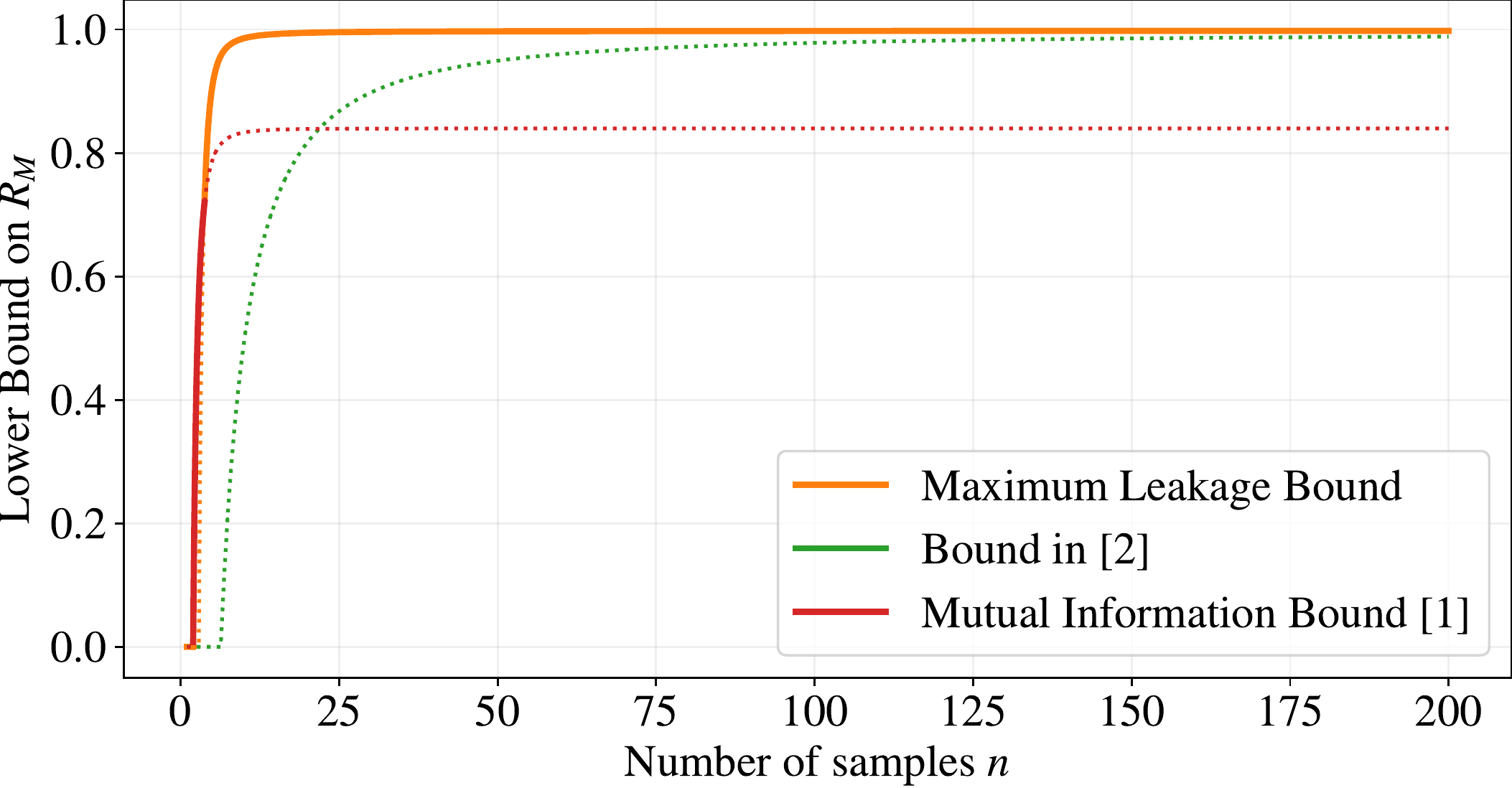}
			\label{fig:sfig2}}
		\\
        \vspace{1em}
		\subfloat[$\theta = 0.01$]{ 
        \includegraphics[width=.5\linewidth]{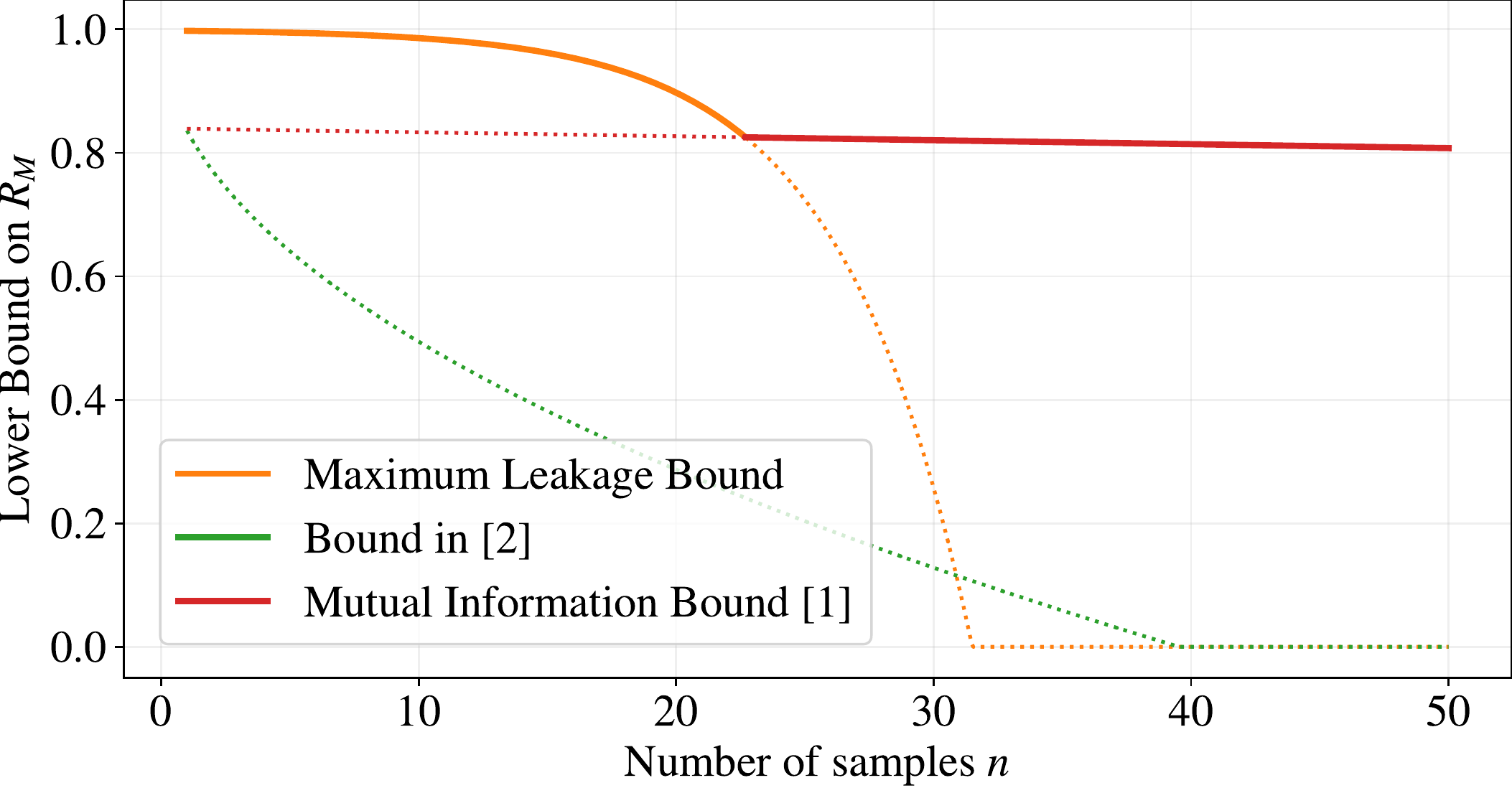}
			\label{fig:sfig3}}
		\subfloat[$\theta = 0.0001$]{ 
        \includegraphics[width=.5\linewidth]{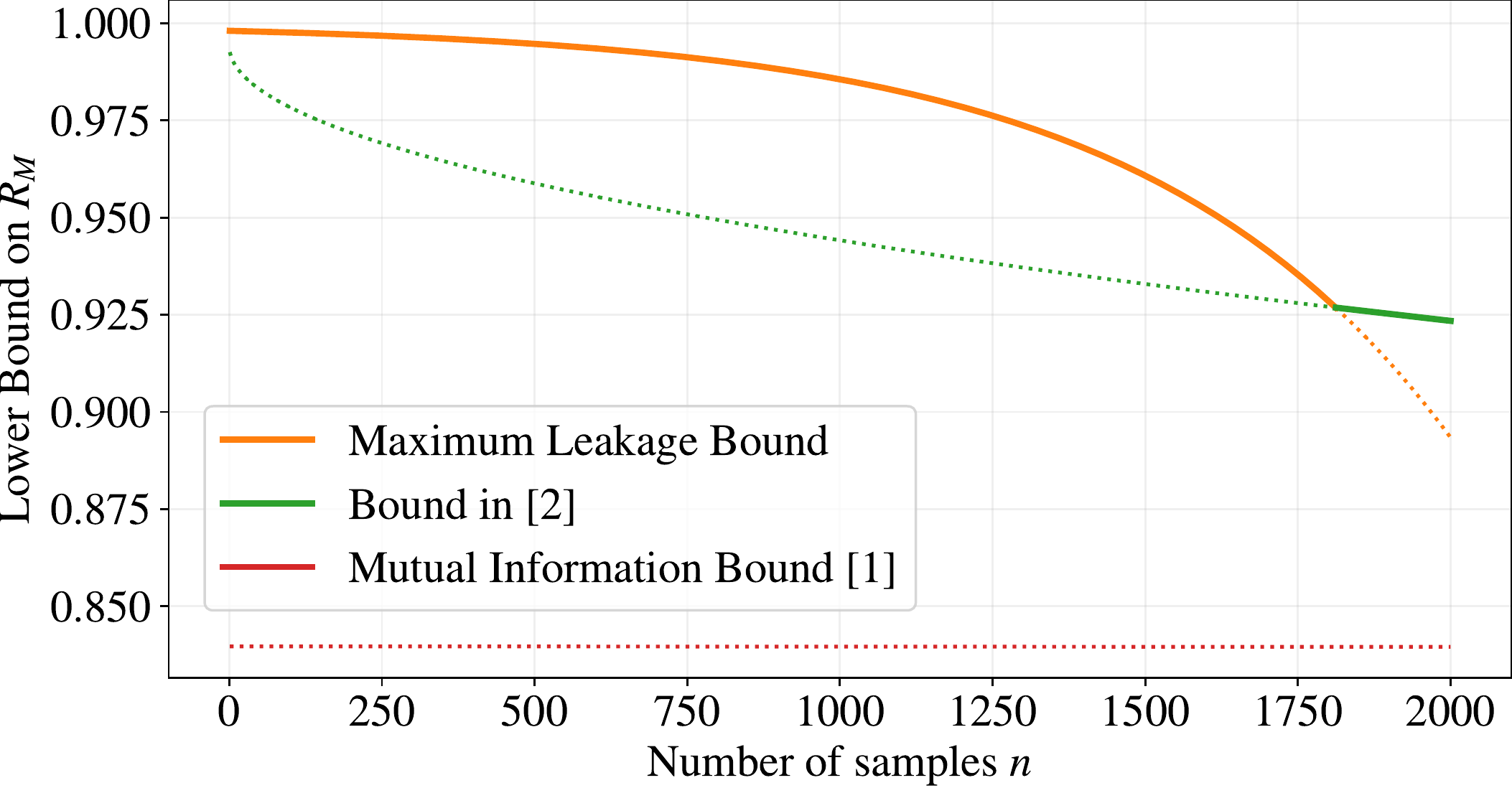}
			\label{fig:sfig4}}
		\caption{Behaviour of~\Cref{mlRiskHideAndSeek} for various values of $\rho$. Solid lines mean that the corresponding lower-bound is the largest.}
		\label{fig:fig}
	\end{figure*}
	
	Thanks to the different behaviour of~\Cref{mlRiskHideAndSeek} (reaching $1$ exponentially fast) we can see a much sharper jump towards $1$ with respect to~\Cref{miRiskHideAndSeek}, which instead plateaus below $1$, and with respect to~\Cref{nipsRiskHideAndSeek} that reaches $1$ more slowly. The growth towards $1$ of~\Cref{mlRiskHideAndSeek} becomes even sharper with larger $p$ and converges towards a specific behaviour at $p \approx 2$. Increasing $p$ any further does not alter the behaviour of the bound meaningfully.
	As for the behaviour of the bound for fixed $\theta$, if $\theta=0.01$. then~\Cref{mlRiskHideAndSeek} provides a larger lower-bound only for $n<25$. If the parameter is brought down to $\theta=0.0001$ then~\Cref{mlRiskHideAndSeek} is larger than~\Cref{miRiskHideAndSeek} for all $n$ but only larger than~\Cref{nipsRiskHideAndSeek} for $n<1850$.
	Regardless of the considerations related to the specific settings, it is interesting how a very simple application of~\Cref{eq:maximalLeakgeResultBayesRisk} can provide a tighter lower-bound. Moreover, in the proof of~\Cref{miRiskHideAndSeek} in~ \citep{bayesRiskRaginsky}, in order to compute $I(W;X)$ an assumption on the distribution of $W$ was necessary and the choice fell on $W$ uniform on $[d]$. With Maximal Leakage, $\ml{W}{X}$ does not depend on the specific distribution over $W$, rendering the bound potentially more general.
	Other divergences could be explored in this setting as well. However, one in general does not have a chain rule for any other $\varphi$-Divergence (or Sibson's $\alpha$-Mutual Information with $\alpha<+\infty$) which is a fundamental step in the proof for Maximal Leakage (cf.~\Cref{eq:hideSeekMLChainRule}). Moreover, some assumption (or maximisation over) $\Pm_W$ would be necessary. In general, some additional machinery would be required in order to employ them in this setting. Given that this is outside of the scope of this work, these approaches will not be explored in this document.

	\appendices
 
\crefalias{section}{appendix}
\crefalias{subsection}{appendix}
	\section{Comparison with similar approaches}\label{app:comparisonWithChen}
		An approach closely connected to the one proposed in here is in~\cite{bayesRiskFInformativity}. The authors therein focused on the notion of $\varphi$-informativity~\cite {fInformativity} and leveraged the Data-Processing inequality of the information measure.
		In particular, $\varphi$-informativities can potentially lead to tighter results than the $\varphi$-Mutual Information considered in this work. Similarly to Sibson's $\alpha$-Mutual Information, they are defined as follows:
		\begin{equation}
			\hat{I}_\varphi(X,Y)= \inf_{\Q_Y} D_\varphi(\Pm_{XY}\|\Pm_X\Q_Y)\leq I_\varphi(X,Y). \label{eq:fInformativity}
		\end{equation}
		Given that the minimum-achieving distribution, $\Q_Y^\star$, is guaranteed to exist in~\Cref{eq:fInformativity} (see~\cite{fInformativity}), one can see that $\hat{I}_\varphi(X,Y)=D_\varphi(\Pm_{XY}\|\Pm_X\Q_Y^\star)$. Consequently, the same steps followed in the proof of~\Cref{thm:fDivBoundBayesRisk} can be undertaken in order to reach a similar result involving $\hat{I}_\varphi$ and $\Pm_X\Q_Y^\star$ rather than $I_\varphi$ and $\Pm_X\Pm_Y$. However, except in some specific settings, the minimum-achieving distribution in~\Cref{eq:fInformativity} does not necessarily admit a closed-form expression~\cite{fInformativity}. As a consequence, the corresponding $\varphi$-Informativity does not admit a closed-form expression. Moreover, another step the authors leveraged in order to achieve ~\cite[Theorem 3.2]{bayesRiskFInformativity}, is the inversion of the resulting binary divergence, leading to a bound which can rarely be expressed in closed form and can only be computed numerically. While a direct comparison between the two approaches would be hard, some similarities are present and hint at the fact that~\cite[Theorem 3.2]{bayesRiskFInformativity} is tighter than~\Cref{thm:fDivBoundBayesRisk}. Indeed, an alternative proof for~\Cref{thm:fDivBoundBayesRisk} also stems from leveraging the DPI of $I_\varphi$ (cf.~\cite[Theorem 3]{fullVersionGeneralization}). However, additional steps are introduced in order to get a closed-form lower bound. Our analysis is designed to retrieve a large family of results which are amenable to analysis and interpretable. This allows us to retrieve lower bounds in closed-form expressions that can be seen to match the upper bounds, up to a constant, in a variety of settings. From a more conceptual standpoint, one could see~\cite[Theorem 3]{fullVersionGeneralization} (and, consequently,~\Cref{thm:fDivBoundBayesRisk}) as a generalisation of H\"older's\footnote{Selecting $\varphi(x)=x^p$ specialises~\Cref{thm:fDivBoundBayesRisk} to~\Cref{thm:lowerBoundHellinger} which can also be proven as an application of H\"older's inequality followed by Markov's inequality, cf.~\cite[Corollary 6]{fullVersionGeneralization}.} inequality to arbitrary convex functionals. This generalisation, which in turn can be seen as a generalisation of Fano's inequality for $\varphi$-Mutual Information, allows us to also encompass divergences from the R\'enyi's family and Sibson's $\alpha$-Mutual Information, which are not $\varphi$-Divergences and are thus excluded from~\cite{bayesRiskFInformativity}. To conclude, let us highlight that our approach, which leverages duality, allows us to provide a single analysis for every type of loss and does not require a separate treatise for $0-1$ losses and more general losses. Consequently, the two approaches for general losses are different and hard to compare.
    \section{Proof of~\Cref{sec:riskLBThroughProbs}}
    \subsection{Proof of~\Cref{thm:sibsMIResultBayesRisk}}\label{app:proofIalpha}
    \begin{proof}
		We have that \begin{align}\Pm_{W\hat{W}}(\ell(W,\hat{W})< \rho) &\leq\left(\sup_{\hat{w}\in\mathcal{\hat{W}}}\Pm_W(\ell(W,\hat{w})< \rho)\right)^{\frac{\alpha-1}{\alpha}}\exp\left(\frac{\alpha-1}{\alpha}I_\alpha(W,\hat{W})\right) \label{HolderBound}
			\\
			&=\exp\left(\frac{\alpha-1}{\alpha}\left(I_\alpha(W,\hat{W}) + \log(L_W(\rho))\right) \right)\\ &\leq \exp\left(\frac{\alpha-1}{\alpha}\left(I_\alpha(W,X) + \log(L_W(\rho))\right) \right). \label{DPI}\end{align}
		\Cref{HolderBound} follows from~\Cref{sibsMIBoundCor},~\Cref{DPI} follows from the Data-Processing Inequality for $I_\alpha$ and the Markov Chain $W-X-Y-\hat{W}$. Moreover, 
		using Markov's inequality one has that  \begin{align}R_B&\geq \rho \cdot \Pm_{W\hat{W}}(\ell(W,\hat{W})\geq \rho) \\ 
			&= \rho \cdot (1-\Pm_{W\hat{W}}(\ell(W,\hat{W})< \rho)). \label{markovLowerBound}\end{align} The statement follows from lower bounding~\Cref{markovLowerBound} using~\Cref{DPI}.
	\end{proof}
    \subsection{Proof of~\Cref{thm:fDivBoundBayesRisk}}\label{app:proofDiv}
    \begin{proof}
    From the variational representation for $\varphi$-Divergences (see~\Cref{eq:varReprfDiv}), given $\Pm_{W\hat{W}}$, for every function $f$ in the respective space (defined in~\Cref{thm:varReprfDiv}) one has that:
    \begin{equation}
        I_\varphi(W,\hat{W}) = D_\varphi(\Pm_{W\hat{W}}\|\Pm_W\Pm_{\hat{W}}) \geq \Pm_{W\hat{W}}(f) - \Pm_W\Pm_{\hat{W}}(\varphi^\star(f)).\label{eq:varReprJoint}
    \end{equation}
    ~\Cref{eq:varReprJoint} allows us to relate the expected value of any function $f:\mathcal{W}\times \mathcal{\hat{W}}$ under the joint with $I_\varphi(W,\hat{W})$ and the corresponding Legendre-Fenchel dual. Our purpose is to provide a lower-bound on the expected loss $\ell$ hence we will select $f=\tilde{\lambda}(\rho-\ell)$ with $\rho,\tilde{\lambda}>0$. Moreover, given the non-negativity of $\ell$ one can also see that $\ell \geq \rho \mathbbm{1}_{\{\ell \geq \rho\}}$ (\textit{i.e.}, Markov's Inequality in its functional form). Hence, plugging our choice of $f$ in~\Cref{eq:varReprJoint} the following chain of inequality follows:
    \begin{align}
       \tilde{\lambda} \Pm_{W\hat{W}}(\ell) &\geq \tilde{\lambda}\rho -I_\varphi(W,\hat{W}) - \Pm_W\Pm_{\hat{W}}(\varphi^\star(\tilde{\lambda}(\rho-\ell)) \\
       &\geq \tilde{\lambda}\rho -I_\varphi(W,\hat{W}) - \Pm_W\Pm_{\hat{W}}(\varphi^\star(\tilde{\lambda}\rho(1-\mathbbm{1}_{\{\ell\geq \rho\}}))) \label{eq:nonDecreasability} \\&=\tilde{\lambda}\rho -I_\varphi(W,\hat{W}) - \Pm_W\Pm_{\hat{W}}(\varphi^\star(\tilde{\lambda}\rho\mathbbm{1}_{\{\ell< \rho\}})) \\ &=  \tilde{\lambda}\rho -I_\varphi(W,\hat{W}) - \Pm_W\Pm_{\hat{W}}(\{\ell < \rho\})\cdot\varphi^\star(\tilde{\lambda}\rho) - \Pm_W\Pm_{\hat{W}}(\{\ell \geq \rho\})\cdot\varphi^\star(0),
    \end{align}
    where~\Cref{eq:nonDecreasability} follows by the monotonicity of $\varphi^\star$ which can be seen as stemming from the strict convexity and the monotonicity of $\varphi$. Indeed, if $\varphi$ is strictly convex then ${\varphi^\star}'(t) = {\varphi'}^{-1}(t)$ for every $t \in \text{Im}(\varphi')$~\cite[Theorem 26.5]{rockafellar-1970a}. Since $\varphi$ is monotone non-decreasing on the positive axis, one has that $\varphi'(t)\geq 0$ on $[0,+\infty]$. Accordingly, the inverse of $\varphi'$ will also be non-negative on $[0,+\infty]$, which implies the non-negativity of ${\varphi^\star}'$ and, therefore, the monotonicity of $\varphi^\star$. A similar argument shows the monotonicity of $\varphi^\star$ when $\varphi$ is non-increasing.
    Then, dividing both sides by $\tilde{\lambda}$ and selecting $\tilde{\lambda}= \frac1\rho \lambda$ with $\lambda>0$ one recovers the following:
    \begin{align}
       \Pm_{W\hat{W}}(\ell) &\geq \sup_{\lambda>0} \rho\left( 1- \frac{ I_\varphi(W,\hat{W})  + \Pm_W\Pm_{\hat{W}}(\{\ell \geq \rho\})\cdot\varphi^\star(0) + \Pm_W\Pm_{\hat{W}}(\{\ell < \rho\})\cdot\varphi^\star(\lambda)}{\lambda} \right) \\&= \rho\left( 1- \inf_{\lambda>0} \frac{ I_\varphi(W,\hat{W}) + \Pm_W\Pm_{\hat{W}}(\{\ell \geq \rho\})\cdot\varphi^\star(0) + \Pm_W\Pm_{\hat{W}}(\{\ell < \rho\})\cdot\varphi^\star(\lambda) }{\lambda} \right) \\
       &= \rho\left( 1-\Pm_W\Pm_{\hat{W}}(\{\ell < \rho\}) \inf_{\lambda>0} \frac{ \frac{I_\varphi(W,\hat{W}) + \Pm_W\Pm_{\hat{W}}(\{\ell \geq \rho\})\cdot\varphi^\star(0)}{\Pm_W\Pm_{\hat{W}}(\{\ell < \rho\})} + \varphi^\star(\lambda) }{\lambda} \right) \\
       &= \rho\left(1-\Pm_W\Pm_{\hat{W}}(\{\ell < \rho\}) \varphi^{-1}\left(\frac{I_\varphi(W,\hat{W}) + \Pm_W\Pm_{\hat{W}}(\{\ell \geq \rho\})\cdot\varphi^\star(0)}{\Pm_W\Pm_{\hat{W}}(\{\ell < \rho\})}\right)\right) \label{eq:inverseVarphi},
    \end{align}
    where~\Cref{eq:inverseVarphi} follows from the same argument as in~\cite[Theorem 13]{thesis}.\\ \Cref{eq:nonDecreasability} is the step of the proof which is relevant with respect to~\Cref{rmk:onTheDual}. In particular, the choice of $f$, along with the non-decreasability of $\varphi$ allowed us to leverage the functional form of Markov's inequality and, consequently, to upper-bound the dual of $D_\varphi(\cdot\| \Pm_W\Pm_{\hat{W}})$. Upper-bounding the dual is crucial in order to achieve a bound of the form of~\Cref{eq:inverseVarphi}. In order to prove the result for $\varphi$ non-increasing one has to select $f=-
    \tilde{\lambda}\ell$ leverage Markov's inequality and select $\tilde{\lambda}=-\frac1\rho\lambda$ with $\lambda<0$. The result then follows from the same argument as in~\cite[Theorem 13]{thesis} \textit{i.e.}, from selecting $\lambda = {\varphi'}(\varphi^{-1}(c))$ with $c=\frac{I_\varphi(W,\hat{W})+\varphi^\star(0)\Pm_W\Pm_{\hat{W}}(E)}{\Pm_W\Pm_{\hat{W}}(E^c)}$ and $E=\{\ell< \rho\}$.
	\end{proof}
    \subsection{Proof of~\Cref{thm:lowerBoundHellinger}}\label{app:proofHellinger}
    \begin{proof}
		The statement follows from~\Cref{thm:fDivBoundBayesRisk} with $\varphi(x)= \frac{x^p-1}{p-1}$. 
		Hence, for every estimator $\hat{W}=\phi(X^n)$,
		\begin{align}
			\Pm_{W\hat{W}}(\ell(W,\hat{W})\geq \rho) &\leq L_W(\hat{W},\rho)\cdot \varphi^{-1}\left(\frac{I_\varphi(W, \hat{W})+(1-L_W(\hat{W},\rho))\cdot\varphi^\star(0)}{L_W(\hat{W},\rho)}\right)  \\&\leq L_W(\rho)^{\frac{p-1}{p}}\left((p-1)\mathcal{H}_p(W, X)+1\right)^{\frac1p}, \label{eq:lastStepProofHellingerBayesRisk}
		\end{align}
		where~\Cref{eq:lastStepProofHellingerBayesRisk} follows from the data-processing inequality for $\varphi$-divergences. One thus retrieves that for every estimator $\hat{W}$
		\begin{equation}
			\Pm_{W\hat{W}}(\ell(W, \hat{W})) \geq\rho\left(1-L_W(\rho)^{\frac{p-1}{p}}\left((p-1)\mathcal{H}_p(W, X)+1\right)^{\frac1p}\right). \label{lowerBoundTemp}
		\end{equation}
		Since the right-hand side of~\Cref{lowerBoundTemp} is independent of $\hat{W}=\phi(X)$ one can use it to lower-bound the risk $R$. 
	\end{proof}
 \subsection{Proof of~\Cref{thm:lowerBoundGeneralHockeyStick}}\label{app:proofHockeyStick}
    \begin{proof}
		Let $\varphi(x) = \max\{0, \zeta x-\gamma\}-\max\{0, \zeta-\gamma\}$ in~\Cref{thm:fDivBoundBayesRisk}, along with the fact that $\varphi^{-1}(y) = \frac{y+\max\{0,\zeta-\gamma\}+\gamma}{\zeta}$ for $y>0$ and $\varphi^\star(0) = \max\{0, \zeta-\gamma\}$ one has that for every estimator $\hat{W}=\phi(X^n)$,
		\begin{align}
			\Pm_{W\hat{W}}(\ell(W, \hat{W})) &\geq  
			\rho\left(1-\frac{E_{\gamma,\zeta}(W, \hat{W}) + \gamma L_W(\hat{W}, \rho)+ \max\{0,\zeta-\gamma\}}{\zeta}\right) \\ 
			&\geq \rho\left(1-\frac{E_{\gamma,\zeta}(W, X) + \gamma L_W(\rho)+ \max\{0,\zeta-\gamma\}}{\zeta}\right). \label{eq:eGammaDelta}
		\end{align}
		Since~\Cref{eq:eGammaDelta} is independent of $\hat{W}=\phi(X)$ one can use it to lower-bound the risk $R$.
	\end{proof}
        \section{Computations for~\Cref{sec:bayesRiskExample}} 
        \subsection{Maximisation over $\rho$}\label{app:rho_maximization}
        The bounds considered in the first three examples have the following form
\begin{equation} \label{equation:rho_equation}
\sup_{\rho>0} \rho(1-c\rho^t-b),
\end{equation}
for some $c, t, b\geq 0$. Letting $ h(\rho):=\rho(1-c\rho^t-b)$, the optimal value $\rho_\star$ is found by setting $h^\prime(\rho_\star) = 0$, which yields
\begin{equation}
    1-(t+1)c\rho_\star^t-b=0 \iff \rho_\star = \left(\frac{1-b}{(t+1)c}\right)^{\frac{1}{t}}. 
\end{equation}
Since $h^{\prime\prime}(\rho_\star) = -t(t+1)c\rho_\star^{t-1} \leq 0$, this ensures $\rho_\star$ is a maximum. Substituting $\rho^\star$ back in \Cref{equation:rho_equation}, we can express the lower bound as
\begin{equation}
    \sup_{\rho>0} \rho(1-c\rho^t-b)=\frac{t}{c^{\frac1t}}\left(\frac{1-b}{t+1}\right)^{1+\frac1t}.\label{rhoStar}
\end{equation}
        \subsection{~\Cref{sec:ex1}}
        \subsubsection{Maximal Leakage}\label{app:mlBernoulliBias}
        In this setting one has that $$\Pm_{X^n|W=w}(x^n) = w^k(1-w)^{(n-k)}$$ where $k=\sum_{i=1}^n$ is the hamming weight of $x^n$. As per assumption, $\Pm_W(w)=1$ if $0\leq w\leq 1$ and, consequently, one has that $$\Pm_{W|X^n=x^n}(w)= (n+1)\binom{n}{k}(1-w)^{n-k}w^k.$$
	One can thus compute Maximal Leakage in this setting:
	\begin{align}
		\ml{W}{X^n} &= \log \sum_{x^n} \max_w \Pm_{X^n|W=w}(x^n)\\
		&= \log \sum_{k=0}^n \binom{n}{k} \max_w w^k(1-w)^{n-k} \\
		&= \log \sum_{k=0}^n \binom{n}{k} \left(\frac{k}{n}\right)^k\left(1-\frac{k}{n}\right)^{n-k} \\
		&\leq \log \left( 2+ \sum_{k=1}^{n-1} \sqrt{\frac{n}{2\pi k(n-k)}} \right) \label{stirling}\\
		&\leq \log \left( 2+ \sqrt{\frac{\pi n}{2}}\right). 
	\end{align}
    \subsubsection{Sibson's $\alpha$-Mutual Information}\label{app:sibsonBernoulliBias}
    For Sibson's $\alpha$-Mutual Information with $\alpha>1$, one has that:
    \begin{align}
		\exp\left(\frac{\alpha-1}{\alpha}I_\alpha(W,X^n))\right) &= \E\left[\E^{\frac1\alpha}\left[\left(\frac{\Pm_{X^n|W}}{\Pm_{X^n}}\right)^{\alpha}\bigg| X^n\right]\right] \\
        &= \sum_{x^n} \Pm_{X^n}(x^n)\left(\int_0^1 \Pm_W(w)\left(\frac{\Pm_{W|X^n=x^n}(w)}{\Pm_W(w)}\right)^{\alpha}dw\right)^{\frac1\alpha}\\
        &= \sum_{x^n} \Pm_{X^n}(x^n)\left(\int_0^1 \left(\Pm_{X^n|W=w}(x^n)\right)^{\alpha}dw\right)^{\frac1\alpha}\\
        &= \sum_{k=0}^n \binom{n}{k}\frac{1}{(n+1)\binom{n}{k}}\left(\int_0^1 \left((n+1)\binom{n}{k}w^k(1-w)^{n-k}\right)^{\alpha}dw\right)^{\frac1\alpha}\\
        &= \sum_{k=0}^n \binom{n}{k}\left(\int_0^1 \left(w^k(1-w)^{n-k}\right)^{\alpha}dw\right)^{\frac1\alpha}\\
        &= \sum_{k=0}^n \binom{n}{k}\left(\frac{\Gamma(k\alpha + 1)\Gamma((n-k)\alpha + 1)}{\Gamma(n\alpha+2)}\right)^{\frac1\alpha},\label{eq:sibsonLastStep}
	\end{align}
    where~\Cref{eq:sibsonLastStep} uses the identity relating the Beta function with the Gamma function \textit{i.e.},
	\begin{equation}
		\mathrm{Beta}(x, y) = \int_0^1 w^{x-1}(1-w)^{y-1}dw = \frac{\Gamma(x)\Gamma(y)}{\Gamma(x+y)},
	\end{equation}
    so that
    \begin{equation}
		\int_0^1 w^{k\alpha}(1-w)^{(n-k)\alpha}dw = \frac{\Gamma(k\alpha + 1)\Gamma((n-k)\alpha + 1)}{\Gamma(n\alpha+2)}.\label{eq:beta_gamma_relation}
	\end{equation}
        \subsubsection{Hellinger $p$} \label{app:hellBernoulliBias}For the Hellinger$-p$ divergence with $p>1$, one has that:
	\begin{align}
		((p-1)\mathcal{H}_p(W,X^n)+1) &= 
		\left\lVert\frac{d\Pm_{WX^n}}{d\Pm_W\Pm_{X^n}} \right\rVert_{L_p(\Pm_W\Pm_{X^n})}^p\\
		&=  \sum_{x^n} \Pm_{X^n}(x^n)\int_0^1 \Pm_W(w)\left(\frac{\Pm_{W|X^n=x^n}(w)}{\Pm_W(w)}\right)^pdw\\
        &=\sum_{x^n} \Pm_{X^n}(x^n)\int_0^1 \left(\Pm_{X^n|W=w}(x^n)\right)^{p}dw\\
        &= \sum_{k=0}^n \binom{n}{k}\frac{1}{(n+1)\binom{n}{k}}\int_0^1 \left((n+1)\binom{n}{k}w^k(1-w)^{n-k}\right)^{p}dw\\
        &= (n+1)^{p-1}\sum_{k=0}^n \binom{n}{k}^p \int_0^1 \left(w^k(1-w)^{(n-k)}\right)^pdw \\
        &=(n+1)^{p-1}\sum_{k=0}^n \binom{n}{k}^p\frac{\Gamma(kp+1)\Gamma((n-k)p+1)}{\Gamma(np+2)},\label{eq:hellingerLastStep}
	\end{align}
	where~\Cref{eq:hellingerLastStep} follows from~\Cref{eq:beta_gamma_relation}.
    For the special case $p=2$, we get
    \begin{align}
    \chi^2(W,X^n)+1 &= (n+1)\sum_{k=0}^n \binom{n}{k}^2\frac{(2k)!(2(n-k))!}{(2n+1)!}\\&=\frac{n+1}{(2n+1)} \sum_{k=0}^n \frac{(n!)^2(2k)!(2(n-k))!}{(k!)^2((n-k)!)^2(2n)!}\\&= \frac{n+1}{(2n+1)\binom{2n}{n}} \sum_{k=0}^n \binom{2k}{k}\binom{2(n-k)}{n-k} \\&= \frac{n+1}{2n+1}\cdot \frac{4^n}{\binom{2n}{n}},\label{equation:chi_square_before_ub}
\end{align}
where in~\Cref{equation:chi_square_before_ub} we use the result in~\cite[Eq. (5.39), p.187]{concrete_mathematics} stating that $\sum_{k=0}^n \binom{2k}{k}\binom{2(n-k)}{n-k} = 4^n$.
    \subsection{~\Cref{sec:gaussianExample}}
    \subsubsection{Hellinger $p$} \label{app:hellGaussian}
    For the Hellinger$-p$ divergence with $p>1$, one has that:
    \begin{align}
		((p-1)\mathcal{H}_p(W,X^n)+1) &= 
		\left\lVert\frac{d\Pm_{WX^n}}{d\Pm_W\Pm_{X^n}} \right\rVert_{L_p(\Pm_W\Pm_{X^n})}^p\\
		&= \int_{\mathbb{R}} \int_{\mathbb{R}} \Pm_{W}(w)\Pm_{X}(x) \left(\frac{\Pm_{X|W=w}(x)}{\Pm_{X}(x)}\right)^p dwdx \\
    &=\int_{\mathbb{R}} \mathcal{P}_{X}(x)^{1-p} \int_{\mathbb{R}} \mathcal{P}_{W}(w) \mathcal{P}_{X|W=w}(x)^pdwdx.\label{eq:hellinger_with_Gp}
	\end{align}

    Focusing on the innermost integral (which we denote as $I_p(x)$), one has
    \begin{align}
    I_p(x) &\vcentcolon= \int_{\mathbb{R}}\mathcal{P}_{{W}}({w})\mathcal{P}_{{X}|{W}={w}}({x})^{p} d{w} \\
    &= \left(\frac{(2\pi\sigma^2)^{-p}}{2\pi \sigma_W^2}\right)^{\frac{1}{2}} \int_{\mathbb{R}} e^{-\frac{w^2}{2\sigma_W^2} - \frac{p (w-x)^2}{2\sigma^2}}d{w}\\
    &=\left(\frac{(2\pi\sigma^2)^{-p}}{2\pi \sigma_W^2}\right)^{\frac{1}{2}}\int_{\mathbb{R}} e^{-\frac{1}{2\sigma^2} \left(px^2-2p{x}{w} + \left(\frac{\sigma^2}{\sigma_W^2}+p\right)w^2\right)}d{w}\\
    &= \left(\frac{(2\pi\sigma^2)^{-p}}{2\pi \sigma_W^2}\right)^{\frac{1}{2}}e^{\frac{-p\cdot x^2}{2\sigma^2}}\int_{\mathbb{R}} e^{-\frac{1}{2\sigma^2} \left(-2p{x}{w} + \left(\frac{\sigma^2}{\sigma_W^2}+p\right)w^2\right)}d{w}\label{equation:I_p}.
\end{align}
Adding and subtracting 
$cx^2$ with $c=-p\left(1+p\frac{\sigma_W^2}{\sigma^2}\right)^{-1}$ in the exponent inside the integral in \Cref{equation:I_p} leads to
\begin{align}
    I_p(x) &= \left(\frac{(2\pi\sigma^2)^{-p}}{2\pi \sigma_W^2}\right)^{\frac{1}{2}}e^{{\frac{c{x}^2}{2\sigma^2}}}\int_{\mathbb{R}} e^{-\frac{\frac{\sigma^2}{\sigma_W^2}+p}{2\sigma^2} \left({w}-\sqrt{\frac{p+c}{\frac{\sigma^2}{\sigma_W^2}+p}}{x}\right)^2}d{w} \\
    &= \left(\frac{(2\pi\sigma^2)^{-p}}{2\pi \sigma_W^2}\right)^{\frac{1}{2}}\exp\left(-{\frac{p{x}^2}{2\sigma^2\left(1+p\frac{\sigma_W^2}{\sigma^2}\right)}}\right) \left(2\pi \frac{\sigma^2}{\frac{\sigma^2}{\sigma_W^2}+p}\right)^{\frac{1}{2}}\\
    &=\left(2\pi\sigma^2\right)^{-\frac{p}{2}} \left(1+p\frac{\sigma_W^2}{\sigma^2}\right)^{-\frac{1}{2}} \exp\left(-{\frac{p{x}^2}{2\sigma^2\left(1+p\frac{\sigma_W^2}{\sigma^2}\right)}}\right).
\end{align}
Finally, plugging the value of $I_p$ back in  \eqref{eq:hellinger_with_Gp}, we retrieve that:
\begin{align}
    &(p-1)\mathcal{H}_p(W, X)+1 \notag \\
    &= \int_{\mathbb{R}} \mathcal{P}_{{X}}({x})^{1-p} \frac{1}{(2\pi \sigma^2)^{\frac{p}{2}}} e^{-\frac{p{x}^2}{2\left(\sigma^2+p\sigma_W^2\right)}} \left(1+p\frac{\sigma_W^2}{\sigma^2}\right)^{-\frac{1}{2}}d{x}\\
    &= \frac{\left(1+\frac{\sigma_W^2}{\sigma^2}\right)^{\frac{d(p-1)}{2}}}{(2\pi \sigma^2)^{\frac{1}{2}}\left(1+p\frac{\sigma_W^2}{\sigma^2}\right)^{\frac{1}{2}}}\int_{\mathbb{R}} e^{\frac{(p-1){x}^2}{2\left(\sigma^2+\sigma_W^2\right)}-\frac{p{x}^2}{2\left(\sigma^2+p\sigma_W^2\right)}} d{x} \\
    &= \frac{\left(1+\frac{\sigma_W^2}{\sigma^2}\right)^{\frac{(p-1)}{2}}}{(2\pi \sigma^2)^{\frac{1}{2}}\left(1+p\frac{\sigma_W^2}{\sigma^2}\right)^{\frac{1}{2}}}\int_{\mathbb{R}} e^{-\frac{{x}^2}{2}\left(\frac{1-p}{\sigma^2+\sigma_W^2}+\frac{p}{\sigma^2+p\sigma_W^2}\right)} d{x} \\
    &= \frac{\left(1+\frac{\sigma_W^2}{\sigma^2}\right)^{\frac{(p-1)}{2}}}{(2\pi \sigma^2)^{\frac{1}{2}}\left(1+p\frac{\sigma_W^2}{\sigma^2}\right)^{\frac{1}{2}}} \left(\frac{2\pi}{\frac{1-p}{\sigma^2+\sigma_W^2}+\frac{p}{\sigma^2+p\sigma_W^2}}\right)^{\frac{1}{2}} \\
    &= \frac{\left(1+\frac{\sigma_W^2}{\sigma^2}\right)^{\frac{(p-1)}{2}}}{\left(\sigma^2+p\sigma_W^2\right)^{\frac{1}{2}}} \left(\frac{1}{\frac{1-p}{\sigma^2+\sigma_W^2}+\frac{p}{\sigma^2+p\sigma_W^2}}\right)^{\frac{1}{2}} \\
    &= \left(\frac{\left(1+\frac{\sigma_W^2}{\sigma^2}\right)^{p-1}}{\frac{(1-p)\left(\sigma^2+p\sigma_W^2\right)}{\sigma^2+\sigma_W^2}+p} \right)^{\frac{1}{2}} \\
    &=  \left(\frac{\left(1+\frac{\sigma_W^2}{\sigma^2}\right)^p}{1 + (2-p)p\frac{\sigma_W^2}{\sigma^2}} \right)^{\frac{1}{2}}.
\end{align}
    
        \section{Other approaches}\label{sec:otherApproaches}
        \subsection{Inverting the roles}\label{sec:invertingBayesRisk}
	The Sibson's $\alpha$-mutual information is an asymmetric quantity. A natural question is: can one provide a result similar to~\Cref{thm:sibsMIResultBayesRisk} involving $I_\alpha(X,W)$ instead?
	Indeed, by inverting the roles of $W$ and $\hat{W},$ such a bound can be given but it will involve the small ball probability for $\hat{W}$ \textit{i.e.},
	\begin{align}
		L_{\hat{W}}(\rho) & = \sup_{w} \Pm_{\hat{W}}(\ell(w,\hat{W})\geq \rho).
	\end{align}
	This quantity hinges on the marginal distribution of $\hat{W},$ which, in turn, depends on the estimator used. 
	In terms of $L_{\hat{W}}(\rho),$ one can give the following general bound:
	\begin{lemma}\label{invertedResults}
		Consider the Bayesian framework described in~\Cref{sec:bayesianFramework}. The following holds for every $\alpha>1$ and $\rho>0$:
		\begin{equation}
			R_B\geq \rho\left(1- \exp\left(\frac{\alpha-1}{\alpha}\left(I_\alpha(X,W) + \log(L_{\hat{W}}(\rho))\right) \right)\right). \label{sibsMILowerBoundInverted}
		\end{equation}
		Moreover, taking the limit of $\alpha\to\infty$ one has:
		\begin{equation}
			R_B\geq \rho\left(1- \exp\left(\ml{X}{W} + \log(L_{\hat{W}}(\rho))\right)
			\right). \label{MLLowerBoundInverted}
		\end{equation}
	\end{lemma}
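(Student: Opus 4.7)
The plan is to mirror the argument behind~\Cref{thm:sibsMIResultBayesRisk}, exploiting the fact that~\Cref{sibsMIBoundCor} remains valid after swapping the roles of the two random variables. The only subtleties are that (i) Sibson's $\alpha$-MI is not symmetric in its arguments and (ii) the inner essential supremum in~\Cref{sibsMIBoundCor} is taken over the second coordinate, so both statements have to be re-read with $W$ and $\hat{W}$ interchanged.

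Concretely, I would apply the swapped version of~\Cref{sibsMIBoundCor} to the event $E=\{(w,\hat{w}):\ell(w,\hat{w})<\rho\}$, now taking the essential supremum over the $W$-coordinate. This yields
\begin{equation}
\Pm_{W\hat{W}}(\ell(W,\hat{W})<\rho) \leq \left(\esssup_{\Pm_W}\Pm_{\hat{W}}(\ell(w,\hat{W})<\rho)\right)^{\frac{\alpha-1}{\alpha}} \exp\left(\frac{\alpha-1}{\alpha}\, I_\alpha(\hat{W}, W)\right),
\end{equation}
and the first factor is upper-bounded by $L_{\hat{W}}(\rho)^{(\alpha-1)/\alpha}$ (interpreting the small-ball probability with the strict inequality $<\rho$, consistent with the convention used in~\eqref{smallBall}). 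Since $W-X-\hat{W}$ read backwards is also a Markov chain $\hat{W}-X-W$, the DPI in~\Cref{sibsProperties} yields $I_\alpha(\hat{W},W)\leq I_\alpha(X,W)$, which removes the dependence on the estimator $\phi$.

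A final application of Markov's inequality, $R_B\geq\rho\,\Pm_{W\hat{W}}(\ell(W,\hat{W})\geq\rho)=\rho(1-\Pm_{W\hat{W}}(\ell(W,\hat{W})<\rho))$, then produces~\eqref{sibsMILowerBoundInverted}. For~\eqref{MLLowerBoundInverted}, one sends $\alpha\to\infty$, using $I_\infty(X,W)=\ml{X}{W}$ together with $L_{\hat{W}}(\rho)^{(\alpha-1)/\alpha}\to L_{\hat{W}}(\rho)$.

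I don't foresee a genuine obstacle: the entire argument is bookkeeping, and the symmetric counterpart of~\Cref{sibsMIBoundCor} is immediate by relabelling. The one easy-to-slip point is ensuring that the information measure ends up as $I_\alpha(X,W)$ rather than $I_\alpha(W,X)$, and correspondingly that the small-ball probability refers to $\hat{W}$ and not to $W$; this is why both the essential supremum in~\Cref{sibsMIBoundCor} and the orientation of the Markov chain in the DPI must be handled simultaneously.
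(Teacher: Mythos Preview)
Your proposal is correct and follows exactly the approach the paper has in mind: the lemma is stated without an explicit proof, the surrounding text simply saying that ``by inverting the roles of $W$ and $\hat{W}$'' the bound follows, which is precisely the relabelled application of~\Cref{sibsMIBoundCor} together with the DPI along the reversed chain $\hat{W}-X-W$ and Markov's inequality that you describe.
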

	To apply this lemma in concrete cases, one needs to compute or upper bound the small ball probability $L_{\hat{W}}(\rho).$
	Leveraging basic properties of the estimator, one can sometimes bound it. For example, if the estimator is a linear function of the noisy observations one can leverage results related to L\'evy's concentration
	functions of sums of independent random variables. \textit{E.g.}, if $Y_1,\ldots,Y_m$ are uncorrelated and have log-concave distributions, then for every $\rho\geq 0$~\cite[Theorem 1.1]{levyLittleSums},
	\begin{equation} L_{\sum_i^m Y_i}(\rho) \leq \frac{2\rho}{\sqrt{\text{Var}(\sum_{i=1}^m Y_i) + \rho^2/3}} = \frac{2\rho}{\sqrt{m \text{Var}( Y_1) + \rho^2/3}} \label{smallBallLinear}.\end{equation}  More general statements can be made, assuming $\phi(Y^m) = \sum_{i=1}^m a_iY_i$ under different constraints over $a_i$~\citep{smallBallai}.
	To appreciate the promise of this approach, let us also discuss the behaviors of $I_\alpha(W,X)$ and $I_\alpha(X,W)$. More specifically, let us consider again the ``Hide-and-Seek'' problem. Assuming, as in~\cite[Example 12]{bayesRiskRaginsky}, that $\Pm_W$ is uniform over $[d]$, one has that
	\begin{equation}
		\ml{X^{n\times m}}{W}=\log\frac{d(1/2+\rho)}{(d-1)(1/2-\rho)+(1/2+\rho)}=\log\kappa(d,\rho)< \log d.
	\end{equation} In case $\rho$ and $d$ are constant and the estimator $\phi$ is a linear combination of the observations, using~\Cref{smallBallLinear} in~\Cref{invertedResults} one gets: \begin{equation}R_B\geq \rho\left(1-\frac{\kappa(d,\rho)2\rho}{\sqrt{m \text{Var}( Y_1) + \rho^2/3}}\right).\end{equation} This lower bound approaches $\rho$ as $m$ grows, rather than providing the trivial lower bound of $0$, as it happens in~\Cref{mlRiskHideAndSeek}.\\
	The assumptions required, along with the need of specifying a prior over $W$, clearly restrict the domain of applicability of~\Cref{invertedResults} with respect to~\Cref{thm:sibsMIResultBayesRisk} and~\Cref{eq:maximalLeakgeResultBayesRisk}. However, this approach can provide results in settings where~\Cref{thm:sibsMIResultBayesRisk} and~\Cref{eq:maximalLeakgeResultBayesRisk} become vacuous.
	
	\subsection{Lower-Bounding the Risk Directly}\label{sec:directLowerBoundRisk}
	An alternative route can be undertaken that does not use Markov's inequality as a first step and can possibly lead to tighter bounds. Since our purpose is to provide \textit{lower bounds} on the Risk (essentially, an inner-product between the joint measure of the parameter and the estimation and the loss function, $\langle \mathcal{P}_{W\hat{W}},\ell\rangle$) one can also consider the application of reverse H\"older's inequality in order to directly lower-bound the Risk. 
Consider $\alpha<1$, the following result can be easily proven:\begin{corollary}\label{thm:sibsMIDirectResultBayesRisk}
		Consider the Bayesian framework described in~\Cref{sec:bayesianFramework}. The following must hold for every $\alpha,\alpha'<1$
  \begin{align}
    \Pm_{W,\hat{W}}(\ell) \geq & \Pm_{\hat{W}}^{\frac{1}{\beta'}}\left(\Pm_W^{\frac{\beta'}{\beta}}\left(\ell^\beta\right)\right)\cdot\Pm_{\hat{W}}^{\frac{1}{\alpha'}}\left(\Pm_W^{\frac{\alpha'}{\alpha}}\left(\left(\frac{d\Pm_{W\hat{W}}}{d\Pm_W\Pm_{\hat{W}}}\right)^{\alpha}\right)\right),~\label{eq:normDirectResultBayesRisk}
	\end{align}
	where $\frac{1}{\alpha}+\frac{1}{\beta} = 1 = \frac{1}{\alpha'}+\frac{1}{\beta'}$ and $\alpha,\alpha'<1$. 
 Moreover, if one takes the limit of $\alpha' \to 1^-$, which implies $\beta' \to-\infty$, then one recovers the following with $0<\alpha<1$:
		\begin{equation}
			R_B\geq  \essinf_{\Pm_{\hat{W}}} \left(\Pm_W^{\frac{1}{\beta}}\left(\ell(W,\hat{W})^\beta\right)\right)\cdot\exp\left(\frac{\alpha-1}{\alpha}I_\alpha(W,X)\right).\label{eq:sibsMIDirectResultBayesRisk}
		\end{equation}
	\end{corollary}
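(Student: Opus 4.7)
The plan is to mirror the proof of~\Cref{alphaExpBound}, replacing each use of H\"older's inequality with its \emph{reverse} counterpart, which is available precisely when the exponents fall in the ranges $\alpha<1$ and $\alpha'<1$ assumed in the statement. Concretely, I would first write the risk as a nested integral against the Radon--Nikodym derivative,
\begin{equation}
\Pm_{W\hat{W}}(\ell) \;=\; \Pm_{\hat{W}}\!\left(\Pm_W\!\left(\ell\cdot\frac{d\Pm_{W\hat{W}}}{d\Pm_W\Pm_{\hat{W}}}\right)\right),
\end{equation}
which is licit because $\ell\geq 0$ and $\Pm_{W\hat{W}}\ll\Pm_W\Pm_{\hat{W}}$ (otherwise the right-hand side of~\Cref{eq:normDirectResultBayesRisk} already trivialises under the convention $\infty\cdot 0=0$).

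Next, I would apply reverse H\"older in a nested fashion. First, against $\Pm_W$, with conjugate exponents satisfying $\tfrac{1}{\alpha}+\tfrac{1}{\beta}=1$ and $\alpha<1$, which yields
\begin{equation}
\Pm_W\!\left(\ell\cdot\frac{d\Pm_{W\hat{W}}}{d\Pm_W\Pm_{\hat{W}}}\right)\;\geq\; \Pm_W^{1/\beta}(\ell^\beta)\cdot \Pm_W^{1/\alpha}\!\left(\!\left(\frac{d\Pm_{W\hat{W}}}{d\Pm_W\Pm_{\hat{W}}}\right)^{\!\alpha}\right).
\end{equation}
Second, a further reverse H\"older against $\Pm_{\hat{W}}$ with exponents $\alpha',\beta'$ in the same regime turns the expectation of the resulting product into the product of the two nested norms appearing in~\Cref{eq:normDirectResultBayesRisk}, establishing the first part of the statement.

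For the limit $\alpha'\to 1^{-}$, and hence $\beta'\to-\infty$, I would invoke the standard identity $\left\lVert\cdot\right\rVert_{L^{\beta'}(\mu)}\to\essinf_{\mu}(\cdot)$ on positive integrands, which turns the first outer factor into $\essinf_{\Pm_{\hat{W}}}\Pm_W^{1/\beta}(\ell(W,\hat{W})^\beta)$, while the second outer factor collapses to the $L^1(\Pm_{\hat{W}})$-norm of $\Pm_W^{1/\alpha}\!\left((d\Pm_{W\hat{W}}/d\Pm_W\Pm_{\hat{W}})^\alpha\right)$. By the alternative formulation~\Cref{altFormulSibs} of Sibson's $\alpha$-mutual information, this second factor is exactly $\exp\!\left(\tfrac{\alpha-1}{\alpha}I_\alpha(W,\hat{W})\right)$. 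Restricting to $0<\alpha<1$ so that $\tfrac{\alpha-1}{\alpha}<0$, the DPI from~\Cref{sibsProperties} on the Markov chain $W-X-\hat{W}$ gives $I_\alpha(W,\hat{W})\leq I_\alpha(W,X)$, and since the exponential is \emph{decreasing} in its argument, replacing $\hat{W}$ by $X$ \emph{weakens} the bound, rendering it independent of $\phi$ and therefore valid after taking the infimum defining $R_B$.

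The main obstacle I foresee is justifying reverse H\"older's inequality in the sign regime $\alpha<0$ (hence $0<\beta<1$), where the integrand involves negative powers of $\ell$ or of the density and the bound is trivially $0$ unless one imposes that both $\ell$ and $d\Pm_{W\hat{W}}/d\Pm_W\Pm_{\hat{W}}$ are essentially strictly positive. An analogous subtlety arises in the limit step, where the essential infimum can collapse to $0$ whenever $\Pm_W(\ell(W,\hat{W})^\beta=0)>0$ on a set of positive $\Pm_{\hat{W}}$-measure. Outside these degenerate configurations, no additional machinery is required beyond the two reverse H\"older applications and the DPI for $I_\alpha$.
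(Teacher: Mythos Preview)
Your proposal is correct and follows essentially the same route as the paper: nested reverse H\"older (mirroring the proof of \Cref{alphaExpBound} with $f=\ell$), the limit $\alpha'\to 1^-$ to produce the $\essinf$ and the $L^1(\Pm_{\hat W})$-norm identified as $\exp\!\big(\tfrac{\alpha-1}{\alpha}I_\alpha(W,\hat W)\big)$ via \Cref{altFormulSibs}, and then the DPI for $I_\alpha$ with $0<\alpha<1$ to replace $\hat W$ by $X$. The paper's own proof is slightly terser but makes the same moves, and it likewise notes that the remaining $\essinf_{\Pm_{\hat W}}$ factor depends on $\hat W$ only through its support.
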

	\begin{proof}
	The proof of~\Cref{eq:normDirectResultBayesRisk} follows from the same proof of~\Cref{alphaExpBound} (cf.~\cite[Theorem 15]{thesis}) with $f=\ell$ but using reverse H\"older's inequality rather than regular H\"older's inequality.  Considering the limit of $\alpha' \to 1^-$ in~\Cref{eq:normDirectResultBayesRisk} one recovers the following:
		\begin{equation}
			\Pm_{W\hat{W}}(\ell(W,\hat{W})) \geq  \essinf_{\Pm_{\hat{W}}} \left(\Pm_W^{\frac{1}{\beta}}\left(\ell(W,\hat{W})^\beta\right)\right)\cdot\exp\left(\sign(\alpha)\cdot\frac{\alpha-1}{\alpha}I_\alpha(W,\hat{W})\right) \label{eq:lowerBoundExpValueSibsIalpha}
		\end{equation}
		Now, if $0<\alpha<1$ then $\frac1\beta<0$. By the Data-Processing Inequality for $I_\alpha$ with $0<\alpha<1$ (along with the negativity of $\frac1\beta$) one has that
		\begin{align}
			\exp\left(\sign(\alpha)\cdot\frac{\alpha-1}{\alpha}I_\alpha(W,\hat{W})\right) &= \exp\left(\frac{1}{\beta}I_\alpha(W,\hat{W})\right) \geq \exp\left(\frac{1}{\beta}I_\alpha(W,X)\right). \label{eq:dpiIalphaPosSmallerOne}
		\end{align}
	
		The lower-bound on the Risk follows by noticing that the right-hand side~\Cref{eq:lowerBoundExpValueSibsIalpha} can be rendered independent of $\hat{W}$ for every $\alpha<1$ (\textit{i.e.}, it will only depend on the support of $\hat{W}$ through the $\essinf$) via~\Cref{eq:dpiIalphaPosSmallerOne}.
	\end{proof}
	\begin{remark}[Extending to $\alpha<0$]
	    One could also extend the result to $\alpha<0$ (which implies $0<\beta<1$), however, this would lead to a notion of $I_\alpha$ for $\alpha<0$ (cf.~\cite{sibsonIalphaNegative}) which is outside the scope of this work. However, in that case one would have the following interesting limiting behavior when $\alpha\to -\infty$: 
      \begin{align}
          \Pm_{W\hat{W}}(\ell(W,\hat{W})) &\geq \left(\essinf_{\Pm_{\hat{W}}} \Pm_W\left(\ell(W,\hat{w})\right)\right)\left(\int_{\mathcal{\hat{W}}} \essinf_{\Pm_W} \Pm_{\hat{W}|W}\right) \\
          &= \left(\essinf_{\Pm_{\hat{W}}} \Pm_W\left(\ell(W,\hat{w})\right)\right)\exp\left(-\mathcal{L}^c(W\!\!\to\!\!{\hat{W}})\right),
      \end{align}
      where $\mathcal{L}^c(W\!\!\to\!\!{\hat{W}})$ represents maximal cost-leakage~\cite[Definition 11]{leakageLong}.
      \end{remark}
      \Cref{thm:sibsMIDirectResultBayesRisk} is different from the results presented in the previous section. While in~\Cref{sec:riskLBThroughProbs} the only dependence on $\ell$ was through the small-ball probability, in~\Cref{thm:sibsMIDirectResultBayesRisk} one is required to have access to the expected value of the $\beta$-th moments of $\ell$ with respect to $\Pm_X$. Such an object may not be as easy to bound as the small-ball probability.
	\begin{remark}
		If $W=\hat{W}$ then $\ell(W,\hat{W})=0$ and $I_\alpha(W,\hat{W})=0$. If   $\,0<\alpha<1$, given that $\beta<0$, one recovers the following lower-bound on the risk, which matches with our intuition: $
		\Pm_{W\hat{W}}(\ell(W,\hat{W})) \geq 0.$ 
	\end{remark}

	 The main difference with the results presented in~\Cref{sec:riskLBThroughProbs} is that there is no small-ball probability involved and it is thus required to have access to an object of the form $\min_{\hat{w}}\left\lVert \ell(W,\hat{w})\right\rVert_{L^\beta(\Pm_W)}$ (where we are abusing the notation as for $\beta<0$ it is not a norm), which might be harder to compute than $L_W(\rho)$. 

 
 \bibliographystyle{IEEEtran}
	\bibliography{sample}

	%
	
	
	

\end{document}